\algnewcommand{\LineComment}[1]{\Statex\(\triangleright\) #1}
\algnewcommand\algorithmicforeach{\textbf{for each}}
\newcommand {\mm}[1] {\ifmmode{#1}\else{\mbox{\(#1\)}}\fi}
\newcommand{\Rspace}        {\mm{\mathbb{R}}}
\newcommand{\har}           {\mm{\mathbb{H}}}
\newcommand{\eps}        {\varepsilon}
\newcommand{\Rcal}        {\mm{\mathcal{R}}}
\newcommand{\etal}{{et al.}}
\newcommand{\hcd}{{harmonic chain barcode}}
\newcommand{\hcb}{{harmonic chain barcode}}
\newcommand{\hcbs}{{harmonic chain barcodes}}
\newcommand{\para}[1]{\vspace{2mm}\noindent{\textbf{\sffamily{#1}}}\hspace{4pt}}
\theoremstyle{definition}
\newtheorem{algr}[algorithm]{Algorithm}
\newcommand{\Dgm}{\mm{\mathrm{Dgm}}}
\newcommand{\lrarrowsp}[1]{\xleftrightarrow{\lbarrowspace#1\lbarrowspace}}
\newcommand{\Mcal}{\mathcal{M}}
\newcommand{\barc}{\mathsf{B}}
\newcommand{\filt}{F}
\newcommand{\filtcnt}{m}
\newcommand{\fsimp}{\sigma}
\newcommand{\lbarrowspace}{\;}
\newcommand{\incto}{\hookrightarrow}
\newcommand{\inctosp}[1]{\xhookrightarrow{\lbarrowspace#1\lbarrowspace}}
\newcommand{\bakincto}{\hookleftarrow}
\newcommand{\hbarc}{\mathsf{B}^\har}
\newcommand{\obarc}{\breve{\mathsf{B}}^\har}
\newcommand{\Hm}{H}
\newcommand{\Chn}{C}
\newcommand{\Zyc}{Z}
\newcommand{\Bnd}{B}
\newcommand{\Real}{\mathbb{R}}
\newcommand{\Cobmat}{\mathtt{Cob}}
\newcommand{\Chnmat}{\mathtt{SC}}
\newcommand{\Bndmat}{\mathtt{BoC}}
\newcommand{\Harmat}{\mathtt{H}}
\newcommand{\Rmat}{\mathtt{R}}
\newcommand{\pivot}{\mathrm{pivot}}
\newcommand{\aG}{\alpha}
\newcommand{\sG}{\sigma}
\newcommand{\iG}{\iota}
\newcommand{\dG}{\delta}
\newcommand{\lG}{\lambda}
\newcommand{\mG}{\mu}
\newcommand{\zG}{\zeta}
\let\bar\overline
\DeclareMathOperator*{\argmin}{arg\,min}
\title{Tracking the Persistence of Harmonic Chains: Barcode and Stability}
\author{Tao Hou\thanks{University of Oregon, \texttt{taohou@uoregon.edu}}, Salman Parsa\thanks{DePaul University, \texttt{s.parsa@depaul.edu}}, Bei Wang\thanks{University of Utah, \texttt{beiwang@sci.utah.edu}}}
\crefname{equation}{Eq.}{Eqs.}
\crefname{figure}{Fig.}{Figs.}
\crefname{tabular}{Tab.}{Tabs.}
\crefname{section}{Sec.}{Secs.}
\crefname{appendix}{App.}{Apps.}
\newtheorem{theorem}{Theorem}
\newtheorem{lemma}[theorem]{Lemma}
\newtheorem{definition}[theorem]{Definition}
\newtheorem{proposition}[theorem]{Proposition}
\newtheorem{corollary}[theorem]{Corollary}
\theoremstyle{remark}
\newtheorem{remark}[theorem]{Remark}
\newtheorem*{example*}{Example}
\begin{document}

\maketitle

\begin{abstract}
The persistence barcode is a topological descriptor of data that plays a fundamental role in topological data analysis. 
Given a filtration of data, the persistence barcode tracks the evolution of its homology groups. In this paper, we introduce a new type of barcode, called the harmonic chain barcode, which tracks the evolution of harmonic chains. 
In addition, we show that the harmonic chain barcode is stable. 
Given a filtration of a simplicial complex of size $m$, we present an algorithm to compute its harmonic chain barcode in $O(m^3)$ time.
Consequently, the harmonic chain barcode can enrich the family of topological descriptors in applications where a persistence barcode is applicable, such as feature vectorization and machine learning.

\end{abstract}

\maketitle

\section{Introduction}
\label{sec:introduction}

There are two primary tasks in topological data analysis (TDA)~\cite{DeyWang2022,EdelsbrunnerHarer2010,Zomorodian2005}: reconstruction and inference. 
In a typical TDA pipeline, the data is given as a point cloud in $\Rspace^N$.  
A ``geometric shape'' $K$ in $\Rspace^N$ is reconstructed from the point cloud, usually as a simplicial complex, and $K$ is taken to represent the (unknown) space $X$ from which the data is sampled. 
$X$ is then studied using $K$ as a surrogate for properties that are invariant under invertible mappings (technically, homeomorphisms). 
The deduced ``topological shape'' is not specific to the complex $K$ or the space $X$, but is a feature of the homeomorphism type of $K$ or $X$. 
For example, a standard round circle has the same topological shape as any closed loop such as a knot in the Euclidean space. 
Although topological properties of $X$ alone are not sufficient to reconstruct $X$ exactly, they are among a few global features of $X$ that can be inferred from the data sampled from $X$. 

An (ordinary) persistence barcode~\cite{CarlssonZomorodianCollins2004,Ghrist2008} (or equivalently, a persistence diagram~\cite{ChazalCohenSteinerGlisse2009, Cohen-SteinerEdelsbrunnerHarer2005, EdelsbrunnerLetscherZomorodian2002}) captures the evolution of homological features in a filtration constructed from a simplicial complex $K$. 
It consists of a multi-set of intervals in the extended real line, where the start and end points of an interval (i.e.,~a bar) are the birth and death times of a homological feature in the filtration. 
Equivalently, a persistence diagram is a multi-set of points in the extended plane, where a point in the persistence diagram encodes the birth and death time of a homological feature. 

A main drawback of ordinary persistent homology is that it does not provide canonical choices of geometric representatives for each bar in the barcode. 
Specifically, there are two levels of choices in assigning representatives for bars in a persistence barcode: first, we choose a basis for persistent homology in a consistent way across the filtration; second, we choose a cycle representative inside each homology class in the basis. Both levels of choices involve choosing among significantly distinct geometric features of data to represent the same bar in the barcode. Since such choices are not canonical, it can be challenging to interpret their underlying geometric meaning. In addition, while existing works~\cite{Oleksiy2010,DeyHouMandal2020, Obayashi2018} compute optimal representatives for ordinary persistence, the optimality would rely on a choice of weights for the simplices and on the optimality criterion, leading to different shapes. Moreover, computing the optimal representatives is NP-hard in many interesting situations  ~\cite{ChenFreedman2011, chambers2019minimum, Chambers2009, Borradaile2020, Grochow2018, DeyHouMandal2020}.

As our first contribution, we introduce a new type of barcode called  \emph{\hcb}. This barcode is obtained by tracking the birth and death of harmonic chains in an increasing filtration.  
Since there is always a \emph{unique} harmonic chain in a homology class, the \emph{harmonic representatives} in our {\hcb} immediately remove the second level of choices in a natural way. The main idea is to take the harmonic chain groups along an increasing filtration, and to observe that the groups grow or shrink due to the birth and death of harmonic chains, resulting in a zigzag module with inclusion maps. 
Like any zigzag module~\cite{CarlssonSilva2010}, the module of harmonic chain groups decomposes into interval modules, which then form the {\hcd}.
Moreover, since the maps in this zigzag module are essentially inclusions,
each representative for a bar in our {\hcb} consists of a single harmonic chain which is alive over the entire bar. We emphasize that our {\hcb} is distinct from the ordinary persistence barcode; see for example \Cref{fig:filtbarc,fig:twobc}. 

The ordinary persistence barcode is shown to be stable~\cite{Cohen-SteinerEdelsbrunnerHarer2005}, which is crucial for applications. The stability means that small changes in the data imply only small changes in the barcode. For our second contribution, we show that our {\hcb} is also stable in the same sense as the stability of persistence barcode~\cite{Cohen-SteinerEdelsbrunnerHarer2005}. 

Finally, we present an algorithm for computing 
the {\hcb} in $O(m^3)$ time for a filtration of size $m$,
matching the complexity of practical algorithms for ordinary persistence.

The interpretation of homological features is crucial for applications such as extracting hierarchical structures of amorphous solids~\cite{hiraoka2016hierarchical} and quantifying the growing branching architectures of plants~\cite{LiDuncanTopp2017}.
Such an interpretation is closely related to the existence of canonical chain representatives for a barcode. 
In our harmonic chain barcode, using orthogonality, each bar enjoys a canonical choice of representative (within a homology class) which \emph{lives exactly at the time-interval of the bar}. This arguably promises a more interpretable data feature. Therefore, we expect our {\hcbs} to enrich the family of topological descriptors in applications where ordinary persistence barcode is used, such as feature vectorization and machine learning. Note that just from the fact that our barcode is different from the ordinary persistence one cannot argue in favor or against our suggested barcode. Such a comparison is only meaningful in a specific domain of application and when performed using experimental evaluations. We refer to~\cite{GurnariGuzman-SaenzUtro2023} for an interesting pipeline of data analysis using extra properties of harmonic chains. \cite{GurnariGuzman-SaenzUtro2023} also contains a method of propagating the information from harmonic chains back to original data points which we find also relevant to our harmonic representatives.


\section{Related Work}
\label{sec:related-work}

Harmonic chains were first studied in the context of functions on graphs, where they were identified as the kernel of the Laplacian operator on graphs~\cite{Kirchhoff1847}. 
The graph Laplacian and its kernel are important tools in studying graph properties, see~\cite{Merris1994,MoharAlaviChartrand1991} for surveys. 
Eckmann~\cite{Eckmann1944} introduced the higher-order Laplacian for simplicial complexes, and proved the isomorphism of harmonic chains and homology. 
Guglielmi \etal~\cite{GuglielmiSavostianovTudisco2023} studied the stability of higher-order Laplacians. Horak and Jost~\cite{HorakJost2013} defined a weighted Laplacian for simplicial complexes. Already their theoretical results on Laplacian~\cite{HorakJost2013} anticipated the possibility of applications, as the harmonic chains are thought to contain important geometric information.
This has been validated by recent results that use curves of eigenvalues of Laplacians in a filtration in data analysis~\cite{ChenQiuWang2022, WangNguyenWei2020}. 
The Laplacian was applied to improve the mapper algorithm~\cite{MikePerea2019}, and for coarsening triangular meshes~\cite{KerosSubr2023}. 
The persistent Laplacian~\cite{MemoliWanWang2022} and its stability~\cite{LiuLiWu2023} is an active research area. 
Due to the close relation of harmonic chains and Laplacians, harmonic chains could find applications in areas that Laplacians have been used.
Computing reasonable representative cycles for persistent homology is also an active area of research. 
Here, usually an optimality criterion is imposed on cycles in a homology class to obtain a unique representative. For a single homology class, a number of works~\cite{Chen2010,DeyHiraniKrishnamoorthy2010, DeyHouMandal2020, Obayashi2018,Oleksiy2010, ChambersParsaSchreiber2022, ChenFreedman2011, DeyHiraniKrishnamoorthy2010, LeeChungChoi2019} consider different criteria for optimality of cycles. Hardness of computing optimal representatives has been studied by~\cite{ChenFreedman2011, chambers2019minimum, Chambers2009, ChambersParsaSchreiber2022, Borradaile2020, Grochow2018, DeyHouMandal2020}. For persistent homology, Dey et al.~\cite{DeyHouMandal2020} studied the hardness of choosing optimal cycles for persistence bars. 
Furthermore, De Gregorio et al.~\cite{DeGregorioGuerraScaramuccia2021} used harmonic cycles in a persistent homology setting to compute the persistence barcode. Lieutier~\cite{Lieutier} studied the harmonic chains in persistent homology classes, called persistent harmonic forms.

\para{Relation to the work of Basu and Cox.}~The most relevant work to ours is the inspiring work of Basu and Cox~\cite{BasuCox2022}. 
Basu and Cox had a similar goal as ours, namely, to associate geometric information to each bar in order to obtain a more interpretable data feature. To that end, they introduced the notion of \emph{harmonic persistent barcode}, by associating a subspace of harmonic chains to each bar in the \textit{ordinary} persistence barcode. When the multiplicity of the bar is 1, the subspace is 1-dimensional. 
This is the space of harmonic chains that are born at $s$ and die entering $t$. In general, this is a quotient subspace of harmonic chains at time $s$. Using orthogonality 
they can represent this subquotient as a subspace of the harmonic cycles at time $s$. As a result, they successfully assign a canonical harmonic cycle to a bar in the ordinary persistence diagram. We note that, in general, a bar in the persistence barcode cannot be represented using a single harmonic chain that \textit{remains harmonic during the lifetime of the bar}. The harmonic cycle associated to a bar using the Basu-Cox approach is the initial harmonic representative of the bar, that is, the harmonic cycle that represents the bar at its birth. Basu and Cox also used the \emph{terminal harmonic cycle} in some of their arguments, thus showing that the choice is not entirely canonical. However, Basu and Cox proved significant properties of the initial cycles in terms of what they call \emph{relative essential content}. The novelty of our result is that, in contrast to~\cite{BasuCox2022}, we define a new barcode distinct from the ordinary persistence barcode, in which each bar has a harmonic cycle associated with it which is harmonic during the lifetime of the bar. 
Basu and Cox also proved stability for their harmonic persistent barcode, by considering subspaces as points of a Grassmannian manifold and measuring distances in the Grassmannian. Such a distance quantifies the angles between subspaces, whereas our notions of stability are stronger in the sense that they use the classical bottleneck distance analogous to the ordinary persistence homology.
G\"{u}len et al.~\cite{GulenMemoliWan2024} studied a method that permits the construction of stable persistence diagrams that are equipped with a canonical choice of representative cycles for filtrations over arbitrary finite posets. If the underlying poset is a finite subset of the real line, using persistent Laplacians, they obtained harmonic cycles connected (through a certain isomorphism) to those identified by Basu and Cox.

\section{Background}
\label{sec:background}

In this section, we review the notion of harmonic chains and persistent homology. 
Homology and cohomology are defined with  real coefficients $\mathbb{R}$ (instead of $\mathbb{Z}_2$);
see~\cref{sec:homology} for a review of 
homology and cohomology.
Let $K$ be a simplicial complex and $p$ the homology dimension (or equivalently, homology degree). 
$C_p(K)$, $Z_p(K)$, and $H_p(K)$ denote the $p$-th chain group, cycle group, and homology group of $K$, whereas $C^p(K)$, $Z^p(K)$, and $H^p(K)$ denote the $p$-th cochain group, cocycle group, and cohomology group of $K$, respectively.
Groups across all dimensions are denoted as $C_*(K)$, $C^*(K)$, etc. 
We use $\partial$ and $\delta$ to denote boundary and coboundary operators, respectively. 

\subsection{Harmonic Cycles} 
Based on the standard notions of homology and cohomology  with $\Rspace$ coefficients (see~\cref{sec:homology}), we identify chains and cochains via duality, i.e.,~$C_p(K)= C^p(K)=\Rspace^{n_p}$, where $n_p$ is the number of $p$-simplices. 
Therefore, we can talk about coboundaries of cycles in $Z_p(K)$. 
We first introduce the notion of harmonic chains.  
\begin{definition}
\label{def:harmonic-cycle}
The \emph{$p$-th harmonic chain group} of $K$, denoted $\har_p(K)$, is the group of $p$-cycles that are also $p$-cocylces. 
Equivalently, $\har_p(K) := Z_p(K)\cap Z^p(K)$.
Each element in $\har_p(K)$ is called a \emph{harmonic $p$-chain}.
The harmonic chain group in all dimensions is the group $\har(K):=\bigoplus_p \har_p(K)$.
\end{definition}
We sometimes use \emph{harmonic cycles} in place of \emph{harmonic chains} to emphasize the fact that harmonic chains are cycles.

\begin{lemma}[\cite{Eckmann1944}]
\label{lem:har-homo-iso}
$\har_p(K)$ is isomorphic to $H_p(K)$ and $H^p(K)$. Specifically, each homology and cohomology class has a unique harmonic cycle in it.
\end{lemma}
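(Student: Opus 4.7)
The plan is to invoke the finite-dimensional Hodge decomposition coming from the standard inner product on $C_p(K)$ that makes the $p$-simplices an orthonormal basis. Under the chain-cochain identification $C_p(K)=C^p(K)$, this inner product turns the coboundary operator $\delta_p:C_p(K)\to C_{p+1}(K)$ into the adjoint of $\partial_{p+1}$; concretely, $\langle \delta_p c,\, c'\rangle = \langle c,\, \partial_{p+1} c'\rangle$. Verifying this adjoint relationship is the only nontrivial bookkeeping step — everything afterwards is linear algebra.

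From the general identity $(\mathrm{im}\,A)^{\perp} = \ker A^{*}$ I then obtain two orthogonal decompositions of $C_p(K)$, namely $C_p(K) = Z_p(K) \oplus B^p(K)$ and $C_p(K) = Z^p(K) \oplus B_p(K)$, where $B^p = \mathrm{im}\,\delta_{p-1}$ and $B_p = \mathrm{im}\,\partial_{p+1}$. Because $\partial^2 = 0$ implies $B_p \subseteq Z_p$, intersecting the second decomposition with $Z_p$ yields the orthogonal splitting
\[
Z_p(K) \;=\; B_p(K) \,\oplus\, \bigl(Z_p(K)\cap Z^p(K)\bigr) \;=\; B_p(K) \,\oplus\, \har_p(K).
\]
The orthogonal projection $Z_p(K) \to \har_p(K)$ along this splitting has kernel exactly $B_p(K)$, and hence descends to an isomorphism $H_p(K) \cong \har_p(K)$. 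Uniqueness of the harmonic representative in each homology class is then immediate: if two harmonic cycles differ by a boundary, the difference lies in $B_p(K)\cap \har_p(K) = \{0\}$, and that element is sent to zero by the inner product pairing with itself.

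The cohomological statement follows by a symmetric argument: $\delta^2 = 0$ gives $B^p \subseteq Z^p$, and intersecting $C_p(K) = Z_p(K) \oplus B^p(K)$ with $Z^p(K)$ produces $Z^p(K) = B^p(K) \oplus \har_p(K)$, whence $H^p(K)\cong \har_p(K)$ with a unique harmonic representative in each cohomology class. I do not foresee any serious obstacle; the proof is a standard exercise in orthogonal complements once the adjoint relationship between $\partial$ and $\delta$ under the simplex-orthonormal identification is in hand.
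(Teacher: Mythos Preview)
Your argument is correct and is the standard Hodge-theoretic proof. The paper does not supply its own proof of this lemma---it is stated with a citation to Eckmann~\cite{Eckmann1944}---but the paragraph immediately following the lemma sketches exactly your approach: using the inner product $\langle\sigma_i,\sigma_j\rangle=\delta_{i,j}$, identifying $\har_p(K)=Z_p(K)\cap B_p(K)^\perp$, and realizing the isomorphism by projecting $z+B_p(K)$ onto $B_p(K)^\perp$.
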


Harmonic cycles enjoy certain geometric properties. As an example, we mention the following \cref{prop:uniqueness}; see~\cite[Proposition 3]{DeSilvaMorozovVejdemoJohansson2011} for a proof. In other words, a harmonic cycle is the chain with the least squared-norm in a cohomology class.

\begin{proposition}[\cite{DeSilvaMorozovVejdemoJohansson2011}]
\label{prop:uniqueness}
Let $\alpha \in C^p(K)$ be a cochain. There is a unique solution $\bar{\alpha}$ to the least-squares minimization problem 
$\argmin_{\bar{\alpha}} \{ || \bar{\alpha}||^2 \mid \exists \gamma \in C^{p-1}(K);  \alpha = \bar{\alpha}+\delta \gamma \}.$ 
Moreover, $\bar{\alpha}$ is characterized by the relation $\partial {\bar{\alpha}}=0$.
\end{proposition}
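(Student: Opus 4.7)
The plan is to recognize the statement as the standard fact that the minimum-norm element of an affine subspace is the one orthogonal to the associated linear subspace, and then identify the appropriate orthogonality condition as $\partial\bar{\alpha}=0$. The affine subspace in question is $\alpha + B^p(K)$, where $B^p(K)=\mathrm{im}(\delta)$ is the coboundary space, since the constraint $\alpha=\bar{\alpha}+\delta\gamma$ for some $\gamma\in C^{p-1}(K)$ is equivalent to $\bar{\alpha}\in \alpha+B^p(K)$.

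First I would note that $\|\cdot\|^2$ is a strictly convex, coercive functional on the finite-dimensional real vector space $C^p(K)$, so its restriction to the nonempty closed affine subspace $\alpha+B^p(K)$ attains a unique minimum. This handles existence and uniqueness abstractly; the remaining work is characterizing the minimizer. Next I would invoke the orthogonal decomposition
\[
C^p(K) \;=\; \mathrm{im}(\delta)\oplus \mathrm{im}(\delta)^{\perp}
\]
with respect to the standard inner product in which the $p$-simplices form an orthonormal basis. The minimizer $\bar{\alpha}$ of $\|\cdot\|^2$ on the coset $\alpha+\mathrm{im}(\delta)$ is precisely the unique element of that coset lying in $\mathrm{im}(\delta)^{\perp}$, i.e., the one satisfying $\langle\bar{\alpha},\delta\gamma\rangle=0$ for all $\gamma\in C^{p-1}(K)$.

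Then I would translate this orthogonality condition into the condition $\partial\bar{\alpha}=0$. Under the identification of chains and cochains via the standard inner product, $\delta$ is the transpose of $\partial$, so
\[
\langle \bar{\alpha},\delta\gamma\rangle \;=\; \langle \partial\bar{\alpha},\gamma\rangle
\]
for every $\gamma\in C^{p-1}(K)$. Vanishing of the right-hand side for all $\gamma$ is equivalent to $\partial\bar{\alpha}=0$. Combining this with the uniqueness established above shows that the minimizer is the unique element of $\alpha+B^p(K)$ that is a cycle, yielding the stated characterization.

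The only subtle step is the identification $\delta=\partial^{\top}$, which depends on using the orthonormal simplex basis implicit in the conventions of \cref{sec:homology}; once this is set up, everything reduces to elementary Euclidean geometry of affine subspaces. I would not expect any genuine obstacle here, but I would take care to state the inner-product conventions explicitly so that the adjoint relationship between $\partial$ and $\delta$ is unambiguous.
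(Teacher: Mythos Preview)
Your argument is correct and is the standard one: minimize a strictly convex quadratic over the affine coset $\alpha+B^p(K)$, identify the minimizer as the unique coset element orthogonal to $B^p(K)=\mathrm{im}(\delta^{p-1})$, and use the adjointness $\delta^{p-1}=\partial_p^{\top}$ (stated explicitly in \cref{sec:homology}) to rewrite that orthogonality as $\partial\bar\alpha=0$.

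There is nothing in the paper to compare against: the paper does not give its own proof of \cref{prop:uniqueness} but simply cites \cite[Proposition~3]{DeSilvaMorozovVejdemoJohansson2011}. Your write-up supplies exactly the short self-contained argument one would expect behind that citation, so it is entirely appropriate here.
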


There is an alternative definition of harmonic cycles. Consider the natural inner product on $C_p(K)$ given by $\langle\sigma_i , \sigma_j\rangle = \delta_{i,j}$. The harmonic chain group can be defined as $\har_p(K) = Z_p(K) \cap B_p(K)^\perp$. With this definition, the isomorphism of~\cref{lem:har-homo-iso} is realized by a  map that sends $z+B_p(K)$ to its projection to $B_p(K)^\perp$. In addition, the harmonic cycles satisfy 
$\har_p(K) = \ker(\partial_p) \cap \ker(\partial_{p+1}^\perp).$
For a short proof of the equivalence among the definitions of harmonic cycles above, see~\cite{BasuCox2022}. 
Importantly, our algorithm relies on the fact that harmonic cycles are cocycles whose boundaries are zero. 

\subsection{Ordinary Persistence}
\label{sec:persistence}

Ordinary persistent homology takes a \emph{filtration} of a simplicial complex $K$ as input. A \emph{continuous}  filtration $F$  assigns to each $r\in \Rspace$ a subcomplex $K_r \subseteq K$ such that  $K_r \subseteq K_s$ for $r \leq s$. 
Since $K$ is finite, there are finitely many  $t_0,t_1, \ldots, t_m \in \Rspace$ where $K_{t_i}$ changes. 
 We then abuse the notations slightly by
letting $K_i := K_{t_i}$ and have a \emph{discrete} form of $F$,
\begin{equation}
\label{eq:fcomplex}
    \filt:\emptyset = K_0 \hookrightarrow K_1 \hookrightarrow \cdots \hookrightarrow K_{m-1} \hookrightarrow K_m=K,
\end{equation}
where
each $K_i \hookrightarrow K_{i+1}$ is an inclusion. 
Unless stated otherwise, we assume that $\filt$ is \emph{simplex-wise}, i.e., each 
two 
$K_i$ and $K_{i+1}$ differ by at most a single simplex. 
For simplicity of the exposition,
complexes in $\filt$ sometimes are
subscripted by real-valued ``timestamps'' of the form $K_{t_i}$ (e.g., in~\cref{sec:stability}) or subscripted by integers of the form $K_i$ (e.g., in~\cref{sec:algorithm}), which  should not cause any confusions.
Applying homology functor to~\Cref{eq:fcomplex}, we obtain a sequence of homology groups and connecting linear maps (homomorphisms), forming a \emph{persistence module}:
\begin{equation}\label{eq:fhomology}
\Mcal: H_p(K_0) \to H_p(K_1) \to  \cdots \to H_p(K_{m-1}) \to H_p(K_m).   
\end{equation} 
For $s\leq t$, let $f^{s,t}_p: H_p(K_s) \rightarrow H_p(K_t)$ denote the map induced on the $p$-th homology by  inclusion.
The image of the map, $f^{s,t}_p(H_p(K_s)) \subseteq H_p(K_t)$, is called the $p$-th \emph{$(s,t)$-persistent homology group}, denoted $H_p^{s,t}$. The group $H_p^{s,t}$
consists of homology classes which exist in $K_s$ and survive until $K_t$. The dimensions of these vector spaces are the \emph{persistent Betti numbers}, denoted $\beta^{s,t}_p$.
An \emph{interval module}, denoted $I=I[b,d)$, is a persistence module of the form
\[ 0 \rightarrow \cdots \rightarrow 0 \rightarrow \Rspace \rightarrow \cdots \rightarrow \Rspace \rightarrow 0 \rightarrow \cdots \rightarrow 0.\]
In the above, $\Rspace$ is generated by a homology class and the connecting homomorphisms map generator to generator. 
We have $I_r = \Rspace$
for $b \leq r < d$ and $I_r=0$ for other $r$. 
Any persistence module can be decomposed into a collection of interval modules in a unique way \cite{LuoHenselman-Petrusek2023}. 
The collection of $[b,d)$ for all interval modules is called the \emph{persistence barcode}. 
When plotted as points in an extended plane, the result is the equivalent \emph{persistence diagram}. 
 
\para{Stability of Persistence Diagram/Barcode.}
The stability of persistence diagrams (or barcodes) is a crucial property for applications. It says that small changes in data lead to small changes in the persistence diagrams. 
We only review the stability for sublevel-set filtrations here; see \cref{subsec:ordinaryinterleaving} for more on stability of persistence modules.
Let $D$ and $D'$ denote two persistence diagrams. Recall that a persistence diagram is a multi-set of points in the extended plane (each of which is a birth-death pair) which also contains all points on the diagonal.
The \emph{bottleneck distance} of $D,D'$ is defined as 
\[
d_B(D, D') = \text{inf}_\gamma \text{sup}_{p \in D} || p - \gamma(p) ||_\infty,
\]
where $\gamma$ ranges over all bijections between $D$ and $D'$ and $||\cdot ||_\infty$ is the largest absolute value of differences of the points' coordinates.
  
A function $\tilde{f} : |K| \rightarrow \Rspace$ is called \emph{simplex-wise linear} if it is linear on each simplex.
Let $K_r = \{ \sigma \in K\mid\forall x \in |\sigma|, \tilde{f}(x) \leq r \}$ be the \textit{sublevel set complex} at value $r\in \Rspace.$ The complexes $K_r$ and the inclusions between them form a \textit{sublevel set filtration} $\tilde{F}$ (which is not necessarily simplex-wise). We denote the persistence diagram of $\tilde{F}$ as $\Dgm(\tilde{f})$.
We refer to~\cite{Cohen-SteinerEdelsbrunnerHarer2005, ChazalCohenSteinerGlisse2009} for proof of the following \Cref{theorem:persistence-stability}. See~\cite{Cohen-SteinerEdelsbrunnerHarer2005, ChazalCohenSteinerGlisse2009} for the stability of ordinary persistence.

\begin{theorem}
    Let $\tilde{f}, \tilde{g}: |K| \xrightarrow{} \mathbb{R}$ be simplex-wise linear functions. Then $$ d_B( \Dgm(\tilde{f}), \Dgm(\tilde{g})) \leq ||\tilde{f} - \tilde{g} ||_\infty.$$
\end{theorem}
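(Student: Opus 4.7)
The plan is to prove \Cref{theorem:persistence-stability} via the standard interleaving argument combined with the algebraic stability (isometry) theorem for persistence modules. First I would set $\eps = \|\tilde{f} - \tilde{g}\|_\infty$ and use the pointwise bound $|\tilde{f}(x) - \tilde{g}(x)| \leq \eps$ to relate the sublevel-set complexes of the two functions. Because $\tilde{f}$ and $\tilde{g}$ are simplex-wise linear, a simplex $\sigma$ belongs to the sublevel-set complex at value $r$ precisely when the relevant function is at most $r$ at every point of $|\sigma|$, and the $\eps$-shift passes cleanly between the two functions. This gives, for every $r \in \Rspace$, the nested inclusions
\[
K^{\tilde{f}}_r \;\hookrightarrow\; K^{\tilde{g}}_{r+\eps} \;\hookrightarrow\; K^{\tilde{f}}_{r+2\eps}, \qquad K^{\tilde{g}}_r \;\hookrightarrow\; K^{\tilde{f}}_{r+\eps} \;\hookrightarrow\; K^{\tilde{g}}_{r+2\eps}.
\]

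Next I would pass to $p$-th homology. The inclusions above induce linear maps $\phi_r : H_p(K^{\tilde{f}}_r) \to H_p(K^{\tilde{g}}_{r+\eps})$ and $\psi_r : H_p(K^{\tilde{g}}_r) \to H_p(K^{\tilde{f}}_{r+\eps})$. Because all of these maps are induced by inclusions of subcomplexes, each triangle and square that must commute in the definition of an interleaving commutes at the simplicial level and therefore at the homology level. This furnishes an $\eps$-interleaving between the persistence modules of $\tilde{F}$ and $\tilde{G}$.

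The final step is to apply the algebraic stability (isometry) theorem for pointwise-finite-dimensional persistence modules: an $\eps$-interleaving between two such modules implies that their barcodes admit a bottleneck matching of cost at most $\eps$, so $d_B(\Dgm(\tilde{f}), \Dgm(\tilde{g})) \leq \eps$ as desired. The bulk of the first two paragraphs is essentially bookkeeping that becomes routine once the correct $\eps$-shift is identified; the main obstacle is the isometry step, which is where the nontrivial content lies. I would handle it either by citing the algebraic stability theorem as a black box or, if a self-contained argument is needed, by adapting the classical box/quadrant lemma argument of Cohen-Steiner, Edelsbrunner and Harer that builds an explicit bijection between birth-death points by pairing any off-diagonal point of $\Dgm(\tilde{f})$ with a point of $\Dgm(\tilde{g})$ lying in its $\eps$-box, using the interleaving to verify that the required persistent Betti number inequalities hold.
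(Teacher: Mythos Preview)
Your proposal is correct and follows the standard route: build an $\eps$-interleaving from the sublevel-set inclusions and then invoke algebraic stability. Note, however, that the paper does not give its own proof of this theorem at all; it simply cites \cite{Cohen-SteinerEdelsbrunnerHarer2005, ChazalCohenSteinerGlisse2009} for the result, and your sketch is precisely the argument contained in those references.
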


\subsection{Zigzag Persistence}
\label{sec:zigzag}

We provide a brief overview of zigzag persistence; see~\cite{CarlssonSilva2010,CarlssonSilvaMorozov2009} for details.
A zigzag module 
\[\Mcal:V_0\lrarrowsp{g_0}V_1\lrarrowsp{g_1}\cdots\lrarrowsp{g_{k-1}}V_k\]
is a sequence of vector spaces connected by linear maps which could be forward or backward,
i.e., each $g_i$ could be $g_i:V_i\to V_{i+1}$ or $g_i:V_{i}\leftarrow V_{i+1}$.
The module $\Mcal$ decomposes into a direct sum of interval modules $I[b,d]$ of the form 
\[0 \longleftrightarrow \cdots \longleftrightarrow 0 \longleftrightarrow \Rspace \longleftrightarrow \cdots \longleftrightarrow \Rspace \longleftrightarrow 0 \cdots \longleftrightarrow 0 \]
with 1-dimensional vector spaces in the range $[b,d]$. 
The multi-set of intervals in the decomposition
defines the \emph{barcode} of $\Mcal$, denoted as $\barc(\Mcal)$.

\para{Conventions.}
In this paper, we may omit the subscript/dimension of a homology group if there is no danger of ambiguity. Moreover, we use the terms persistence barcode and persistence diagram interchangeably.

\section{Harmonic Chain Barcodes and Representatives}
\label{sec:barcode-zigzag}

As reviewed in \cref{sec:persistence}, by directly taking the homology functor, an increasing filtration of simplicial complexes leads to an \emph{ordinary persistence module} consisting of homology groups~\cite{chazal2016structure,DeyWang2022}. These homology groups are connected by forward maps of the form $H(K_i) \xrightarrow{} H(K_{i+1})$. The ordinary persistence module then decomposes into interval  modules, which define the \emph{ordinary persistence barcode}. 
In this section, we show that the harmonic chain groups of all the complexes in a filtration constitute an abstract \emph{zigzag persistence module}  (see~\cref{sec:zigzag}). 
The main idea is that we take the harmonic chain groups along the filtration, where the groups could grow or shrink, resulting in a zigzag module. 
Moreover, the maps in this zigzag module are inclusions. 
Like any zigzag module~\cite{CarlssonSilva2010}, the module of harmonic chain groups decomposes into interval modules. These intervals form the {\hcd}, the main object of interest in this paper. 

Throughout the  section,
consider a simplex-wise filtration
\begin{equation*}
\filt: 
\emptyset
=K_0 \inctosp{\fsimp_0} K_1 \inctosp{\fsimp_1} 
\cdots \inctosp{\fsimp_{\filtcnt-1}} K_\filtcnt=K,
\end{equation*}
where each $K_{i+1}$ differs from $K_i$
by the addition of a simplex $\fsimp_i$. 
Recall that $K_i := K_{t_i}$ for $t_i\in\Real$ where $K_{t_i}$
is a complex from a \emph{continuous} filtration indexed over $\Real$; see \Cref{sec:persistence}. 


\begin{proposition}
\label{prop:arrow-dim-change}
For each inclusion $K_i \inctosp{\fsimp_i} K_{i+1}$ in $\filt$: 
\[\dim(\har(K_{i+1}))=\dim(\har(K_{i})) \pm 1.\]
\end{proposition}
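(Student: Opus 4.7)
The plan is to reduce the statement about harmonic chains to a standard statement about ordinary homology via Eckmann's isomorphism (\Cref{lem:har-homo-iso}), and then to invoke the well-known positive/negative simplex dichotomy in simplex-wise filtrations. Since $\har_p(K_j) \cong H_p(K_j)$ for every $p$ and $j$, we have
\[
\dim(\har(K_j)) \;=\; \sum_p \dim(\har_p(K_j)) \;=\; \sum_p \dim(H_p(K_j)) \;=\; \sum_p \beta_p(K_j).
\]
So the claim is equivalent to showing that $\sum_p \beta_p$ changes by exactly $\pm 1$ when a single simplex is added to $K_i$ to form $K_{i+1}$.

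The second step is then to classify the inclusion $K_i \inctosp{\fsimp_i} K_{i+1}$, where $\fsimp_i$ has some dimension $p$. In the standard persistence setup, $\fsimp_i$ is either \emph{positive}, meaning $\partial \fsimp_i$ is already a $(p-1)$-boundary in $K_i$ and hence $\fsimp_i$ completes a new $p$-cycle that is not a boundary, or \emph{negative}, meaning $\partial \fsimp_i$ is a $(p-1)$-cycle that was not a boundary in $K_i$ but becomes one in $K_{i+1}$. In the first case $\beta_p(K_{i+1}) = \beta_p(K_i) + 1$ and all other Betti numbers are unchanged; in the second case $\beta_{p-1}(K_{i+1}) = \beta_{p-1}(K_i) - 1$ and again all other Betti numbers are unchanged. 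In both cases, $\sum_p \beta_p$ changes by exactly $\pm 1$.

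If a self-contained justification of the dichotomy is preferred over a citation, I would argue via the long exact sequence of the pair $(K_{i+1}, K_i)$: since $K_{i+1}/K_i$ has a single $p$-cell, $H_\ast(K_{i+1}, K_i) \cong \Rspace$ is concentrated in degree $p$. The induced long exact sequence then forces exactly one of $\beta_p$ and $\beta_{p-1}$ to differ by $1$ between $K_i$ and $K_{i+1}$, and all other Betti numbers to coincide; the sign of the change is determined by whether the connecting map $H_p(K_{i+1}, K_i) \to H_{p-1}(K_i)$ is zero or injective.

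There is no real obstacle here; the entire argument is a repackaging of Eckmann's isomorphism together with a classical fact about simplex-wise filtrations. The only mild subtlety is bookkeeping across dimensions: one must note that the sum $\sum_p \dim \har_p(K_j)$ is what matches $\sum_p \beta_p(K_j)$, so that dimension-by-dimension changes on the homology side translate cleanly into changes in $\dim(\har(K_j))$.
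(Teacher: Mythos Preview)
Your proposal is correct and follows essentially the same route as the paper: reduce $\dim(\har(K_j))$ to $\sum_p \beta_p(K_j)$ via \Cref{lem:har-homo-iso}, then invoke the standard positive/negative simplex dichotomy in simplex-wise filtrations. The paper's proof is the one-line version of this, citing Eckmann's isomorphism and the classical persistence references rather than spelling out the long exact sequence argument.
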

\begin{proof}
This follows from \Cref{lem:har-homo-iso} and some well-known facts in persistence (see~\cite{DeyWang2022,EdelsbrunnerLetscherZomorodian2002}).
\end{proof}

\begin{definition}
\label{def:pos-neg-simp}
A simplex $\fsimp_i$ inserted in $\filt$ is called \emph{positive}
if $\dim(\har(K_{i+1}))=\dim(\har(K_{i}))+1,$ and \emph{negative}
if $\dim(\har(K_{i+1}))=\dim(\har(K_{i}))-1$.
\end{definition}

We describe how we connect the harmonic chain groups of complexes in $\filt$ by inclusions.
For each $K_i$ and each chain $c=\sum_{j=0}^{i-1}\aG_j\fsimp_j$ in $\Chn_p(K_i)$,
we identify $c$ as a chain $c=\sum_{j=0}^{\filtcnt-1}\aG_j\fsimp_j$ in $\Chn_p(K)$,
where $\aG_j=0$ for $j\geq i$.
This makes 
both 
$\Chn_p(K_i)$ and $\Chn_p(K_{i+1})$ a subspace of $\Chn_p(K)$. 
We then have the following inclusion:
\begin{equation}
\label{eqn:zyc-inc}
\Zyc_p(K_i) \incto \Zyc_p(K_{i+1}).
\end{equation}
Recall that $\har_p(K_i):=\Zyc_p(K_i)\cap \Zyc^p(K_i)$, which means that
$\har_p(K_i)\subseteq \Zyc_p(K_i) \subseteq \Chn_p(K)$.
Hence, we also identify each harmonic cycle in $\har_p(K_i)$ as a chain in $\Chn_p(K)$. 
We then observe in \cref{thm:harmonic-inclusion}
 a similar inclusion as in \Cref{eqn:zyc-inc} between any harmonic chain groups $\har_p(K_i)$ and $\har_p(K_{i+1})$,
with a possible flip on the direction.
\begin{theorem}
\label{thm:harmonic-inclusion}
For each arrow $K_i \inctosp{\fsimp_i} K_{i+1}$ in $\filt$, where $\fsimp_i$ is 
a $p$-simplex:  
\begin{itemize}
\item There is an inclusion $\har_p(K_i)\incto \har_p(K_{i+1})$
if $\fsimp_i$ is positive;
\item And there is an inclusion $\har_{p-1}(K_i)\bakincto\har_{p-1}(K_{i+1})$ if $\fsimp_i$ is negative.
\end{itemize}
In addition, in each case the harmonic chain groups in other dimensions remain unchanged, i.e., $\har_q(K_i) = \har_q(K_{i+1})$ for any other $q \notin \{p, p-1\}$.
\end{theorem}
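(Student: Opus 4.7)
The plan is to reduce everything to a dimension-by-dimension analysis of how the cycle group $\Zyc_q$ and the cocycle group $\Zyc^q$ change under the insertion of a single $p$-simplex $\fsimp_i$, and then recover $\har_q = \Zyc_q \cap \Zyc^q$. Since inserting a $p$-simplex only enlarges $C_p$, the cycle groups $\Zyc_q(K_i)$ are preserved for every $q \neq p$, and in degree $p$ the standard ordinary-persistence dichotomy applies: $\Zyc_p(K_i) \subseteq \Zyc_p(K_{i+1})$, with proper containment exactly when $\fsimp_i$ is positive. For cocycles, the operator $\delta_q \colon C^q \to C^{q+1}$ is literally unchanged whenever $q \notin \{p-1, p\}$, so $\Zyc^q$ is preserved outside those two degrees.

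The crux is the cocycle behavior in degrees $p-1$ and $p$. In degree $p-1$, the new cocycle constraint is that a cochain $\alpha$ must now also vanish on $\partial \fsimp_i$, which always yields $\Zyc^{p-1}(K_{i+1}) \subseteq \Zyc^{p-1}(K_i)$; this containment is an equality precisely when $\partial \fsimp_i$ lies in $\Bnd_{p-1}(K_i)$, i.e.\ when $\fsimp_i$ is positive (by the usual duality between $Z^{p-1}$ and $B_{p-1}^\circ$). In degree $p$, the one mildly subtle point is the observation that the newly inserted basis vector $\fsimp_i$, viewed as a cochain via the identification $C_p = C^p$, is automatically a cocycle in $K_{i+1}$: in a simplex-wise filtration no $(p+1)$-coface of $\fsimp_i$ has been added yet, so $\delta \fsimp_i = 0$, yielding $\Zyc^p(K_{i+1}) = \Zyc^p(K_i) \oplus \langle \fsimp_i \rangle$ regardless of sign.

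With these ingredients I would conclude by intersecting. For $q \notin \{p-1, p\}$ both $\Zyc_q$ and $\Zyc^q$ are unchanged, hence $\har_q$ is unchanged. If $\fsimp_i$ is positive, both $\Zyc_p$ and $\Zyc^p$ only grow, so the intersection gives the forward inclusion $\har_p(K_i) \incto \har_p(K_{i+1})$, while in degree $p-1$ both $\Zyc_{p-1}$ and $\Zyc^{p-1}$ are unchanged so $\har_{p-1}$ is unchanged. If $\fsimp_i$ is negative, then $\Zyc_p$ is unchanged while $\Zyc^p$ only grows by $\langle \fsimp_i \rangle$; since every element of $\Zyc_p(K_i) \subset C_p(K_i)$ has zero $\fsimp_i$-component, the intersection collapses and $\har_p$ is unchanged, whereas $\Zyc^{p-1}(K_{i+1}) \subsetneq \Zyc^{p-1}(K_i)$ produces the reverse inclusion $\har_{p-1}(K_i) \bakincto \har_{p-1}(K_{i+1})$. \Cref{prop:arrow-dim-change} then serves as a sanity check that each strict containment is of codimension one.

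The main obstacle is really only the one conceptually novel point: recognizing that $\fsimp_i$ itself is a cocycle the instant it is inserted, which creates the asymmetric behavior of $\Zyc^p$ (always grows by one) versus $\Zyc_p$ (grows only when positive). Once that asymmetry is identified, the theorem follows by intersecting subspaces and tracking dimensions via \Cref{prop:arrow-dim-change}.
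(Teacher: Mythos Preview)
Your argument is correct and rests on the same identity $\har_q = \Zyc_q \cap \Zyc^q$ that the paper uses, but you organize it differently. The paper proceeds element-by-element: in each of four cases (I.1, I.2, II.1, II.2) it picks a harmonic cycle $z$ on one side of the arrow and verifies directly that $\iota_\sharp(z)$ (or its restriction) is again both a cycle and a cocycle on the other side. You instead tabulate once how each of the subspaces $\Zyc_q$ and $\Zyc^q$ transforms under the insertion of $\fsimp_i$, and then read off all four cases by intersecting. The content is the same---both arguments ultimately hinge on $C_{p+1}$ being unchanged and on $\partial\fsimp_i$ lying in $\Bnd_{p-1}(K_i)$ exactly in the positive case---but your packaging is cleaner. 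In particular, your explicit decomposition $\Zyc^p(K_{i+1}) = \Zyc^p(K_i)\oplus\langle\hat\fsimp_i\rangle$ (from the observation that $\fsimp_i$ has no cofaces yet) makes Case~II.1 a one-line intersection argument, whereas the paper defers and reuses the reasoning of Case~I.2; the paper only states this decomposition later, in the proof of \Cref{prop:dim-chng}. Your approach also makes the ``other dimensions unchanged'' clause immediate, while the paper dispatches it in a single opening sentence.
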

\begin{proof}
The only harmonic chain groups that might change from $K_i$ to $K_{i+1}$ are 
those in dimension $p$ because cycle and cocycle groups in other dimensions  do not change.

First consider Case I that $\fsimp_i$ is positive.
Let $\iota: K_i \incto K_{i+1}$ be the inclusion map.  As noted above, we identify any $c \in \Chn_*(K_i)=\Chn^*(K_i)$ with $\iota_\sharp (c) \in \Chn_*(K_{i+1})=\Chn^*(K_{i+1})$.

\begin{description}
\item[Case I.1: $\har_p(K_i)\subseteq \har_p(K_{i+1})$.] 
Take $z \in \har_p(K_{i})$. 
Obviously, $\iota_\sharp(z)$ is a cycle in $K_{i+1}$.
Moreover, for any $c \in C_{p+1} (K_{i+1})$, $\delta (\iota_\sharp (z)) (c) = \iota_\sharp(z) (\partial c)$. Since $C_{p+1} (K_i)= C_{p+1} (K_{i+1})$, $c$ and hence $\partial c$ exist in $K_i$, meaning that $\iota_\sharp(z) (\partial c) = z (\partial c) = (\delta z) (c) = 0$. It follows that $\iota_\sharp(z)$ is a cocycle in $K_{i+1}$. Therefore, $\har_p(K_i) \subseteq \har_p (K_{i+1})$.

\item[Case I.2: $\har_{p-1}(K_i)= \har_{p-1}(K_{i+1})$.] 
Take $z \in \har_{p-1}(K_{i+1})$. 
Since $\fsimp_i$ is a positive $p$-simplex,
we have $Z_{p-1}(K_{i+1}) = Z_{p-1}(K_i)$
and $B_{p-1}(K_{i})=B_{p-1}(K_{i+1})$.
So $z \in Z_{p-1}(K_i)$
and we consider $z$ as a cochain in $K_i$.
Then, for any $c \in C_{p}(K_{i})$,  $\delta (z) (c) = z(\partial c) = 0$, with the last equality due to 
$\partial c\in B_{p-1}(K_{i})=B_{p-1}(K_{i+1})$.
Therefore, $\har_p(K_{i+1}) \subseteq \har_p(K_i)$.
Take $z \in \har_{p-1}(K_{i})$. For any $c \in C_{p}(K_{i+1})$,  $\delta (\iota_\sharp z) (c) = (\iota_\sharp z)(\partial c) = 0$, 
with the last equality due to 
$\partial c\in B_{p-1}(K_{i+1})=B_{p-1}(K_{i})$.
Therefore, $\har_{p-1}(K_i)= \har_{p-1}(K_{i+1})$.
\end{description}
    
Now consider Case II that $\fsimp_i$ is negative.

\begin{description}
\item[Case II.1: $\har_p(K_{i+1}) = \har_p(K_i)$.] 
Since $\fsimp_i$ is negative, $Z_p(K_{i+1}) = Z_p(K_i)$
and $B_p(K_{i+1})=B_p(K_{i})$.
The verification for this case is then the same as the verification
for Case I.2 with a shift on the homology degree.

\item[Case II.2: $\har_{p-1}(K_{i+1}) \subseteq \har_{p-1}(K_i)$.]
Take $z \in \har_{p-1}(K_{i+1})$. We have $Z_{p-1}(K_{i+1}) = Z_{p-1}(K_i)$.
Therefore, $z \in Z_{p-1}(K_i)$
and we consider $z$ as a cochain in $K_i$. 
For any $c \in C_{p}(K_{i})$,  $\delta (z) (c) = z(\partial c) = 0$, 
with the last equality due to $\partial c\in B_{p-1}(K_i) \subseteq B_{p-1}(K_{i+1})$.
Therefore, $\har_{p-1}(K_{i+1}) \subseteq \har_{p-1}(K_i)$.\qedhere
\end{description}
\end{proof}

\begin{definition}[Harmonic chain barcode]
Consider the following \emph{harmonic zigzag module}
\begin{equation}
\label{eqn:har-zz-mod}
\har(\filt): \har(K_0) \leftrightarrow \har(K_1) \leftrightarrow
\cdots \leftrightarrow \har(K_\filtcnt),
\end{equation}
where the harmonic chain groups are connected by either forward inclusions {\rm(}e.g., $\har(K_i) \to \har(K_{i+1})${\rm)} or backward inclusions {\rm(}e.g., $\har(K_i) \leftarrow \har(K_{i+1})${\rm);}
see \Cref{thm:harmonic-inclusion}.
Define the \emph{\hcd} $\hbarc(\filt)$ of $\filt$ 
as the barcode of the zigzag module $\har(\filt)$,
that is, \[\hbarc(\filt):=\barc(\har(\filt)).\]
\end{definition}

In general, the {\hcb} is different from the ordinary persistence barcode for a filtration; see~\Cref{fig:twobc}.
In this paper, we also consider the zigzag module $\har_p(\filt)$ derived by taking $\har_p(K_i)$ on each $K_i$ in $\filt$.
While the $p$-th harmonic chain groups in $\har_p(\filt)$ are still connected by forward or backward inclusions, we may have $\har_p(K_i)=\har_p(K_{i+1})$ for two consecutive groups in $\har_p(\filt)$.
We therefore define the \emph{$p$-th {\hcb}} $\hbarc_p(\filt)$ of $\filt$ 
as the barcode of $\har_p(\filt)$,
i.e.,~$\hbarc_p(\filt):=\barc(\har_p(\filt))$.
Since $\har(\filt)=\bigoplus_p\har_p(\filt)$ (following from \Cref{thm:harmonic-inclusion}),
we have $\hbarc(\filt)=\bigsqcup_p\hbarc_p(\filt)$.

\para{Harmonic Representatives.}
In the rest of the section, we define harmonic representatives.
\begin{proposition}
Let $[b,d]$ be an interval in $\hbarc(\filt)$.
The inclusion $\har(K_{b-1}) \incto \har(K_b)$ in $\har(\filt)$
is forward.
Moreover, if $d<\filtcnt$,
then the inclusion $\har(K_{d}) \bakincto \har(K_{d+1})$ is backward.
\end{proposition}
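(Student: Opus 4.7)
The plan is to derive both claims from a single observation: in the harmonic zigzag module $\har(\filt)$ every connecting arrow is a genuine (injective) inclusion, so under the interval decomposition each induced summand map must also be injective. This alone rules out the ``wrong'' arrow direction at the birth and death ends of any bar.

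I would first recall from \Cref{thm:harmonic-inclusion} that each arrow in $\har(\filt)$ between $\har(K_i)$ and $\har(K_{i+1})$ is either a forward inclusion $\har(K_i) \hookrightarrow \har(K_{i+1})$ (when $\fsimp_i$ is positive) or a backward inclusion $\har(K_i) \hookleftarrow \har(K_{i+1})$ (when $\fsimp_i$ is negative); in either case the map in the direction of its arrow is injective. The interval decomposition yields an isomorphism of zigzag modules $\har(\filt) \cong \bigoplus_I I$, and hence at each index $i$ an isomorphism $\har(K_i) \cong \bigoplus_{I \ni i} \Rspace$ under which every connecting map becomes a direct sum of summand maps between $\Rspace$ and $0$. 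Since a direct sum of linear maps is injective if and only if each summand is, every summand map must be injective in its arrow direction.

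Now I fix the bar $[b,d]$. The summand for $[b,d]$ at position $b-1$ is $0$, while at position $b$ it is $\Rspace$. If the arrow between $\har(K_{b-1})$ and $\har(K_b)$ were backward, its restriction to the $[b,d]$-summand would be the map $\Rspace \to 0$, which is not injective---contradicting the injectivity established above. Hence the arrow is forward, as claimed. A symmetric argument handles the death: assuming $d < \filtcnt$, the summand for $[b,d]$ at $d$ is $\Rspace$ and at $d+1$ is $0$, so a forward arrow there would force a non-injective summand map $\Rspace \to 0$; therefore the arrow between $\har(K_d)$ and $\har(K_{d+1})$ must be backward.

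I expect essentially no obstacle beyond bookkeeping. The only conceptual point to handle carefully is that an isomorphism of zigzag modules really does turn each connecting map into a direct sum of maps on the interval summands, so injectivity of the global map transfers to each individual summand. Once this is noted, both parts of the proposition follow immediately from the impossibility of an injective linear map $\Rspace \to 0$.
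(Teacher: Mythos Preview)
Your argument is correct. The paper's proof takes a slightly different, terser angle: it observes (via \Cref{thm:harmonic-inclusion} and \Cref{prop:arrow-dim-change}) that the arrow at step $i$ is forward precisely when $\dim(\har(K_{i+1}))=\dim(\har(K_i))+1$, and then asserts the dimension jump $\dim(\har(K_b))=\dim(\har(K_{b-1}))+1$ at the birth (and the drop at the death). That dimension fact, however, is not automatic for general zigzag modules; it holds here \emph{because} all maps are injective inclusions, so at each step exactly one bar is born or exactly one dies---which is essentially the content of your injectivity-of-summands observation. In that sense the two proofs are close cousins: the paper hides the key step inside an asserted dimension equality, whereas you make the mechanism explicit by passing through the interval decomposition and noting that an injective global map forces injective summand maps, ruling out $\Rspace\to 0$. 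Your route is a bit more self-contained; the paper's is shorter but leans on the reader to unpack why the dimension jump must occur.
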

\begin{proof}
This follows from \Cref{thm:harmonic-inclusion} and the fact that 
$\dim(\har(K_{b}))=\dim(\har(K_{b-1}))+1$ 
and
$\dim(\har(K_{d+1}))=\dim(\har(K_{d}))-1$.
\end{proof}

Representatives for general zigzag modules were introduced in~\cite{BendichPaul2013Haro,maria2015zigzag} (see also~\cite{zzrep}),
which consist of a sequence of cycles
for each interval.
However, since the harmonic chain groups are connected by inclusion maps in $\har(\filt)$,
we have:
\begin{proposition}\label{prop:single-rep}
Each representative for an interval in $\hbarc(\filt)$
contains a single harmonic cycle.
\end{proposition}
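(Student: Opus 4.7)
The plan is to exploit the fact, established in \Cref{thm:harmonic-inclusion}, that every connecting map in $\har(\filt)$ is an honest inclusion of subspaces of $\Chn_*(K)$. Under this identification, all the vector spaces $\har(K_i)$ sit inside the common ambient space $\Chn_*(K)$, and the maps $\har(K_i)\leftrightarrow\har(K_{i+1})$ are just the inclusion of one subspace into the other.

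Let $[b,d]$ be an interval in $\hbarc(\filt)$. Recalling the general definition of a representative of an interval in a zigzag module (see~\cite{BendichPaul2013Haro,maria2015zigzag}), it is a tuple $(z_b, z_{b+1}, \ldots, z_d)$ with $z_i \in \har(K_i)$ such that for each consecutive pair, $z_i$ and $z_{i+1}$ are mapped to each other under the connecting map $\har(K_i)\leftrightarrow \har(K_{i+1})$, and such that the span of this tuple realizes the indecomposable summand $I[b,d]$ in a choice of decomposition of $\har(\filt)$. Because each connecting map is an inclusion of subspaces of $\Chn_*(K)$, the compatibility condition "$z_i$ and $z_{i+1}$ correspond under the map" reduces to the equality $z_i = z_{i+1}$ inside $\Chn_*(K)$. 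Thus once we produce a single harmonic chain $z \in \har(K_b)\cap \har(K_{b+1}) \cap \cdots \cap \har(K_d)$ that generates the summand, we can set $z_i := z$ for all $i \in [b,d]$.

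To produce such a $z$, I would start with the birth end: by the preceding proposition, $\har(K_{b-1}) \incto \har(K_b)$ is a forward inclusion, so $\dim \har(K_b) = \dim \har(K_{b-1})+1$, and there is a class in the summand $I[b,d]$ whose value at index $b$ is a nonzero harmonic chain $z \in \har(K_b)$. By induction on $i$, since each arrow $\har(K_i)\leftrightarrow \har(K_{i+1})$ within the range $b \le i < d$ is an inclusion (in either direction) under which the component $z_i$ of the representative must map to the component $z_{i+1}$, the identification $z_i = z_{i+1}$ inside $\Chn_*(K)$ forces $z \in \har(K_{i+1})$ as well. Hence the single harmonic chain $z$ lies in $\har(K_i)$ for all $i \in [b,d]$ and can serve as the entire representative.

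The only mild subtlety, which is more bookkeeping than obstacle, is to check that this constant sequence $(z,z,\ldots,z)$ still spans an indecomposable summand isomorphic to $I[b,d]$, i.e., that it truly dies at $d+1$ and is not supported beyond. This is immediate from the death condition: if $d<\filtcnt$, the arrow $\har(K_d) \bakincto \har(K_{d+1})$ is a backward inclusion, so $z \in \har(K_d)$ extends uniquely to an element of $\har(K_{d+1})$ only if $z \in \har(K_{d+1})$; the fact that the interval ends at $d$ is precisely the statement that $z \notin \har(K_{d+1})$, so the summand generated by the constant tuple terminates exactly at $d$, as required.
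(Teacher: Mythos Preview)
Your argument is correct and uses the same key observation as the paper: since every map in $\har(\filt)$ is an inclusion of subspaces of $\Chn_*(K)$, the compatibility condition on a representative $(z_b,\ldots,z_d)$ forces $z_i=z_{i+1}$, so the sequence is constant. Note, however, that the proposition asks you to show that \emph{each} representative consists of a single cycle, which your second paragraph already establishes; the remaining two paragraphs construct a specific representative and verify birth/death conditions, which is unnecessary for this statement (that material belongs rather to the justification of \Cref{dfn:rep}).
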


We then adapt the definition of representatives for
general zigzag modules~\cite{zzrep,maria2015zigzag} and 
define
harmonic representatives as follows:
\begin{definition}[Harmonic representative]\label{dfn:rep}
A \emph{harmonic $p$-representative}
{\rm(}or simply \emph{$p$-representative}{\rm)}
for an interval $[b,d]\in\hbarc_p(\filt)$ is a $p$-cycle 
$z\in\har_p(K_i)$ for $i\in[b,d]$ satisfying:
\begin{description}
\item[Birth condition:] $z$ is born in $\har_p(K_{b})$, i.e., $z\in\har_p(K_{b})\setminus\har_p(K_{b-1})$ {\rm(}notice that $b>0${\rm);}
\item[Death condition:] $z$ dies leaving $\har_p(K_{d})$, i.e., $z\in\har_p(K_{d})\setminus\har_p(K_{d+1})$ if $d<m$.
\end{description}
\end{definition}

Sometimes we relax the restriction of $[b,d]\in\hbarc_p(\filt)$ 
and have a harmonic representative for an arbitrary integer interval $[b,d]\subseteq[0,\filtcnt]$.
\Cref{sec:rep-just} details the original definition of zigzag representatives
and justifies \Cref{prop:single-rep} and \Cref{dfn:rep}. 
\section{A Cubic-Time Algorithm for Computing Harmonic Chain Barcodes}\label{sec:algorithm}
In this section, we propose an $O(\filtcnt^3)$ algorithm for computing the {\hcb}  and its harmonic representatives given a filtration containing $\filtcnt$ insertions.
We first overview the algorithm and then describe the implementation.
While there have been algorithms~\cite{CarlssonSilvaMorozov2009,zzrep,maria2015zigzag} for computing zigzag barcodes in the general case, these algorithms target zigzag modules induced from directly \emph{taking the homology functor on zigzag filtrations}.
In contrast, our algorithm targets the special type of zigzag module $\har(\filt)$, where harmonic chain groups are derived from an \emph{ordinary non-zigzag} filtration and
are connected by inclusions.

For simplicity, when describing the algorithm, complexes in a filtration are always 
indexed by integers instead of the real-valued timestamps (see~\Cref{sec:persistence}).
Again, we assume that the input is a simplex-wise filtration 
$\filt: 
\emptyset
=K_0 \inctosp{\fsimp_0} K_1 \inctosp{\fsimp_1} 
\cdots \inctosp{\fsimp_{\filtcnt-1}} K_\filtcnt=K,
$
where each $K_{i+1}$ differs from $K_i$ by the addition of a simplex $\fsimp_i$. 

\Cref{alg:frame} provides an overview: our algorithm computes the harmonic chain barcode by iteratively maintaining the harmonic representatives. 
It processes the insertions in $\filt$ one by one and finds \emph{pairings} of the \emph{birth indices} (starting points of the intervals) and \emph{death indices} (ending points of the intervals) to form intervals in $\hbarc(\filt)$. 
When we encounter a new birth index, it is initially \emph{unpaired};
when we encounter a new death index, an unpaired birth index is chosen to pair with the death index. 

The following definition helps present the algorithm.
\begin{definition}
For an interval $[b,i] \subseteq [0,\filtcnt]$,
a \emph{partial $p$-representative} for $[b,i]$ is a $p$-representative as in \Cref{dfn:rep}
by ignoring the death condition.
\end{definition}

\makeatletter
  \def\vhrulefill#1{\leavevmode\leaders\hrule\@height#1\hfill \kern\z@}
\makeatother

\newenvironment{algrl}
    {\vspace{5pt}
\par\nobreak\nobreak\noindent\vhrulefill{0.8pt}
\begin{algr}
\vspace{-5pt}
    }
    { 
\vspace{-3pt}    
\par\nobreak\nobreak\noindent\vhrulefill{0.8pt}
\end{algr}
\vspace{3pt}
    }

\begin{algrl}\label{alg:frame}
Let $U^p$ be the set of unpaired birth indices for each homology degree $p$, where $U^p=\emptyset$ initially. Before each iteration that processes the insertion $K_{i}\inctosp{\fsimp_i}K_{i+1}$, we maintain a $p$-cycle $z$ for each $b\in U^p$ that is a partial representative for the interval $[b,i]$.
We use partial representatives to determine a finalized representative when a birth index is paired with a death index (by making sure the death condition is satisfied).

When processing the insertion of a $p$-simplex $\fsimp_i$ via $K_{i}\inctosp{\fsimp_i}K_{i+1}$, we proceed as follows:
\begin{description}
    \item[If $\sG_i$ is positive:] 
    \begin{itemize}
        \item[]
        \item Since $\dim(\har_p(K_{i+1}))=\dim(\har_p(K_{i}))+1$ (by \Cref{thm:harmonic-inclusion}), we add a new birth index $i+1$ to $U^{p}$.
        \item Find a harmonic $p$-cycle $z\in\har_p(K_{i+1})\setminus\har_p(K_{i})$, and let $z$ be the partial representative for $i+1\in U^p$.
    \end{itemize}
    
    \item[If $\sG_i$ is negative:] 
        \begin{itemize}
        \item[]
        \item Since $\dim(\har_{p-1}(K_{i+1}))=\dim(\har_{p-1}(K_{i}))-1$, we have a new death index $i$.
        \item Let $U^{p-1}=\{b_j\mid j=1,\ldots,k\}$, where each $b_j$ has a partial $(p-1)$-representative $z_j$.
        \item Let $U'=\{b_j\in U^{p-1}\mid z_j\not\in\har_{p-1}(K_{i+1})\}$, i.e., $U'$ contains all  birth indices in $U^{p-1}$ whose partial representatives do not persist to $K_{i+1}$.
        \item Pair the \emph{smallest} (i.e., the ``oldest'') index $b_{j^*}\in U'$ with $i$ which forms a new interval $[b_{j^*},i]\in\hbarc_{p-1}(\filt)$. Assign $z_{j^*}$ as the representative for $[b_{j^*},i]\in\hbarc_{p-1}(\filt)$ and remove $i$ from $U^{p-1}$.
        \item Consider each $b_{j}\in U'\setminus\{b_{j^*}\}$. We have that $z_j\not\in\har_{p-1}(K_{i+1})$ because $\dG(z_j)$ becomes non-zero in $K_{i+1}$ ($\partial z_j$ is always zero after being born). Also, the only reason that  $\dG(z_j)\neq 0$ in $K_{i+1}$ is because  $\aG_j:=\dG(z_j)(\fsimp_i)=z_j(\partial\fsimp_i)\neq 0$. Let $\aG_*=z_{j^*}(\partial\fsimp_i)$. Update the partial representative for $b_j$ as $z_j:= z_j-(\aG_j/\aG_*)\cdot z_{j^*}$ so that $z_j$ now persists to $\har_{p-1}(K_{i+1})$.
\end{itemize}
\end{description}

After processing all the insertions, for each $b$ in each $U^p$ with a partial representative $z$, let $[b,\filtcnt]$ form an interval in $\har_p(\filt)$ with a representative $z$.
\end{algrl}

\begin{example*}
\Cref{fig:filtbarc} provides an example of how \Cref{alg:frame} computes {\hcb} $\har_1(\filt)$ and representatives for a filtration $\filt$, with the resulting barcode  in \Cref{fig:twobc}. For simplicity, $K_0, K_1, K_2$ are omitted and vertex insertions are ignored. 
In $K_4$, $K_5$, and $K_7$, new 1-cycles are born which are also harmonic cycles due to a lack of triangles. The partial representatives $z_4$, $z_5$, $z_7$ for the indices $U^1=\{4,5,7\}$ all persist till $K_7$. 
When the triangle $abe$ is inserted in $K_8$ (producing a death index $7$), $\dG(z_4)(abe)=1$, $\dG(z_5)(abe)=1$, and $\dG(z_7)(abe)=3$.
We pair $4\in U^1$ with $7$ and have $[4,7]\in\har_1(\filt)$ with a representative $z_4$.
We also let $z_5:=z_5-z_4$ and $z_7:=z_7-3z_4$, which now both persist to $\har_1(K_8)$.
The remaining steps are similar. Notice that when $abe$ is inserted, the ``alive'' bar represented by the boundary of $abe$ (i.e.,~$ab+be-ae$) is not killed.
This is a significant difference from ordinary persistence (which will kill the alive bar represented by $\partial(abe)$ when $abe$ is inserted).
\end{example*}

\begin{figure}[!ht]
\centering
\includegraphics[width=\linewidth]{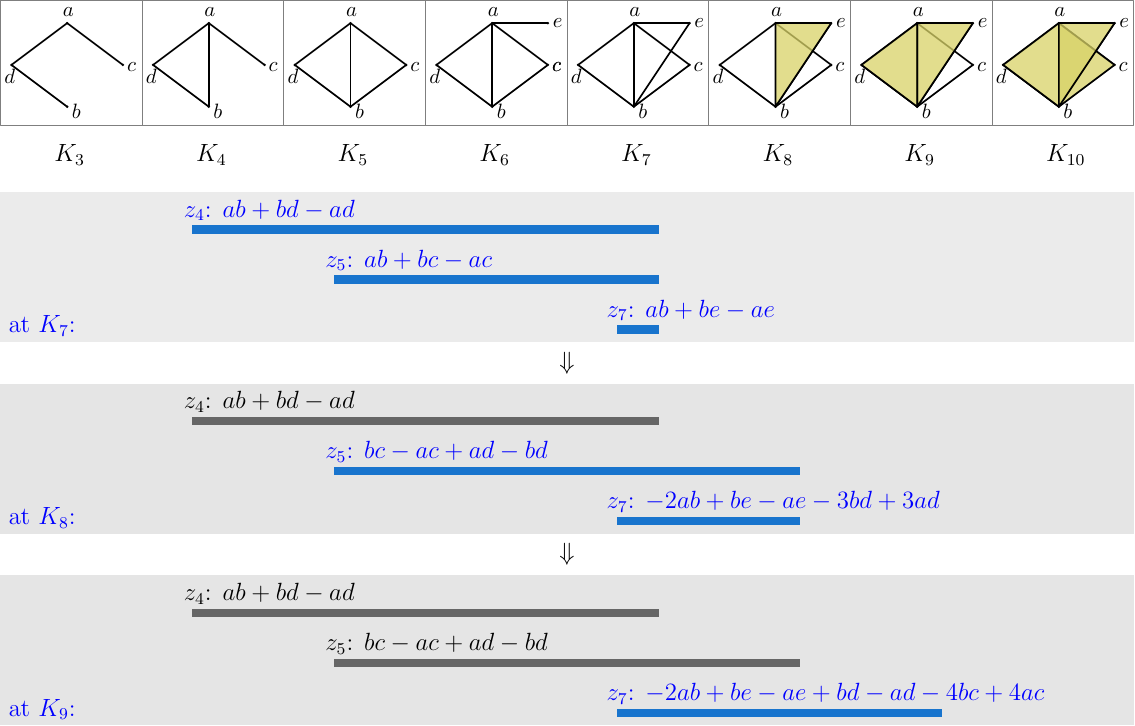}
\caption{An example of the computation of {\hcb} $\har_1(\filt)$ and representatives.}
\label{fig:filtbarc}
\end{figure}

\begin{figure}[!ht]
  \centering
  \includegraphics[width=0.75\linewidth]{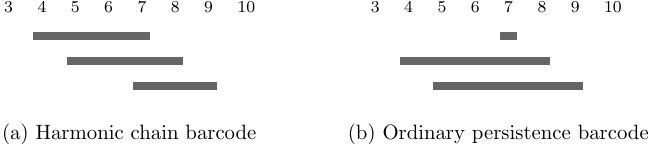}
  \caption{Harmonic chain barcode and ordinary persistence barcode for the filtration in \Cref{fig:filtbarc}. Deviating from conventions in ordinary persistence, bars are drawn as closed integer intervals, e.g., $[7,7]$ in the ordinary barcode is killed by the addition of $abe$ in $K_8$.}
  \label{fig:twobc}
\end{figure}

\begin{remark}
\Cref{alg:frame} chooses the ``oldest'' birth, among the options, to pair with a death while the ordinary persistence chooses the ``youngest'' one~\cite{EdelsbrunnerLetscherZomorodian2002}.
A brief explanation is that a summation of representatives \emph{persists} to the next harmonic chain group in $\har(\filt)$, whereas a summation of representatives
in the ordinary persistence \emph{becomes trivial} in the next homology group (and cannot persist).
We refer to~\cite{zzrep,maria2015zigzag} for a formal definition of the different types of birth/death ends and how they impact the computation of persistence.
\end{remark}

The following (rephrased) proposition from~\cite{dey2021computing} helps prove the correctness of \Cref{alg:frame}:

\begin{proposition}[Proposition 9, \cite{dey2021computing}]
\label{prop:pn-paring-w-rep}
If a pairing $\pi$ of the birth and death indices in $\har(\filt)$ satisfies that each interval from $\pi$ has a harmonic representative, then $\pi$ induces the {\hcb} $\hbarc(\filt)$.
\end{proposition}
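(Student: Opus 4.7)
The plan is to exploit the uniqueness of the interval decomposition of zigzag modules: any explicit decomposition of $\har(\filt)$ into a direct sum of interval modules must match the barcode $\hbarc(\filt)$. I would use the harmonic representatives supplied by $\pi$ to construct such a decomposition explicitly. As a first routine step, via \Cref{prop:arrow-dim-change} and \Cref{def:pos-neg-simp}, the multiset of birth and death indices of $\har(\filt)$ is determined by $\filt$ alone (positive simplices produce births, negative simplices produce deaths), so $\pi$ is already pairing the correct set of endpoints; the only question is whether the pairing itself is the right one.

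For each interval $[b,d]\in\pi$ with harmonic $p$-representative $z_{[b,d]}$, I would argue that the one-dimensional subspaces $\langle z_{[b,d]}\rangle \subseteq \har_p(K_i)$ for $i\in[b,d]$, together with $0$ at all other indices, form a submodule of $\har(\filt)$ isomorphic to the interval module $I[b,d]$. This uses that the maps in $\har(\filt)$ are inclusions (\Cref{thm:harmonic-inclusion}), so the single cycle $z_{[b,d]}$ is carried identically to itself along every arrow in $[b,d]$, while the birth and death conditions of \Cref{dfn:rep} force the summand to vanish at indices $b-1$ and $d+1$. The main technical step is to prove that at each $K_i$ the family $\{z_{[b,d]} : [b,d]\in\pi,\ b\leq i\leq d\}$ is a basis of $\har(K_i)$. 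For linear independence, I would order the intervals containing $i$ by birth time $b_1<\cdots<b_k$, set $z_j:=z_{[b_j,d_j]}$, and assume a nontrivial relation $\sum_j \lambda_j z_j=0$ with $j^*$ the largest index where $\lambda_{j^*}\neq 0$. For each $j<j^*$, because $b_j<b_{j^*}$ and $d_j\geq i\geq b_{j^*}>b_{j^*}-1$, one has $z_j\in\har_p(K_{b_{j^*}-1})$; the relation then forces $z_{j^*}\in\har_p(K_{b_{j^*}-1})$, contradicting the birth condition $z_{j^*}\in\har_p(K_{b_{j^*}})\setminus\har_p(K_{b_{j^*}-1})$. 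Spanning follows by a dimension count: by induction on $i$ using \Cref{prop:arrow-dim-change}, $\dim\har(K_i)$ equals the number of intervals of $\pi$ containing $i$, since both quantities change by the same $\pm 1$ at each insertion (a positive insertion opens a new interval in $\pi$; a negative insertion closes exactly one).

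Putting these observations together yields $\har(\filt) \cong \bigoplus_{[b,d]\in\pi} I[b,d]$ as zigzag modules, and the uniqueness of the interval decomposition of zigzag modules~\cite{CarlssonSilva2010} gives $\hbarc(\filt) = \{[b,d] : [b,d]\in\pi\}$, i.e., $\pi$ induces $\hbarc(\filt)$. I expect the linear-independence step to be the main obstacle, since it must invoke the birth condition of every representative simultaneously and requires careful bookkeeping of which earlier representatives lie in $\har_p(K_{b_{j^*}-1})$; the spanning and endpoint-counting steps, by contrast, are straightforward inductive bookkeeping once \Cref{prop:arrow-dim-change} is in hand.
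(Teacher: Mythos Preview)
The paper does not supply its own proof of this proposition: it quotes it verbatim as a rephrased Proposition~9 from~\cite{dey2021computing} and uses it as a black box. So there is no in-paper argument to compare against.

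Your proposed proof is correct and is essentially the standard way such statements are established: build an explicit interval decomposition from the given representatives and invoke uniqueness of the zigzag decomposition. The linear-independence step is sound because in a simplex-wise filtration every birth index occurs at most once (\Cref{prop:arrow-dim-change}), so the strict ordering $b_1<\cdots<b_k$ is legitimate; and the containment $z_j\in\har_p(K_{b_{j^*}-1})$ for $j<j^*$ indeed follows since $b_j\leq b_{j^*}-1\leq i\leq d_j$. The dimension count for spanning is also valid once you note that $\pi$ must include the intervals $[b,\filtcnt]$ coming from unpaired births, so that every positive (resp.\ negative) simplex opens (resp.\ closes) exactly one interval of $\pi$. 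One small point worth making explicit: representatives belonging to intervals of different homology degrees live in different summands of $\har(K_i)=\bigoplus_p\har_p(K_i)$, so any linear relation decomposes degree-wise and your independence argument may be carried out within a fixed $p$.
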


\begin{theorem}
\Cref{alg:frame} correctly computes the {\hcb} for filtration $\filt$.
\end{theorem}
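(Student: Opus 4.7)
The plan is to invoke \Cref{prop:pn-paring-w-rep}: it suffices to show that \Cref{alg:frame} outputs a pairing of birth and death indices in which every interval is equipped with a bona fide harmonic representative in the sense of \Cref{dfn:rep}. I will argue by induction on the processed insertions, maintaining the invariant that before processing $K_i \inctosp{\sG_i} K_{i+1}$, for every degree $p$ and every $b \in U^p$ the cycle $z_b$ stored by the algorithm is a partial $p$-representative for $[b,i]$, and moreover $\{z_b : b \in U^p\}$ is a basis of $\har_p(K_i)$. The base case $i=0$ is vacuous, and the invariant is preserved within degrees $q$ untouched by $\sG_i$ because $\har_q(K_i)=\har_q(K_{i+1})$ by \Cref{thm:harmonic-inclusion}.

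If $\sG_i$ is positive, \Cref{thm:harmonic-inclusion} gives an inclusion $\har_p(K_i) \incto \har_p(K_{i+1})$ with a one-dimensional jump, so a cycle $z \in \har_p(K_{i+1}) \setminus \har_p(K_i)$ exists and trivially satisfies the birth condition for the new birth index $i+1$; appending $z$ to the old basis yields a basis of $\har_p(K_{i+1})$, restoring the invariant.

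If $\sG_i$ is negative (creating a $(p-1)$-death), the inclusion $\har_{p-1}(K_{i+1}) \bakincto \har_{p-1}(K_i)$ has codimension one. Writing $\alpha_b := z_b(\partial \sG_i)$, one checks that a cycle $w = \sum_b c_b z_b \in \har_{p-1}(K_i)$ lies in $\har_{p-1}(K_{i+1})$ if and only if $\sum_b c_b \alpha_b = 0$; hence $U' = \{b : \alpha_b \neq 0\}$ is non-empty and the pairing step is well-defined. By induction $z_{j^*}$ is a partial representative for $[b_{j^*}, i]$, and since $b_{j^*} \in U'$ we have $z_{j^*} \notin \har_{p-1}(K_{i+1})$, yielding the death condition; so $[b_{j^*}, i]$ gets a valid full representative.

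The delicate step, and the main obstacle, is verifying that each update $z_j \mapsto z_j - (\alpha_j/\alpha_*) z_{j^*}$ for $b_j \in U' \setminus \{b_{j^*}\}$ preserves the invariant. Harmonicity in $K_{i+1}$ is immediate from the cycle condition together with the cancellation $\alpha_j - (\alpha_j/\alpha_*)\alpha_* = 0$ in the coboundary evaluated at $\sG_i$, and the updated family remains a basis of $\har_{p-1}(K_{i+1})$ by the linear constraint above. The birth condition is where the ``oldest-first'' pairing rule enters critically: $b_{j^*}$ is the minimum of $U'$, so $b_{j^*} \leq b_j - 1$ and thus $z_{j^*} \in \har_{p-1}(K_{b_j - 1})$; hence the updated $z_j$ lies in $\har_{p-1}(K_{b_j})$, and if it were in $\har_{p-1}(K_{b_j - 1})$ then the old $z_j = z_j^{\text{new}} + (\alpha_j/\alpha_*) z_{j^*}$ would be too, contradicting induction. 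After all insertions are processed, each remaining $b \in U^p$ produces an interval $[b, \filtcnt]$ whose partial representative is automatically a full representative (no death condition is required), and \Cref{prop:pn-paring-w-rep} then yields $\hbarc(\filt)$.
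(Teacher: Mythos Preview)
Your proof is correct and follows essentially the same inductive strategy as the paper: maintain partial representatives, verify that the birth condition survives the update $z_j \mapsto z_j - (\alpha_j/\alpha_*)z_{j^*}$ using $b_{j^*} < b_j$, and conclude via \Cref{prop:pn-paring-w-rep}. You additionally carry along the basis invariant for $\har_p(K_i)$ (which the paper relegates to an appendix) and use it to check explicitly that $U' \neq \emptyset$, making your argument a bit more self-contained than the paper's.
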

\begin{proof}
We first show by induction that before \Cref{alg:frame} processes each $\fsimp_i$, the $p$-cycle $z$ maintained for any $b\in U^p$ is a partial $p$-representative for $[b,i]$.
Suppose that the claim is true before processing $\fsimp_i$.
If $\fsimp_i$ is positive, index $i+1$ is added to $U^p$ and we assign $z \in\har_p(K_{i+1})\setminus\har_p(K_i)$ to $i+1$, which is clearly a partial representative for $[i+1,i+1]$.
For an index $b$ previously in $U^p$ having a partial representative $z$, since $\har_p(K_i)\subseteq \har_p(K_{i+1})$ (\Cref{thm:harmonic-inclusion}), we have that $z$ is still a partial representative for the interval $[b,i+1]$.

In the case that $\fsimp_i$ is negative, whenever we update a representative $z_j$ as $z_j-(\aG_j/\aG_*)\cdot z_{j^*}$, we have that $b_{j^*}<b_j$ and hence the updated $z_j$ remains born in $\har_{p-1}(K_{b_j})$. Therefore, the new $z_j$ is still a partial representative for $[b,i]$, which is also a partial representative for $[b,i+1]$ because now $z_j\in\har_{p-1}(K_{i+1})$.

By \Cref{prop:pn-paring-w-rep}, we only need to show that each interval output by the algorithm admits a harmonic representative.
This follows from the algorithm. 
For example, whenever we output an interval $[b_{j^*},i]\in\hbarc_{p-1}(\filt)$ when $\fsimp_i$ is negative, we have $z_{j^*}\not\in\hbarc_{p-1}(K_{i+1})$, which ensures the death condition.
\end{proof}

\let\mbeg\[
\let\mend\]


\subsection{Implementation}\label{sec:impl}
Before detailing the implementation,
we first have the following fact:

\begin{proposition}\label{prop:dim-chng}
When  a $p$-simplex $\fsimp_i$ is added  from $K_i$ to $K_{i+1}$,
one has:
   \begin{itemize}
   \item If $\fsimp_i$ is positive, then 
   \begin{align*}
       &\dim(\Zyc^p(K_{i+1}))=\dim(\Zyc^p(K_{i}))+1\text{ and}\\
   &\dim(\Bnd^p(K_{i+1}))=\dim(\Bnd^p(K_{i})).
   \end{align*}

   \item If $\fsimp_i$ is negative, then
   \begin{align*}
   &\dim(\Zyc^p(K_{i+1}))=\dim(\Zyc^p(K_{i}))+1\text{, }\\
   &\dim(\Bnd^{p}(K_{i+1}))=\dim(\Bnd^{p}(K_{i}))+1\text{, and}\\
   &\dim(\Zyc^{p-1}(K_{i+1}))=\dim(\Zyc^{p-1}(K_{i}))-1.
   \end{align*}
   \end{itemize}
Furthermore, in each case, all other cocycle or coboundary  groups not mentioned above stay the same
from $K_i$ to $K_{i+1}$.
\end{proposition}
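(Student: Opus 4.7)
The plan is to reduce the statement to well-known dimension changes of chain-level cycle and boundary groups under a single simplex insertion, using the duality afforded by $\Rspace$ coefficients.

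First, since $\Chn^p(K)$ and $\Chn_p(K)$ have been identified via the standard inner product, the coboundary $\delta_p\colon \Chn^p(K)\to \Chn^{p+1}(K)$ is the transpose of $\partial_{p+1}\colon \Chn_{p+1}(K)\to \Chn_p(K)$. Therefore $\mathrm{rank}(\delta_p)=\mathrm{rank}(\partial_{p+1})=\dim \Bnd_p(K)$, and rank--nullity gives the two key identities
\[
\dim \Zyc^p(K)=\dim \Chn^p(K)-\dim \Bnd_p(K), \qquad \dim \Bnd^p(K)=\dim \Bnd_{p-1}(K).
\]
These translate every change in $\Zyc^\bullet$ and $\Bnd^\bullet$ into changes in $\Chn_\bullet$ and $\Bnd_\bullet$, which are easy to track.

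Second, I recall the standard bookkeeping for adding a single $p$-simplex $\fsimp_i$: $\dim \Chn_p$ rises by $1$ and $\dim \Chn_q$ is unchanged for $q\neq p$; because no $(p{+}1)$- or higher-simplex is inserted, $\Bnd_q$ is unchanged for $q\neq p-1$. Combined with the relation $\dim \Chn_p=\dim \Zyc_p+\dim \Bnd_{p-1}$, this forces exactly one of two alternatives: either $\dim \Zyc_p$ grows by $1$ while every $\Bnd_\bullet$ stays fixed (the positive case), or $\dim \Bnd_{p-1}$ grows by $1$ while every $\Zyc_\bullet$ stays fixed (the negative case). Using \Cref{lem:har-homo-iso} and \Cref{thm:harmonic-inclusion}, these two alternatives match the positive/negative dichotomy of \Cref{def:pos-neg-simp}.

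Third, I will combine the two previous steps in each case. For a positive $p$-simplex, only $\dim \Chn^p$ increases (by $1$) and no $\Bnd_q$ changes, so only $\dim \Zyc^p$ goes up by $1$ while every $\Bnd^q=\Bnd_{q-1}$ and every $\Zyc^q$ with $q\neq p$ is preserved. For a negative $p$-simplex, $\dim \Chn^p$ goes up by $1$ and $\dim \Bnd_{p-1}$ goes up by $1$, which via the identities above yields $\dim \Zyc^p$ up by $1$, $\dim \Zyc^{p-1}$ down by $1$, and $\dim \Bnd^p=\dim \Bnd_{p-1}$ up by $1$, with all other $\Zyc^q$ and $\Bnd^q$ unchanged (since $\Chn^q$ and the relevant $\Bnd_{q-1},\Bnd_q$ are all unchanged for $q\notin\{p-1,p\}$).

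The main potential pitfall is not any delicate algebra but simply the bookkeeping: one must carefully verify the ``all other groups stay the same'' clause by checking every index $q\notin\{p-1,p\}$ for both $\Zyc^q$ and $\Bnd^q$, and remember that $\Bnd^q$ corresponds to $\Bnd_{q-1}$ so indices shift by one between homology and cohomology.
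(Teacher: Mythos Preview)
Your argument is correct. The two rank--nullity identities
\[
\dim \Zyc^q(K)=\dim \Chn_q(K)-\dim \Bnd_q(K),\qquad \dim \Bnd^q(K)=\dim \Bnd_{q-1}(K),
\]
obtained from $\dG_{q}=\partial_{q+1}^{\top}$, reduce the whole proposition to the standard chain-level dichotomy (either $\dim \Zyc_p$ or $\dim \Bnd_{p-1}$ increases by one, everything else fixed), and the case analysis you outline then goes through mechanically, including the ``all other groups unchanged'' clause. The invocation of \Cref{lem:har-homo-iso} to align the dichotomy with \Cref{def:pos-neg-simp} is the right link; \Cref{thm:harmonic-inclusion} is not strictly needed here but does no harm.

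The paper takes a different, more hands-on route: it first shows $\dim \Zyc^p(K_{i+1})=\dim \Zyc^p(K_i)+1$ directly by exhibiting the restriction map $\jmath^\sharp:\Zyc^p(K_{i+1})\to\Zyc^p(K_i)$ as a surjection with one-dimensional kernel generated by $\hat{\fsimp}_i$, and then reads off the coboundary dimensions from $\dim H^p=\dim \Zyc^p-\dim \Bnd^p$ together with the known change in cohomology. Your approach is more uniform---two identities set up once, then pure bookkeeping---and handles all degrees $q$ at a stroke, whereas the paper's argument gives a more explicit picture of what the new cocycle actually is (namely $\hat{\fsimp}_i$), which is useful later in the algorithm (e.g., in \Cref{algr:reduce} and \Cref{prop:sG-i-new-cob}). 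Both are short and elementary; yours is the cleaner dimension count, the paper's is closer to the data structures it subsequently manipulates.
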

\begin{proof}
We first show that $\dim(\Zyc^p(K_{i+1}))=\dim(\Zyc^p(K_{i}))+1$
whether $\fsimp_i$ is positive or negative.
Let \mbeg\iG^\sharp:\Chn^p(K_{i+1})\to\Chn^p(K_{i})\mend
be the cochain map induced by inclusion.
Clearly, we have  $\iG^\sharp(\Zyc^p(K_{i+1}))\subseteq\Zyc^p(K_{i})$~\cite{hatcher2002algebraic}.
Let \mbeg\jmath^\sharp:\Zyc^p(K_{i+1})\to\Zyc^p(K_{i})\mend be the restriction of $\iG^\sharp$.
We show that $\jmath^\sharp$ is surjective
and the kernel of $\jmath^\sharp$ is 1-dimensional.
First, for any cocycle $\phi\in\Zyc^p(K_{i})$,
we define a cocycle $\phi'\in\Zyc^p(K_{i+1})$
where $\phi'(\fsimp_i)=0$
and values of $\phi'$ on other $p$-simplices are
the same as $\phi$.
We have that $\jmath^\sharp(\phi')=\phi$ and so $\jmath^\sharp$
is surjective.
Moreover, let $\hat{\fsimp}_i\in\Zyc^p(K_{i+1})$ be 
the dual of $\fsimp_i$.
We have that $\hat{\fsimp}_i$ generates 
the kernel of $\jmath^\sharp$.

We also notice that the coboundary maps \mbeg\dG^{p-2}: \Chn^{p-2}(K_i)\to \Chn^{p-1}(K_i)\mend
   and \mbeg\dG^{p-2}: \Chn^{p-2}(K_{i+1})\to \Chn^{p-1}(K_{i+1})\mend
   are the same, and so $\Bnd^{p-1}(K_{i+1})=\Bnd^{p-1}(K_{i})$.

Suppose that $\fsimp_i$ is positive.
We have  $\dim(\Hm^p(K_{i+1}))=\dim(\Hm^p(K_{i}))+1$.
Since 
\begin{align*}
&\dim(\Hm^p(K_{i}))=\dim(\Zyc^p(K_{i}))-\dim(\Bnd^p(K_{i})),\\
&\dim(\Hm^p(K_{i+1}))=\dim(\Zyc^p(K_{i+1}))-\dim(\Bnd^p(K_{i+1})),
\text{ and }\\
&\dim(\Zyc^p(K_{i+1}))=\dim(\Zyc^p(K_{i}))+1,
\end{align*}
we have $\dim(\Bnd^p(K_{i+1}))=\dim(\Bnd^p(K_{i}))$.
Similarly, since
\begin{align*}
&\Hm^{p-1}(K_{i+1})\approx\Hm^{p-1}(K_{i}),\\
&\Bnd^{p-1}(K_{i+1})=\Bnd^{p-1}(K_{i}),
\text{ and }\\
&\Chn^{p-1}(K_{i+1})=\Chn^{p-1}(K_{i}),
\end{align*}
we have $\Zyc^{p-1}(K_{i+1})=\Zyc^{p-1}(K_{i})$.

When $\fsimp_i$ is negative,
the statements regarding the dimensions of the 
cocycle and coboundary groups 
in the proposition
can be similarly obtained.

To see that other cocycle or coboundary groups do not change, 
we provide arguments for the groups in degree $p+1$ 
and the justification for groups in other degrees is omitted.
We first have $\Zyc^{p+1}(K_{i+1})=\Zyc^{p+1}(K_{i})$
because the two 
maps $\dG^{p+1}: \Chn^{p+1}(K_i)\to \Chn^{p+2}(K_i)$ and
$\dG^{p+1}: \Chn^{p+1}(K_{i+1})\to \Chn^{p+2}(K_{i+1})$ are the same.
Moreover, 
we have $\Bnd^{p+1}(K_{i+1})=\Bnd^{p+1}(K_{i})$
because $\sG_i$ has no cofaces in $K_{i+1}$.
\end{proof}

\newcommand{\ldim}{{p}}
\para{Pseudo-cochains.}
We introduce the notion of {pseudo-cochains}
which are more ``concise'' versions
of  cochains.
These
pseudo-cochains help maintain a coboundary basis
in our implementation.
Proofs of 
\Cref{prop:cobound-by-bound,prop:pseudocob-is-cob,prop:pseudo-coc-nat}
follow directly from
definitions
and are omitted.
\Cref{prop:cobound-by-bound}
says that 
the coboundary of a $\ldim$-cochain $\phi$
is completely determined by how $\phi$ maps the $\ldim$-boundaries 
to  real values:
\begin{proposition}\label{prop:cobound-by-bound}
Let
$\phi:\Chn_\ldim(K_i)\to\Real$ 
and $\psi:\Chn_\ldim(K_i)\to\Real$ be two $\ldim$-cochains  of $K_i$.
Also,
let $\bar{\phi}:\Bnd_\ldim(K_i)\to\Real$ and $\bar{\psi}:\Bnd_\ldim(K_i)\to\Real$ be restriction of $\phi$ 
and $\psi$
to $\Bnd_\ldim(K_i)$ respectively.
One has that  $\dG(\phi)=\dG(\psi)$
if and only if $\bar{\phi}=\bar{\psi}$.
\end{proposition}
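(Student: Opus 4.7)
The plan is to unfold the definition of the coboundary operator and observe that it is completely determined by how the cochain evaluates on boundaries. Recall that for any $p$-cochain $\phi$, the coboundary $\dG(\phi):\Chn_{p+1}(K_i)\to\Real$ is defined by $\dG(\phi)(c)=\phi(\partial c)$ for every $(p+1)$-chain $c$. Equivalently, $\dG(\phi)$ is obtained by composing $\phi$ with the boundary map $\partial_{p+1}:\Chn_{p+1}(K_i)\to\Chn_p(K_i)$ whose image is exactly $\Bnd_p(K_i)$.

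First, I would prove the forward direction. Assume $\dG(\phi)=\dG(\psi)$. Then for every $(p+1)$-chain $c$ we have $\phi(\partial c)=\psi(\partial c)$. Since every element of $\Bnd_p(K_i)$ is of the form $\partial c$ for some $c\in\Chn_{p+1}(K_i)$, it follows that $\phi$ and $\psi$ agree on all of $\Bnd_p(K_i)$, i.e., $\bar\phi=\bar\psi$.

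For the converse, assume $\bar\phi=\bar\psi$. Given any $(p+1)$-chain $c$, we have $\partial c\in\Bnd_p(K_i)$, so $\phi(\partial c)=\bar\phi(\partial c)=\bar\psi(\partial c)=\psi(\partial c)$. By the definition of the coboundary, this gives $\dG(\phi)(c)=\dG(\psi)(c)$ for all $c$, hence $\dG(\phi)=\dG(\psi)$.

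There is no real obstacle here: the proposition is essentially a tautology once one writes out the defining relation $\dG(\phi)(c)=\phi(\partial c)$ together with the set-theoretic identity $\Bnd_p(K_i)=\mathrm{image}(\partial_{p+1})$. For this reason, the authors' decision to omit the proof is justified, and I would keep the write-up to just a couple of lines invoking these two facts.
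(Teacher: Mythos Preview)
Your proof is correct and matches the paper's approach: the paper explicitly states that the proof follows directly from definitions and omits it, which is exactly what you unfold here via $\dG(\phi)(c)=\phi(\partial c)$ and $\Bnd_p(K_i)=\mathrm{image}(\partial_{p+1})$.
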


\begin{definition}
Define a \emph{$\ldim$-pseudo-cochain} $\psi$ in $K_i$ as 
a map $\psi:\Bnd_\ldim(K_i)\to\Real$.
Define the \emph{coboundary} of $\psi$
as the $(\ldim+1)$-cochain $\psi\circ\bar{\partial}$,
where $\bar{\partial}:\Chn_{\ldim+1}(K_i)\to\Bnd_\ldim(K_i)$
is the restriction of $\partial$ to its image.
We abuse the notation slightly and denote 
the coboundary of $\psi$ as $\dG(\psi)$.
\end{definition}

\Cref{prop:pseudocob-is-cob} 
says that coboundaries of
a complex  are completely determined by the pseudo-cochains:

\begin{proposition}\label{prop:pseudocob-is-cob}
The space of coboundaries of all 
$\ldim$-pseudo-cochains of $K_i$ equals $\Bnd^{p+1}(K_i)$.
\end{proposition}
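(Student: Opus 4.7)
The plan is to prove both inclusions between $\{\delta(\psi) : \psi \text{ is a } p\text{-pseudo-cochain of } K_i\}$ and $B^{p+1}(K_i)$ by moving between pseudo-cochains and genuine cochains via restriction and extension. The key observation, which makes both directions essentially automatic, is that the ordinary coboundary $\delta(\phi) = \phi \circ \partial$ of a $p$-cochain $\phi$ only queries $\phi$ on its values on boundaries, since $\partial : C_{p+1}(K_i) \to C_p(K_i)$ factors through $B_p(K_i)$.

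For the inclusion $B^{p+1}(K_i) \subseteq \{\delta(\psi)\}$, I would start with an arbitrary element $\delta(\phi) \in B^{p+1}(K_i)$ where $\phi \in C^p(K_i)$, and let $\bar{\phi} : B_p(K_i) \to \Real$ be its restriction to the boundary group. Then $\bar{\phi}$ is a $p$-pseudo-cochain by definition, and $\delta(\bar{\phi}) = \bar{\phi} \circ \bar{\partial} = \phi \circ \partial = \delta(\phi)$, since $\partial$ and $\bar{\partial}$ agree on $C_{p+1}(K_i)$ when viewed as maps into $C_p(K_i)$.

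For the reverse inclusion, I would take an arbitrary $p$-pseudo-cochain $\psi : B_p(K_i) \to \Real$ and extend it to a $p$-cochain $\phi \in C^p(K_i)$ by choosing any linear extension to all of $C_p(K_i)$ (for instance, by picking a complementary subspace of $B_p(K_i)$ in $C_p(K_i)$ and declaring $\phi$ to vanish on it). Then $\phi|_{B_p(K_i)} = \psi$, and so $\delta(\phi) = \phi \circ \partial = \psi \circ \bar{\partial} = \delta(\psi)$, placing $\delta(\psi)$ inside $B^{p+1}(K_i)$. There is no real obstacle here; the only thing to mention is that both directions hinge on the fact that $\mathrm{im}(\partial) = B_p(K_i)$, so the values of a cochain outside $B_p(K_i)$ are irrelevant to its coboundary, which is precisely the content of Proposition \ref{prop:cobound-by-bound} restated in an existence form rather than a uniqueness form.
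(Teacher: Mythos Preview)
Your argument is correct and is exactly the direct-from-definitions verification the paper has in mind; the paper in fact omits the proof entirely, stating that it ``follows directly from definitions,'' and your restriction/extension argument is the canonical way to unpack that.
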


\begin{definition}
Given
a basis $\Rcal=\{\zG_1,\ldots,\zG_r\}$ of $\Bnd_\ldim(K_i)$
and
a column vector $\vec{v}=(\aG_1,\ldots,\aG_r)^T$,
we say that  $\vec{v}$ \emph{encodes} 
the $\ldim$-pseudo-cochain $\psi:\Bnd_\ldim(K_i)\to\Real$
w.r.t\ 
the basis $\Rcal$
if $\psi(\zG_j)=\aG_j$ for each $j$.
\end{definition}

\Cref{prop:pseudo-coc-nat} 
says that summations and scalar multiplications
on the column vectors encoding 
$\ldim$-pseudo-cochains
commute with
the coboundary operator:

\begin{proposition}\label{prop:pseudo-coc-nat}
Fix
a basis $\Rcal=\{\zG_1,\ldots,\zG_r\}$ of $\Bnd_\ldim(K_i)$.
Let
$\vec{u}$, $\vec{v}$
be two column vectors encoding 
$\ldim$-pseudo-cochains $\psi$, $\psi'$
w.r.t $\Rcal$ respectively.
For an $\aG\in\Real$,
one has that $\vec{u}+\aG\vec{v}$ encodes
the $\ldim$-pseudo-cochains $\psi+\aG\cdot\psi'$
w.r.t $\Rcal$
and $\dG(\psi+\aG\cdot\psi')=\dG(\psi)+\aG\cdot\dG(\psi')$.
\end{proposition}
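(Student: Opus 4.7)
The plan is to verify both claims of the proposition directly from the definitions; the statement is essentially a bookkeeping lemma asserting that the encoding map from column vectors to pseudo-cochains is a vector-space isomorphism that intertwines with the coboundary operator. No deep ideas are required, but care is needed to keep the two layers (pseudo-cochains as functions on $\Bnd_\ldim(K_i)$, and cochains as functions on $\Chn_\ldim(K_i)$) distinct.

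First I would address the encoding claim. Let $\vec{u}=(\aG_1,\dots,\aG_r)^T$ and $\vec{v}=(\beta_1,\dots,\beta_r)^T$, so by hypothesis $\psi(\zG_j)=\aG_j$ and $\psi'(\zG_j)=\beta_j$ for all $j$. Since pseudo-cochains are real-valued functions on the vector space $\Bnd_\ldim(K_i)$, pointwise addition and scalar multiplication are well-defined, and in particular
\[
(\psi+\aG\cdot\psi')(\zG_j)=\psi(\zG_j)+\aG\cdot\psi'(\zG_j)=\aG_j+\aG\beta_j,
\]
which is exactly the $j$-th coordinate of $\vec{u}+\aG\vec{v}$. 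Hence $\vec{u}+\aG\vec{v}$ encodes $\psi+\aG\cdot\psi'$ with respect to $\Rcal$.

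Next I would verify linearity of the coboundary. By definition, $\dG(\psi)=\psi\circ\bar\partial$ where $\bar\partial:\Chn_{\ldim+1}(K_i)\to\Bnd_\ldim(K_i)$ is the restriction of $\partial$. For any $(\ldim+1)$-chain $c$,
\[
\dG(\psi+\aG\cdot\psi')(c)=(\psi+\aG\cdot\psi')(\bar\partial c)=\psi(\bar\partial c)+\aG\cdot\psi'(\bar\partial c)=\dG(\psi)(c)+\aG\cdot\dG(\psi')(c),
\]
so $\dG(\psi+\aG\cdot\psi')=\dG(\psi)+\aG\cdot\dG(\psi')$ as cochains in $\Chn^{\ldim+1}(K_i)$.

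There is no genuine obstacle here; the only subtlety worth flagging is to note that $\bar\partial$ is well-defined precisely because its codomain was restricted to the image $\Bnd_\ldim(K_i)$, which is where pseudo-cochains live. Once that is in place, both assertions follow from the elementary fact that pre-composition and evaluation are linear operations. I would keep the proof to a few lines in the paper, since its role is merely to justify that linear algebra on the encoding vectors faithfully implements coboundary computations, which is what the algorithm in \Cref{sec:impl} relies on.
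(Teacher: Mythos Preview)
Your proof is correct and matches the paper's approach: the paper explicitly states that the proof of this proposition follows directly from the definitions and omits it. Your verification via the encoding definition and the formula $\dG(\psi)=\psi\circ\bar\partial$ is exactly the intended argument.
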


In this paper, when a basis
$\Rcal$ of $\Bnd_\ldim(K_i)$ is clear,
we abuse the notations and let $\dG(\vec{v}):=\dG(\psi)$,
where
$\vec{v}$ is a vector
encoding
a $\ldim$-pseudo-cochain $\psi$ w.r.t $\Rcal$.

\para{Representation of chains and cochains.}
We represent
chains   as matrix columns
and make no distinction between a  column vector
and the chain it represents.
Specifically, 
with $\Chn_*(K_i)\subseteq\Chn_*(K)$
and a natural basis $\{\fsimp_0,\fsimp_1,\ldots,\fsimp_{\filtcnt-1}\}$
for $\Chn_*(K)$,
each chain \mbeg c=\sum_{k=0}^{i-1}\aG_k\fsimp_k\in\Chn_*(K_i)\mend
is represented as a column vector  \mbeg(\aG_0,\aG_1,\ldots,\aG_{\filtcnt-1})^T,\mend
where $\aG_k=0$ for each $k\geq i$.
For a matrix $\mathtt{M}$, we denote its $j$-th
column as $\mathtt{M}[j]$.
For a chain $z$ as a column vector,
we  denote its $k$-th entry as $z[k]$.
Hence, the entry in the $k$-th row and $j$-th column 
in $\mathtt{M}$ is denoted as $\mathtt{M}[j][k]$.
Moreover,
we let the \emph{pivot} of $\mathtt{M}[j]$
be the index of the lowest non-zero entry in 
$\mathtt{M}[j]$
which is denoted as $\pivot(\mathtt{M}[j])$.

Cochains are similarly represented as matrix columns
by taking the dual basis \mbeg\{\hat{\fsimp}_0,\hat{\fsimp}_1,\ldots,\hat{\fsimp}_{\filtcnt-1}\}\mend
for $\Chn^*(K)$.
Notice that since chains and cochains
are  identified, 
a matrix column \mbeg(\aG_0,\aG_1,\ldots,\aG_{\filtcnt-1})^T\mend
can represent \emph{both} the chain $\sum_{k=0}^{\filtcnt-1}\aG_k\fsimp_k$ and the cochain
$\sum_{k=0}^{\filtcnt-1}\aG_k\hat{\fsimp}_k$.
Moreover, due to the identification,
 a formal sum \mbeg\sum_{k=0}^{\filtcnt-1}\aG_k\hat{\fsimp}_k\mend
can also denote a chain,
which should not cause any confusions.

\para{Matrices maintained.}
The only step in \Cref{alg:frame}
that cannot be easily implemented
is finding a new harmonic cycle born in $\har_p(K_{i+1})$
when 
$\fsimp_i$ 
is positive.
We can do this naively by computing a basis for the  kernel of $\delta$ in $K_{i+1}$, and check all   cocycles in the basis to see if they are independent of the existing set of harmonic representatives
maintained for $K_i$. This, however, requires at least $\Theta(m^3)$ per simplex insertion.

For a more efficient implementation,
before  iteration $i$ which processes the insertion 
$K_{i}\inctosp{\fsimp_i} K_{i+1}$,
we maintain the following matrices
for each homology degree $p$:
\begin{enumerate}

    \item $\Harmat^{p}$: Columns in  $\Harmat^{p}$
    are
    partial $p$-representatives 
    for all indices in $U^p$,
 which
 form a \emph{harmonic cycle basis} for $\har_p(K_i)$;
 see \Cref{sec:harmat-basis} for justification.

\item $\Rmat^p$: Columns in $\Rmat^p$ form a \emph{boundary basis} with distinct pivots for $\Bnd_p(K_i)$.
Notice that $\Rmat^p$ is indeed the ``reduced'' matrix 
in the
ordinary persistence algorithm~\cite{cohen2006vines}
with only $p$-chains
and without zero columns.

\item $\Cobmat^p$: Columns in $\Cobmat^p$ form a \emph{coboundary basis}
for $\Bnd^p(K_i)$.


\item 
$\Chnmat^p$: Columns in $\Chnmat^p$
encode $p$-pseudo-cochains in $K_i$ w.r.t\ the basis $\Rmat^p$
whose coboundaries are columns in
$\Cobmat^{p+1}$.
More specifically,
$\Chnmat^p$, $\Cobmat^{p+1}$ have the same number of columns
and for each $j$,
$\Cobmat^{p+1}[j]=\dG(\psi_j)$ 
where $\psi_j$ is the $p$-pseudo-cochain
encoded by $\Chnmat^p[j]$
w.r.t\ 
$\Rmat^p$.

\item $\Bndmat^p$: Columns in $\Bndmat^p$ are boundaries of the
coboundaries in $\Cobmat^{p+1}$,
i.e., 
 $\Bndmat^p$, $\Cobmat^{p+1}$
have the same number of columns and
$\Bndmat^p[j]=\partial(\Cobmat^{p+1}[j])$ for each $j$.
We further ensure that columns in $\Bndmat^p$ have distinct pivots
so that the reduction process in \Cref{algr:reduce} 
can be done in $O(\filtcnt^2)$ time.
\end{enumerate}

In summary,
columns in   $\Chnmat^{p-1}$, $\Cobmat^{p}$,  and  $\Bndmat^{p-1}$
have one-to-one correspondence
as follows:
\[
\Chnmat^{p-1}[j]\xmapsto{\ \dG\ }\Cobmat^{p}[j]\xmapsto{\ \partial\ }\Bndmat^{p-1}[j].\]

Other than $\Harmat^p$ which is already in \Cref{alg:frame},
the rationale for maintaining the matrices is as follows
(assuming $\fsimp_i$ is a $p$-simplex):
\begin{itemize}
    \item We use $\Bndmat^{p-1}$ to find a linear combination $\phi$
    of  $\hat{\fsimp}_i$ and cocycles in $\Cobmat^{p}$ 
    such that $\partial\phi=0$
    (see \Cref{algr:reduce}).
    The cocycle $\phi$ 
    is then a new harmonic $p$-cycle when $\fsimp_i$ is positive.
    \item 
    $\Chnmat^{p-1}$ helps maintain the coboundary basis in $\Cobmat^{p}$.
    \item
     $\Rmat^{p-1}$ provides the boundary basis on which 
    pseudo-cochains in $\Chnmat^{p-1}$ can be defined.
\end{itemize}



\para{Initial reduction.}
When processing the
insertion of 
a $p$-simplex $\fsimp_i$,
we first apply the  reduction  
in \Cref{algr:usual-reduce}
on $\Rmat^{p-1}$
(which is indeed 
performed in the ordinary persistence algorithm~\cite{EdelsbrunnerLetscherZomorodian2002}).
The following
\Cref{prop:usual-reduce} 
is  well-known~\cite{DeyWang2022,EdelsbrunnerLetscherZomorodian2002}:
\begin{proposition}
\label{prop:usual-reduce}
If 
Algorithm~\ref{algr:usual-reduce} ends with $z=0$,
then $\fsimp_i$ is positive;
otherwise, $\fsimp_i$ is negative.
\end{proposition}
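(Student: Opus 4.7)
The plan is to reduce the claim to the classical correctness of matrix reduction in ordinary persistence and use \Cref{lem:har-homo-iso} as the bridge between the harmonic-chain formulation and the usual homological one. Eckmann's theorem gives $\dim \har_p(K_j) = \dim \Hm_p(K_j)$ for every $j$ and $p$, so the definition of positive/negative in \Cref{def:pos-neg-simp} coincides with the classical notion: $\fsimp_i$ is positive exactly when its insertion raises $\dim \Hm_p$ by one, which is in turn equivalent to the condition $\partial \fsimp_i \in \Bnd_{p-1}(K_i)$ (in which case $\fsimp_i$ together with some $p$-chain whose boundary is $\partial \fsimp_i$ forms a new $p$-cycle), and negative exactly when $\partial \fsimp_i \notin \Bnd_{p-1}(K_i)$ (in which case a $(p-1)$-cycle becomes a boundary in $K_{i+1}$).

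Next I would examine the state of $z$ upon termination of Algorithm~\ref{algr:usual-reduce}. The algorithm initializes $z := \partial \fsimp_i$ and repeatedly replaces $z$ by $z + \aG\cdot\Rmat^{p-1}[k]$ for some scalar $\aG$ and some column $\Rmat^{p-1}[k]$ whose pivot matches that of $z$, halting when either $z = 0$ or no such matching pivot is available. Two invariants control the outcome: (i) $z$ always lies in the coset $\partial \fsimp_i + \Bnd_{p-1}(K_i)$, because every update adds an element of $\Bnd_{p-1}(K_i)$ to $z$; (ii) since the columns of $\Rmat^{p-1}$ form a basis of $\Bnd_{p-1}(K_i)$ with pairwise distinct pivots, every nonzero element of $\Bnd_{p-1}(K_i)$ has its pivot equal to the pivot of some column of $\Rmat^{p-1}$ (the pivot of a nontrivial linear combination of columns with distinct pivots is the largest of their pivots, hence attained).

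Combining (i) and (ii) yields the equivalence $z = 0 \Longleftrightarrow \partial \fsimp_i \in \Bnd_{p-1}(K_i)$. The forward direction is immediate from (i). For the reverse, if $\partial \fsimp_i \in \Bnd_{p-1}(K_i)$ then $z \in \Bnd_{p-1}(K_i)$ throughout the reduction, so by (ii) any nonzero intermediate $z$ admits a pivot-matching column and the reduction cannot halt with $z \neq 0$. The equivalence from the first paragraph then converts this into the positive/negative dichotomy claimed by the proposition. No substantial obstacle arises beyond carefully articulating the pivot-matching property of reduced matrices in (ii); everything else is a direct appeal to \Cref{lem:har-homo-iso} and the standard behavior of column reduction.
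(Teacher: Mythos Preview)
Your proposal is correct and is precisely the standard argument; the paper itself does not give a proof but simply cites the classical references~\cite{DeyWang2022,EdelsbrunnerLetscherZomorodian2002}, so your write-up supplies exactly the details those citations stand in for.
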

Notice that besides determining whether $\fsimp_i$ is positive or negative,
\Cref{algr:usual-reduce}
also ensures that $\Rmat^{p-1}$ represents a basis for $\Bnd_{p-1}(K_{i+1})$
in the next iteration
(details discussed later).

\begin{algorithm}[h]
\caption{Reduction based on $\Rmat^{p-1}$ for determining whether $\fsimp_i$ is positive or negative}

\begin{algorithmic}[1]
\State $z\leftarrow\partial({\sG}_i)$
\While{$z\neq0$ and $\exists j$ s.t.\ $\pivot(\Rmat^{p-1}[j])=\pivot(z)$}
\State $k\leftarrow\pivot(z)$
\State $\alpha_1\leftarrow z[k]$
\State $\alpha_2\leftarrow\Rmat^{p-1}[j][k]$
\State $z\leftarrow z-(\alpha_1/\alpha_2)\cdot\Rmat^{p-1}[j]$
\EndWhile
\end{algorithmic}
\label{algr:usual-reduce}
\end{algorithm}


\para{$\sG_i$ is positive.}
We need to find a new harmonic $p$-cycle born in $K_{i+1}$.
For this, we perform another reduction on $\Bndmat^{p-1}$ and $\Cobmat^{p}$
 in \Cref{algr:reduce}.

\begin{algorithm}[!h]
\caption{Reduction based on $\Bndmat^{p-1},\Cobmat^p$ for finding a new-born harmonic cycle 
}

\begin{algorithmic}[1]
\State $\phi\leftarrow\hat{\sG}_i$
\State $\zG\leftarrow\partial(\hat{\sG}_i)$
\While{$\zG\neq0$ and $\exists j$ s.t.\ $\pivot(\Bndmat^{p-1}[j])=\pivot(\zG)$}
\State $k\leftarrow\pivot(\zG)$
\State $\alpha_1\leftarrow\zG[k]$
\State $\alpha_2\leftarrow\Bndmat^{p-1}[j][k]$
\State $\zG\leftarrow \zG-(\alpha_1/\alpha_2)\cdot\Bndmat^{p-1}[j]$
\State $\phi\leftarrow \phi-(\alpha_1/\alpha_2)\cdot\Cobmat^{p}[j]$
\EndWhile
\end{algorithmic}
\label{algr:reduce}
\end{algorithm}

\begin{remark}
So far we have performed analogous operations on 
$\Rmat^{p-1}$ and $\Bndmat^{p-1}$.
However,
we notice that $\Rmat^{p-1}$ and $\Bndmat^{p-1}$
are in general two different matrices with different purposes.
The main purpose of $\Rmat^{p-1}$ is to form a filtered 
basis for  $({p-1})$-boundary groups
of  complexes in $\filt$.
Columns in $\Rmat^{p-1}$ are {fixed}
once added to $\Rmat^{p-1}$.
On the other hand,
 $\Bndmat^{p-1}$ is mainly for 
finding a new-born harmonic cycle when $\sG_i$ is positive.
Columns in
$\Bndmat^{p-1}$
are not fixed and can be changed in later steps
(such as in \Cref{algr:sum-bndmat}).
\end{remark}

\begin{proposition}
\label{prop:reduce}
If 
Algorithm~\ref{algr:reduce} ends with $\zG=0$,
then $\fsimp_i$ is positive
and $\phi$ is a new harmonic $p$-cycle born in $\har_p(K_{i+1})$;
otherwise, $\fsimp_i$ is negative.
\end{proposition}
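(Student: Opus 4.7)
The plan is to show that \Cref{algr:reduce} diagnoses positivity equivalently to \Cref{algr:usual-reduce} via a chain-cochain duality argument, while simultaneously constructing a harmonic witness $\phi$ when $\sG_i$ is positive. First I would establish two loop invariants: (i) $\zG = \partial(\phi)$, and (ii) $\phi$ is a $p$-cocycle in $K_{i+1}$. Invariant (i) is preserved because each iteration pairs $\Cobmat^p[j]$ with its boundary $\Bndmat^{p-1}[j]$ using the same coefficient, and linearity of $\partial$ propagates the identity; it holds initially since $\zG = \partial(\hat{\sG}_i) = \partial(\phi)$. Invariant (ii) holds initially because $\sG_i$ has no cofaces in $K_{i+1}$ (it is the latest insertion), making $\hat{\sG}_i$ a cocycle there; and it is preserved because each $\Cobmat^p[j]$, being a coboundary in $K_i$, remains a cocycle in $K_{i+1}$ since no new $(p+1)$-simplex is inserted. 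Moreover, since every column of $\Cobmat^p$ is supported on $p$-simplices of $K_i$, the $\sG_i$-coefficient of $\phi$ stays equal to $1$ throughout.

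Second, I would show that the column span of $\Bndmat^{p-1}$ equals $\Bnd_{p-1}(K_i)$, the same subspace spanned by $\Rmat^{p-1}$. Under the natural inner product identifying $\Chn_*(K_i)$ with $\Chn^*(K_i)$, the coboundary $\delta$ corresponds to the adjoint $\partial^T$, which gives $\Bnd^p(K_i) = \Zyc_p(K_i)^\perp$ and the orthogonal decomposition $\Chn_p(K_i) = \Zyc_p(K_i) \oplus \Bnd^p(K_i)$. Applying $\partial$, which annihilates $\Zyc_p(K_i)$, yields $\partial(\Bnd^p(K_i)) = \partial(\Chn_p(K_i)) = \Bnd_{p-1}(K_i)$. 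Since columns of $\Cobmat^p$ form a basis of $\Bnd^p(K_i)$ and columns of $\Bndmat^{p-1}$ are their boundaries, $\Bndmat^{p-1}$ spans $\Bnd_{p-1}(K_i)$.

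Finally, because both $\Rmat^{p-1}$ and $\Bndmat^{p-1}$ have distinct pivots and span $\Bnd_{p-1}(K_i)$, the standard fact that a pivot-based reduction terminates at $0$ iff the target lies in the column span implies that $\partial(\sG_i)$ reduces to $0$ via one matrix iff it reduces via the other; combined with \Cref{prop:usual-reduce}, this yields the positive/negative dichotomy. When $\zG = 0$ at termination, invariants (i) and (ii) make $\phi$ simultaneously a cycle and a cocycle in $K_{i+1}$, hence harmonic, and the preserved $\sG_i$-coefficient of $1$ forces $\phi \notin \Chn_p(K_i)$, ensuring $\phi \in \har_p(K_{i+1}) \setminus \har_p(K_i)$. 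The main obstacle is the duality identity $\partial(\Bnd^p(K_i)) = \Bnd_{p-1}(K_i)$, since everything else reduces to routine bookkeeping of the invariants through the reduction.
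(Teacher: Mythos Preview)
Your proof is correct and the invariant bookkeeping matches the paper's, but your route to the ``otherwise'' case is genuinely different. The paper argues by contradiction: assuming $\sG_i$ is positive yet $\zG\neq 0$, it takes an abstract new harmonic cycle $\psi\in\har_p(K_{i+1})$, expands it in the basis $\Harmat^p\cup\Cobmat^p\cup\{\hat{\sG}_i\}$ of $\Zyc^p(K_{i+1})$, applies $\partial$, and reads off that $\partial(\hat{\sG}_i)$ lies in the column span of $\Bndmat^{p-1}$, forcing the reduction to reach $0$. You instead prove the structural identity $\partial(\Bnd^p(K_i))=\Bnd_{p-1}(K_i)$ via the orthogonal splitting $\Chn_p(K_i)=\Zyc_p(K_i)\oplus\Bnd^p(K_i)$, conclude that $\Bndmat^{p-1}$ and $\Rmat^{p-1}$ span the same space, and then piggyback directly on \Cref{prop:usual-reduce}. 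Your argument buys a clean explanation of \emph{why} the two reductions agree (same target space, distinct pivots), and as a bonus shows that $\partial$ restricts to an isomorphism $\Bnd^p(K_i)\to\Bnd_{p-1}(K_i)$; the paper's argument avoids the duality computation but needs the cocycle-basis decomposition $\Zyc^p(K_i)=\har_p(K_i)\oplus\Bnd^p(K_i)$ and appeals to \Cref{thm:harmonic-inclusion} rather than \Cref{prop:usual-reduce} for the forward direction.
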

\begin{proof}

Since $\hat{\fsimp}_i$ is a new $p$-cocycle in $K_{i+1}$,
we have that $\phi$ is always a new $p$-cocycle in $K_{i+1}$
and  $\zG=\partial(\phi)$ at the end of each iteration in Algorithm~\ref{algr:reduce}.
If Algorithm~\ref{algr:reduce}
ends with $\zG=0$, then 
we have $\partial(\phi)=0$.
This implies that $\phi$ is a new harmonic $p$-cycle born in $\har_p(K_{i+1})$.
By \Cref{thm:harmonic-inclusion},
$\fsimp_i$ is positive.

Now consider the case that \Cref{algr:reduce} ends with $\zG\neq 0$.
For  contradiction, suppose instead that $\fsimp_i$ is positive,
and let $\psi$ be a new harmonic $p$-cycle born in $\har_p(K_{i+1})$.
Since columns in $\Harmat^p$ and $\Cobmat^p$ form a basis of $\Zyc^p(K_i)$
and $\hat{\fsimp}_i$ is a new $p$-cocycle in $K_{i+1}$,
we have that $\Harmat^p$, $\Cobmat^p$, and $\hat{\fsimp}_i$ altogether form a basis of $\Zyc^p(K_{i+1})$.
Hence, 
we have
\mbeg\psi=\aG\cdot\hat{\fsimp}_i+\sum_j\mu_j\cdot\Harmat^p[j]+\sum_k\lG_k\cdot\Cobmat^p[k],\mend
where $\aG\neq 0$ because 
$\psi$ is a new  $p$-cocycle in $K_{i+1}$.
So we have \begin{align*}
0=\partial(\psi)&=\aG\cdot\partial(\hat{\fsimp}_i)+\sum_j\mu_j\cdot\partial(\Harmat^p[j])+\sum_k\lG_k\cdot\partial(\Cobmat^p[k])\\
&=\aG\cdot\partial(\hat{\fsimp}_i)+0+\sum_k\lG_k\cdot\Bndmat^{p-1}[k].
\end{align*}
Now \mbeg\partial(\hat{\fsimp}_i)=\sum_k(-\lG_k/\aG)\cdot\Bndmat^{p-1}[j],\mend
i.e., $\partial(\hat{\fsimp}_i)$ is a linear combination of
columns of $\Bndmat^{p-1}$.
Since the columns of $\Bndmat^{p-1}$ have distinct pivots,
the reduction in Algorithm~\ref{algr:reduce} must end with $\zG=0$,
which is a contradiction.
\end{proof}

By \Cref{prop:reduce},
\Cref{algr:reduce}
must end with $\zG=0$
and
we add $\phi$ as new column to $\Harmat^p$.
Now
consider 
the pseudo-cochain \mbeg\psi_j:\Bnd_{p-1}(K_i)\to\Real\mend in $K_i$
that a column $\Chnmat^{p-1}[j]$ encodes.
Since $\Bnd_{p-1}(K_i)=\Bnd_{p-1}(K_{i+1})$~\cite{DeyWang2022,EdelsbrunnerLetscherZomorodian2002},
we can directly treat $\psi_j$ as a pseudo-cochain   in $K_{i+1}$.
However, 
while we still have 
\mbeg\dG(\psi_j)(\fsimp)=\Cobmat^p[j](\fsimp)\text{ for each $p$-simplex }\fsimp\in K_i,\mend
we may have 
\mbeg\dG(\psi_j)(\fsimp_i)=\psi_j(\partial\fsimp_i)\neq 0,\mend
so that \mbeg\Cobmat^p[j]\neq\dG(\Chnmat^{p-1}[j])\mend in $K_{i+1}$.
To deal with this,
we first allocate two new matrices 
$\bar{\Cobmat}^p$,
$\bar{\Bndmat}^{p-1}$
and let
\mbeg\bar{\Cobmat}^p[j]=\dG(\Chnmat^{p-1}[j]):=\dG(\psi_j)\text{, }\bar{\Bndmat}^{p-1}[j]=\partial(\bar{\Cobmat}^p[j])
\text{ for each $j$,}\mend
where 
all the chains and (pseudo-)cochains
are now defined in $K_{i+1}$.
To get $\bar{\Cobmat}^p[j]$,
we only need to know the value of $\psi_j(\partial\fsimp_i)$.
From the reduction in \Cref{algr:usual-reduce},
we know how to express $\partial\fsimp_i$ as a formal sum of columns in $\Rmat^{p-1}$,
i.e., \mbeg\partial\fsimp_i=\sum_{k}\mu_k\Rmat^{p-1}[k],\mend
where each non-zero $\mu_k$ is given by the ``$\aG_1/\aG_2$'' in \Cref{algr:usual-reduce}.
Then,
\mbeg\psi_j(\partial\fsimp_i)=\psi_j(\sum_{k}\mu_k\Rmat^{p-1}[k])
=\sum_{k}\mu_k\psi_j(\Rmat^{p-1}[k])
=\sum_{k}\mu_k\Chnmat[j][k].\mend

Due to the previous update, columns in $\bar{\Bndmat}^{p-1}$ may not have distinct pivots.
We then 
run \Cref{algr:sum-bndmat}
to make columns in $\bar{\Bndmat}^{p-1}$ have distinct pivots again.
To see how \Cref{algr:sum-bndmat} works,
notice that before executing the for loop in line~\ref{line:sum-bndmat-for}, we have
\mbeg\bar{\Cobmat}^p[\lG_j]={\Cobmat}^p[\lG_j]+\aG_j\hat{\fsimp}_i
\text{ for }j\in\{1,\ldots,\ell\}.\mend
This means that
\mbeg\bar{\Bndmat}^{p-1}[\lG_j]={\Bndmat}^{p-1}[\lG_j]+\aG_j\partial(\hat{\fsimp}_i).\mend
So for $j>1$, we have
\begin{align*}
&\bar{\Bndmat}^{p-1}[\lG_j]-(\aG_j/\aG_1)\cdot\bar{\Bndmat}^{p-1}[\lG_1]\\
=&{\Bndmat}^{p-1}[\lG_j]+\aG_j\partial(\hat{\fsimp}_i)
-(\aG_j/\aG_1)({\Bndmat}^{p-1}[\lG_1]+\aG_1\partial(\hat{\fsimp}_i))\\
=&{\Bndmat}^{p-1}[\lG_j]-(\aG_j/\aG_1)\cdot{\Bndmat}^{p-1}[\lG_1].
\end{align*}
Since $\pivot({\Bndmat}^{p-1}[\lG_1])<\pivot({\Bndmat}^{p-1}[\lG_j])$,
we have \mbeg\pivot(\bar{\Bndmat}^{p-1}[\lG_j])=\pivot({\Bndmat}^{p-1}[\lG_j])\mend
after executing the for loop.
Now we have \mbeg\bar{\Bndmat}^{p-1}[\lG]={\Bndmat}^{p-1}[\lG]\text{ for }\lG\not\in\{\lG_1,\ldots,\lG_\ell\},\mend
meaning that
before executing the while loop
in line~\ref{line:sum-bndmat-while},
only pivots of the two columns indexed by $\lG_1$ may differ in
$\bar{\Bndmat}^{p-1}$ and ${\Bndmat}^{p-1}$.
This means that  at most two columns in
$\bar{\Bndmat}^{p-1}$ may have the same pivot
before executing the while loop.
We then have that,
in each iteration of the while loop,
at most two columns in
$\bar{\Bndmat}^{p-1}$ have the same pivot.
Hence, columns in $\bar{\Bndmat}^{p-1}$
have distinct pivots after finishing the loop.
Notice that the pivot $k$ in line~\ref{line:sum-bndmat-k}
decreases in each iteration,
meaning that  the while loop executes for no more than $i$ iterations.
Hence, \Cref{algr:sum-bndmat} runs in $O(\filtcnt^2)$ time.
Notice that allocating the two new matrices
$\bar{\Cobmat}^p$,
 $\bar{\Bndmat}^{p-1}$
are purely for clearly presenting and justifying the operations
in \Cref{algr:sum-bndmat}.
To implement this in computer, 
the operations can be directly performed  on 
${\Cobmat}^p$
and ${\Bndmat}^{p-1}$.

\begin{algorithm}[h]
\caption{Making columns in $\bar{\Bndmat}^{p-1}$ have distinct pivots}

\begin{algorithmic}[1]
\State let $\bar{\Cobmat}^p[\lG_1],\ldots,\bar{\Cobmat}^p[\lG_\ell]$ be all the columns in
$\bar{\Cobmat}^p$ s.t.\ $\bar{\Cobmat}^p[\lG_j](\fsimp_i)\neq 0$
for each $j$
\State reorder $\lG_1,\ldots,\lG_\ell$ s.t.\ $\pivot({\Bndmat}^{p-1}[\lG_1])<\pivot({\Bndmat}^{p-1}[\lG_j])$
for each $j>1$
\State $\aG_1\leftarrow\bar{\Cobmat}^p[\lG_1](\fsimp_i)$
\For{$j=2,\ldots,\ell$}\label{line:sum-bndmat-for}
\State $\aG_j\leftarrow\bar{\Cobmat}^p[\lG_j](\fsimp_i)$
\State ${\Chnmat}^{p-1}[\lG_j]\leftarrow{\Chnmat}^{p-1}[\lG_j]-(\aG_j/\aG_1)\cdot{\Chnmat}^{p-1}[\lG_1]$
\State $\bar{\Cobmat}^p[\lG_j]\leftarrow\bar{\Cobmat}^p[\lG_j]-(\aG_j/\aG_1)\cdot\bar{\Cobmat}^p[\lG_1]$
\State $\bar{\Bndmat}^{p-1}[\lG_j]\leftarrow\bar{\Bndmat}^{p-1}[\lG_j]-(\aG_j/\aG_1)\cdot\bar{\Bndmat}^{p-1}[\lG_1]$
\EndFor
\While{exist two columns $\bar{\Bndmat}^{p-1}[\lG]$, $\bar{\Bndmat}^{p-1}[\mG]$ with same pivot}\label{line:sum-bndmat-while}
\State $k\leftarrow\pivot(\Bndmat^{p-1}[\lG])$\label{line:sum-bndmat-k}
\State $\gamma_1\leftarrow\Bndmat^{p-1}[\lG][k]$
\State $\gamma_2\leftarrow\Bndmat^{p-1}[\mG][k]$
\State ${\Chnmat}^{p-1}[\lG]\leftarrow{\Chnmat}^{p-1}[\lG]-(\gamma_1/\gamma_2)\cdot{\Chnmat}^{p-1}[\mG]$
\State $\bar{\Cobmat}^p[\lG]\leftarrow\bar{\Cobmat}^p[\lG]-(\gamma_1/\gamma_2)\cdot\bar{\Cobmat}^p[\mG]$
\State $\bar{\Bndmat}^{p-1}[\lG]\leftarrow\bar{\Bndmat}^{p-1}[\lG]-(\gamma_1/\gamma_2)\cdot\bar{\Bndmat}^{p-1}[\mG]$
\EndWhile
\State{${\Cobmat}^p\leftarrow\bar{\Cobmat}^p$}
\State{${\Bndmat}^{p-1}\leftarrow\bar{\Bndmat}^{p-1}$}
\end{algorithmic}
\label{algr:sum-bndmat}
\end{algorithm}

Finally, we have that
the other matrices do not need to be changed
by \Cref{prop:dim-chng}.



\para{$\sG_i$ is negative.}
First update $\Harmat^{p-1}$ 
according to \Cref{alg:frame}.
Since $\fsimp_i$ is negative, 
we have $z\neq 0$ at the end of \Cref{algr:usual-reduce}.
We then add $z$ to $\Rmat^{p-1}$ 
to form a basis for $\Bnd_{p-1}(K_{i+1})$
(see~\cite{DeyWang2022,EdelsbrunnerLetscherZomorodian2002}).
For ease of presentation, 
let $\Rmat^{p-1}$ still denote 
the matrix
before $z$ is added
and let $\bar{\Rmat}^{p-1}=\Rmat^{p-1}\cup\{z\}$.
We extend each column 
of $\Chnmat^{p-1}$
by appending 0
so that columns of $\Chnmat^{p-1}$
encode $(p-1)$-pseudo-cochains in $K_{i+1}$
w.r.t\ $\bar{\Rmat}^{p-1}$.
Notice that we may also have
$\dG(\Chnmat^{p-1}[j])(\fsimp_i)\neq 0$
for a $j$.
Hence,
we update 
$\Cobmat^p$ and $\Bndmat^{p-1}$
as in the previous case where $\fsimp_i$ is positive,
including performing \Cref{algr:sum-bndmat}.
After this, 
${\Cobmat}^p[j]=\dG(\Chnmat^{p-1}[j])$, ${\Bndmat}^{p-1}[j]=\partial({\Cobmat}^p[j])$
in $K_{i+1}$
for each $j$
and columns in ${\Bndmat}^{p-1}$ have distinct pivots.

Since $\dim(\Bnd^{p}(K_{i+1}))=\dim(\Bnd^{p}(K_{i}))+1$
by \Cref{prop:dim-chng},
we need to add a new column to $\Cobmat^p$ 
to form a basis for $\Bnd^{p}(K_{i+1})$.
For this,
we first add the column 
\mbeg\vec{v}=(0,\ldots,0,1)^T\mend
to $\Chnmat^{p-1}$
and observe the following:
\begin{proposition}\label{prop:sG-i-new-cob}
Let $\vec{v}$ be as defined above.
Then,
$\hat{\fsimp}_i=\dG(\vec{v})$.
\end{proposition}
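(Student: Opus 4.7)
The plan is to unfold the definition of $\dG(\vec{v})$ and evaluate it on every $p$-simplex of $K_{i+1}$, checking equality with $\hat{\fsimp}_i$ one simplex at a time. Let $\psi$ denote the $(p-1)$-pseudo-cochain on $K_{i+1}$ encoded by $\vec{v}$ with respect to the extended basis $\bar{\Rmat}^{p-1}=\Rmat^{p-1}\cup\{z\}$ of $\Bnd_{p-1}(K_{i+1})$. Since $\vec{v}=(0,\ldots,0,1)^T$ and $z$ is the freshly appended last column, $\psi$ is characterized by $\psi(z)=1$ and $\psi(\Rmat^{p-1}[k])=0$ for all $k$. By definition, $\dG(\vec v)(\tau)=\dG(\psi)(\tau)=\psi(\partial\tau)$ for every $p$-simplex $\tau\in K_{i+1}$.

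First I would handle the simplices $\tau\neq\fsimp_i$, i.e., the $p$-simplices already present in $K_i$. For such $\tau$ one has $\partial\tau\in\Bnd_{p-1}(K_i)$, and since $\Rmat^{p-1}$ is a basis for $\Bnd_{p-1}(K_i)$, we can write $\partial\tau=\sum_k\lG_k\Rmat^{p-1}[k]$. Linearity of $\psi$ then gives $\psi(\partial\tau)=\sum_k\lG_k\cdot 0=0$, matching $\hat{\fsimp}_i(\tau)=0$.

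Next I would handle $\tau=\fsimp_i$. The point is to trace what \Cref{algr:usual-reduce} actually produced: starting from $z\leftarrow\partial(\fsimp_i)$, the algorithm subtracts multiples of columns of $\Rmat^{p-1}$, so at termination $z=\partial(\fsimp_i)-\sum_k\mu_k\Rmat^{p-1}[k]$ for certain coefficients $\mu_k$. Rearranging, $\partial(\fsimp_i)=z+\sum_k\mu_k\Rmat^{p-1}[k]$, and applying $\psi$ yields $\psi(\partial\fsimp_i)=\psi(z)+\sum_k\mu_k\psi(\Rmat^{p-1}[k])=1$, which equals $\hat{\fsimp}_i(\fsimp_i)=1$.

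Combining the two cases gives $\dG(\vec v)(\tau)=\hat{\fsimp}_i(\tau)$ for every $p$-simplex $\tau\in K_{i+1}$, whence $\dG(\vec v)=\hat{\fsimp}_i$ as elements of $\Chn^p(K_{i+1})$. The only conceptual subtlety is making sure we evaluate $\psi$ in $K_{i+1}$ using the extended basis $\bar{\Rmat}^{p-1}$ (not in $K_i$, where $z$ was not yet a basis element), and that the coefficients $\mu_k$ from the reduction are precisely what expresses $\partial\fsimp_i$ in this extended basis. The rest is a direct computation, so I do not anticipate any real obstacle.
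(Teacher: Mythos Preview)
Your proposal is correct and mirrors the paper's proof almost exactly: both let $\psi$ be the pseudo-cochain encoded by $\vec{v}$ with respect to $\bar{\Rmat}^{p-1}$, evaluate $\dG(\psi)$ on an arbitrary $p$-simplex of $K_i$ to get $0$, and use the reduction in \Cref{algr:usual-reduce} to write $\partial\fsimp_i$ as $z$ plus a combination of columns of $\Rmat^{p-1}$, yielding $\psi(\partial\fsimp_i)=1$.
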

\begin{proof}
Let $r$ be the length of $\vec{v}$
and let $\psi$ be the pseudo-cochain encoded by $\vec{v}$
w.r.t 
$\bar{\Rmat}^{p-1}$.
For a $p$-simplex $\fsimp\in K_i$,
we have 
\mbeg
\dG(\vec{v})(\fsimp)=\psi(\partial\fsimp)=\psi(\sum_{k=1}^{r-1}\mu_k{\Rmat}^{p-1}[k])
=\sum_{k=1}^{r-1}\mu_k\psi(\Rmat^{p-1}[k])
=0.
\mend
Also, by the reduction in \Cref{algr:usual-reduce},
we have \mbeg\partial\fsimp_i=\sum_{k=1}^{r-1}\lG_k{\Rmat}^{p-1}[k]+z.\mend
Then,
\mbeg\dG(\vec{v})(\fsimp_i)=\psi(\partial\fsimp_i)=\psi(\sum_{k=1}^{r-1}\lG_k{\Rmat}^{p-1}[k]+z)
=\psi(z)=1.\qedhere\mend
\end{proof}

\begin{proposition}\label{prop:add-to-cob-mat}
The columns in ${\Cobmat}^p$ 
along with 
$\hat{\fsimp}_i$ 
form a basis for $\Bnd^p(K_{i+1})$.
\end{proposition}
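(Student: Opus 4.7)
The plan is to combine a dimension count with a restriction-to-$K_i$ argument. By \Cref{prop:dim-chng}, $\dim(\Bnd^p(K_{i+1}))=\dim(\Bnd^p(K_i))+1$, and by the maintenance invariant just before processing $\fsimp_i$, $\Cobmat^p$ has $\dim(\Bnd^p(K_i))$ columns forming a basis of $\Bnd^p(K_i)$. After the update step (appending $0$ to each column of $\Chnmat^{p-1}$ to extend over $\bar{\Rmat}^{p-1}=\Rmat^{p-1}\cup\{z\}$, and then running \Cref{algr:sum-bndmat}), each column still satisfies $\Cobmat^p[j]=\dG(\Chnmat^{p-1}[j])$ in $K_{i+1}$ by \Cref{prop:pseudo-coc-nat}, so it lies in $\Bnd^p(K_{i+1})$. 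Together with $\hat{\fsimp}_i$, which belongs to $\Bnd^p(K_{i+1})$ by \Cref{prop:sG-i-new-cob}, this gives $\dim(\Bnd^p(K_{i+1}))$ elements of $\Bnd^p(K_{i+1})$, so it suffices to establish linear independence.

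To prove linear independence I would introduce the restriction map $\pi:\Chn^p(K_{i+1})\to\Chn^p(K_i)$ that forgets the $\hat{\fsimp}_i$-coordinate; its kernel is spanned by $\hat{\fsimp}_i$. The key claim is that $\pi$ is injective on $\operatorname{span}(\Cobmat^p)$ with image exactly $\Bnd^p(K_i)$. For the image, the appended $0$ on the $z$-coordinate of every $\Chnmat^{p-1}[j]$ is preserved by the column operations of \Cref{algr:sum-bndmat}, so $\pi(\Cobmat^p[j])$ evaluated on any $p$-simplex $\fsimp\in K_i$ equals the current pseudo-cochain paired with $\partial\fsimp\in\Bnd_{p-1}(K_i)$, i.e.\ it is the coboundary in $K_i$ of the restricted $(p-1)$-pseudo-cochain on $\Rmat^{p-1}$. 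Because those column operations on $\Chnmat^{p-1}$ are invertible, the resulting coboundaries still span the same subspace as the original ones, namely $\Bnd^p(K_i)$. Comparing dimensions forces $\pi$ to be injective on $\operatorname{span}(\Cobmat^p)$. Any relation $\sum_j c_j\Cobmat^p[j]+c\,\hat{\fsimp}_i=0$ then maps under $\pi$ to $\sum_j c_j\pi(\Cobmat^p[j])=0$, so all $c_j=0$ by injectivity, and finally $c=0$ since $\hat{\fsimp}_i\neq 0$. A dimension match upgrades the independent set of size $\dim(\Bnd^p(K_i))+1$ to a basis of $\Bnd^p(K_{i+1})$.

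The main obstacle I expect is not the linear algebra but the bookkeeping: one must verify carefully that, after all the operations triggered by the insertion of $\fsimp_i$ (the append-zero step and the passes of \Cref{algr:sum-bndmat}), the two invariants $\Cobmat^p[j]=\dG(\Chnmat^{p-1}[j])$ and $\Chnmat^{p-1}[j](z)=0$ both persist in $K_{i+1}$. The first follows from \Cref{prop:pseudo-coc-nat} together with the parallel updates imposed on $\Cobmat^p$; the second holds because every modification to $\Chnmat^{p-1}$ in \Cref{algr:sum-bndmat} is a linear combination of existing columns, each with a $0$ last entry. With those invariants in hand the restriction-map argument proceeds cleanly.
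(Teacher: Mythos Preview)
Your proposal is correct and follows essentially the same approach as the paper: both arguments use the restriction map $\pi=\iota^\sharp:\Chn^p(K_{i+1})\to\Chn^p(K_i)$ (which kills $\hat{\fsimp}_i$), show that the columns of $\Cobmat^p$ restrict to a basis of $\Bnd^p(K_i)$, deduce linear independence from this together with $\hat{\fsimp}_i\in\ker\pi$, and finish with the dimension count from \Cref{prop:dim-chng}. The only organisational difference is that the paper first proves independence for the pre-\Cref{algr:sum-bndmat} columns $\bar{\Cobmat}^p[j]$ (where $\iota^\sharp(\bar{\Cobmat}^p[j])$ literally equals the old basis element of $\Bnd^p(K_i)$) and then notes that \Cref{algr:sum-bndmat} preserves the linear span, whereas you work post-\Cref{algr:sum-bndmat} and trace back through the invertible column operations; your extra invariant $\Chnmat^{p-1}[j](z)=0$ is true but not actually needed, since $\partial\fsimp$ for $\fsimp\in K_i$ has no $z$-component regardless.
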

\begin{proof}
Abuse the notations slightly by
letting
${\Cobmat}^p$
and $\bar{\Cobmat}^p$ denote  versions of the two matrices before executing \Cref{algr:sum-bndmat},
i.e., columns in ${\Cobmat}^p$ are still cochains in $K_i$ 
forming a basis of $\Bnd^p(K_i)$
and
each $\bar{\Cobmat}^p[j]$ (a cochain in $K_{i+1}$)
is simply derived by taking the coboundary 
of $\Chnmat^{p-1}[j]$ in  $K_{i+1}$.
We first show that columns of
$\bar{\Cobmat}^p[j]$ and $\hat{\fsimp}_i$  are linearly 
independent.
Suppose that \mbeg\sum_j\aG_j\bar{\Cobmat}^p[j]+\gamma\hat{\fsimp}_i=0.\mend
Let $\iG^\sharp:\Chn^p(K_{i+1})\to\Chn^p(K_{i})$
be the cochain map induced by inclusion.
We have that \mbeg\sum_j\aG_j{\Cobmat}^p[j]=\sum_j\aG_j\iG^\sharp(\bar{\Cobmat}^p[j])
=\iG^\sharp(\sum_j\aG_j\bar{\Cobmat}^p[j])
=\iG^\sharp(-\gamma\hat{\fsimp}_i)
=-\gamma\iG^\sharp(\hat{\fsimp}_i)=0,\mend
meaning that each $\aG_j$ is zero.
This also means that $\gamma=0$  because $\hat{\fsimp}_i\neq 0$.
We then have that
columns of
$\bar{\Cobmat}^p[j]$ along with $\hat{\fsimp}_i$
form a basis of $\Bnd^{p}(K_{i+1})$
because $\hat{\fsimp}_i\in\Bnd^{p}(K_{i+1})$
(\Cref{prop:sG-i-new-cob}) and $\dim(\Bnd^{p}(K_{i+1}))=\dim(\Bnd^{p}(K_{i}))+1$ (\Cref{prop:dim-chng}).
\Cref{prop:add-to-cob-mat}
then follows because   \Cref{algr:sum-bndmat}
does not change the linear span of matrix columns.
\end{proof}

By \Cref{prop:sG-i-new-cob,prop:add-to-cob-mat},
we add $\hat{\fsimp}_i$
to $\Cobmat^p$
and add $\partial(\hat{\fsimp}_i)$
to $\Bndmat^{p-1}$.
Notice that there may be at most two columns having the same pivot
in $\Bndmat^{p-1}$ after this.
For this, we apply a procedure similar to the while loop in 
\Cref{algr:sum-bndmat}
to make $\Bndmat^{p-1}$ have distinct pivots again.

By \Cref{prop:dim-chng},
the other matrices do not need to be changed.

\bigskip

We now  conclude the following:
\begin{theorem}
    The \hcb{} of $\filt$ can be computed in $O(\filtcnt^3)$ time.
\end{theorem}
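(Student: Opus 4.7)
The plan is to show that each of the $\filtcnt$ iterations of the main loop in \Cref{alg:frame} can be implemented to run in $O(\filtcnt^2)$ time, so that summing over all simplex insertions yields the $O(\filtcnt^3)$ bound. All matrices maintained throughout the algorithm ($\Harmat^*$, $\Rmat^*$, $\Cobmat^*$, $\Chnmat^*$, $\Bndmat^*$) have at most $\filtcnt$ rows and at most $\filtcnt$ columns, since rows are indexed by simplices and columns are indexed either by harmonic/boundary/coboundary basis elements, whose count is bounded by the number of simplices processed so far. Consequently, a single column addition, scaling, or pivot query costs $O(\filtcnt)$ time, and it suffices to show that only $O(\filtcnt)$ elementary column operations are executed per insertion.

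I would first analyze the two reductions. \Cref{algr:usual-reduce} applied to $\Rmat^{p-1}$, and \Cref{algr:reduce} applied to $(\Bndmat^{p-1},\Cobmat^{p})$ in the positive case, both exploit the invariant that the columns of $\Rmat^{p-1}$ and $\Bndmat^{p-1}$ have distinct pivots. Each iteration of the while loop strictly decreases the pivot of the running chain $z$ (resp.~$\zG$), so each loop terminates in at most $O(\filtcnt)$ iterations, each doing one $O(\filtcnt)$-time column addition, for a combined cost of $O(\filtcnt^2)$. When $\fsimp_i$ is positive, the resulting harmonic cycle $\phi$ is appended to $\Harmat^p$ at an additional $O(\filtcnt)$ cost.

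Next, I would handle the maintenance step embodied in \Cref{algr:sum-bndmat}, which restores the distinct-pivot invariant on $\Bndmat^{p-1}$ after re-interpreting $\Cobmat^p$, $\Chnmat^{p-1}$, and $\Bndmat^{p-1}$ in $K_{i+1}$. As argued in the text following the algorithm, its for loop executes at most $\ell\leq\filtcnt$ iterations, and the subsequent while loop runs for at most $\filtcnt$ iterations because the pivot index $k$ in line~\ref{line:sum-bndmat-k} strictly decreases, using the key invariant that at any moment at most two columns share a pivot. Each iteration performs a constant number of $O(\filtcnt)$-time column operations, giving $O(\filtcnt^2)$ total. The same cost applies when $\fsimp_i$ is negative: updating $\Harmat^{p-1}$ as in \Cref{alg:frame} requires scanning the at most $\filtcnt$ partial representatives in $U^{p-1}$, each checked and rescaled in $O(\filtcnt)$ time; and extending $\Rmat^{p-1}$, $\Chnmat^{p-1}$, $\Cobmat^p$, and $\Bndmat^{p-1}$ by the appropriate new columns followed by another pivot-restoration pass adds $O(\filtcnt^2)$ work.

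The main obstacle will be purely bookkeeping: I must verify that every invariant used to bound the reductions is re-established at the end of each iteration, so the argument can be reapplied in the next one. Concretely, $\Harmat^p$ must still encode a basis of $\har_p(K_{i+1})$ of partial representatives indexed by $U^p$ (\Cref{sec:harmat-basis}), $\Rmat^{p-1}$ and $\Bndmat^{p-1}$ must remain bases with distinct pivots of $\Bnd_{p-1}(K_{i+1})$ and $\Bnd^{p}(K_{i+1})$ respectively, $\Cobmat^{p}$ must remain a coboundary basis, and the pseudo-cochain correspondence $\Chnmat^{p-1}[j]\xmapsto{\dG}\Cobmat^{p}[j]\xmapsto{\partial}\Bndmat^{p-1}[j]$ must be preserved. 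Each of these is guaranteed by the propositions already established in \Cref{sec:impl}. With all invariants in place and each per-insertion step shown to be $O(\filtcnt^2)$, summing over the $\filtcnt$ insertions delivers the claimed $O(\filtcnt^3)$ overall running time.
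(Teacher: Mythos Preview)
Your proposal is correct and takes essentially the same approach as the paper: the paper's proof is a two-sentence summary (``each iteration takes no more than $O(\filtcnt^2)$ time, correctness follows from the operations detailed in this section''), and you have simply unpacked that summary by walking through the per-iteration cost of \Cref{algr:usual-reduce}, \Cref{algr:reduce}, and \Cref{algr:sum-bndmat} and listing the invariants from \Cref{sec:impl} that carry across iterations. One small slip: $\Bndmat^{p-1}$ is not a basis of $\Bnd^{p}(K_{i+1})$ but rather the matrix of $(p{-}1)$-chains $\partial(\Cobmat^{p}[j])$, whose only maintained invariant is distinct pivots; this does not affect your complexity argument.
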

\begin{proof}
Each iteration of the algorithm takes no more than $O(\filtcnt^2)$ time
and so
the overall complexity 
is $O(\filtcnt^3)$.
Correctness of our implementation 
follows from 
correctness of the operations detailed in this section.
\end{proof}

\section{Sublevel Set Harmonic Chain Barcode and Its Stability}
\label{sec:stability}

In this section, we introduce the notion of \emph{sublevel set harmonic chain barcodes} and present our stability results based on this notion.
Our stability proof makes use of the work on the algebraic stability of block decomposable $\Rspace^2$-modules~\cite{bjerkevik2021stability,botnan2018algebraic}. 
In brief, we ``lift'' the 1-parameter harmonic zigzag module to an $\Rspace^2$-indexed 2-parameter persistence module which is block decomposable. 
We then show that in the typical setting of sublevel set filtrations, there exists an interleaving between the lifted modules. Our main work here is to construct a concrete extension (see~\cref{subsec:extension}) and an interleaving between extensions to $\Rspace^2$-modules (see~\cref{subsec:stability}) realizing the category-theoretic concepts employed in~\cite{bjerkevik2021stability,botnan2018algebraic}. From the Isometry Theorem~\cite{bauer2014induced,botnan2018algebraic}, we then deduce the stability of the sublevel set {\hcb}.
We also address the disparity between the natural {\hcb} of a sublevel set filtration (which consists of closed-open intervals\footnote{The closed-open intervals over the real values we have here are different from the closed-open intervals of~\cite{bjerkevik2021stability,botnan2018algebraic} over the integers. Indeed, when considering only integer indices, we only have closed-closed intervals in our setting. The extension of our closed-open interval would be similar to a closed block of~\cite{botnan2018algebraic} but with the right vertical side open.}) and common conventions of zigzag persistence (which work with closed-closed intervals in our setting).  

\subsection{Interleaving and Stability for Ordinary Persistence} \label{subsec:ordinaryinterleaving}

Let $F$ and $G$ be two filtrations over the complexes $K$ and $K'$ respectively, and let the maps connecting the complexes in $F$ and $G$ be $f^{s,t}$ and $g^{s,t}$ respectively, for  $s\leq t \in \Rspace$. Let $C(F)$ and $C(G)$ be the corresponding filtrations of chain groups, and $H(F)$ and $H(G)$ the corresponding persistence modules. With an abuse of notation, we denote the maps induced on chain groups and homology groups also by $f^{s,t}$ and $g^{s,t}$ respectively, where $s\leq t \in \Rspace$.

\begin{definition}
\label{def:chain-interleaving}
Let $F$ and $G$ be two filtrations over  $K$ and $K'$ respectively. 
An \emph{$\eps$-chain-interleaving between $F$ and $G$} (or an $\eps$-interleaving at the chain level) is given by two sets of homomorphisms $\{\phi_\alpha : C(K_\alpha) \xrightarrow{} C(K'_{\alpha+\eps}) \}$ and $\{\psi_\alpha : C(K'_{\alpha}) \xrightarrow{} C(K_{\alpha+\eps}) \}$, such that 
\begin{enumerate}
    \item $\{ \phi_\alpha \}$ and $\{\psi_\alpha \}$ commute with the maps in the filtration, that is, for all $\alpha,t \in \Rspace$, $g^{\alpha+\eps,\alpha+\eps + t} \phi_\alpha = \phi_{\alpha+t} f^{\alpha, \alpha+t}$ and $f^{\alpha+\eps,\alpha+\eps+t} \psi_\alpha = \psi_{\alpha+t} g^{\alpha, \alpha+t}$; 
    \item The following diagrams commute: 
    \begin{equation}
      \begin{tikzcd}
  C(K_\alpha)
    \ar[r,"f^{\alpha, \alpha+\eps}"]
    \ar[dr, "\phi_\alpha", very near start, outer sep = -2pt]
  & C(K_{\alpha+\eps})
    \ar[r, "f^{\alpha+\eps, \alpha+2\eps}"]
    \ar[dr, "\phi_{\alpha+\eps}", very near start, outer sep = -2pt]
  & C(K_{\alpha+2\eps})
  \\
  C(K'_\alpha)
    \ar[r, swap,"g^{\alpha, \alpha+\eps}"]
    \ar[ur, crossing over, "\psi_\alpha"', very near start, outer sep = -2pt]
  & C(K'_{\alpha+\eps})
    \ar[r, swap,"g^{\alpha+\eps, \alpha+2\eps}"]
    \ar[ur, crossing over, "\psi_{\alpha+\eps}"', very near start, outer sep = -2pt]
  & C(K'_{\alpha+2\eps}).
  \end{tikzcd}
    \end{equation}
\end{enumerate}
\noindent The \emph{chain interleaving distance} is defined to be 
\begin{equation}
d_{CI}(F, G) := \inf \{\eps\geq 0 \mid \text{there exists an $\eps$-chain interleaving between } F\; \text{and}\; G\}.
\end{equation}
\end{definition}

We notice that the standard notion of $\eps$-interleaving~\cite{ChazalCohenSteinerGlisse2009}, denoted $d_I(F,G)$, is analogous to our definition above. However, it is defined on the persistence modules $H(F)$ and $H(G)$ on the homology level~\cite{ChazalCohenSteinerGlisse2009}, rather than the filtration of chain groups.  
Filtrations on the chain level arising from sublevel-sets of a function are ones that interest us in this paper.


We refer to~\cite{ChazalCohenSteinerGlisse2009} for proof of the following 
stability theorem for ordinary persistence.

\begin{theorem}
\label{theorem:persistence-stability}
    Let $F$ and $G$ be filtrations defined over (finite) complexes $K$ and $K'$, respectively. Then 
    $$ d_B( \Dgm(F), \Dgm(G)) \leq d_I(F,G).$$
\end{theorem}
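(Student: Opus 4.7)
The plan is to deduce this theorem from the classical algebraic stability theorem for one-parameter persistence modules, which is the cited result of Chazal--Cohen-Steiner--Glisse \etal. The reduction has three steps: pass from the given $\eps$-interleaving to an $\eps$-matching between barcodes, deduce $d_B \leq \eps$, then take the infimum.

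First, I would unfold the definition of the interleaving distance: for any $\eps > d_I(F,G)$ there exist interleaving maps $\{\phi_\alpha : H(K_\alpha) \to H(K'_{\alpha+\eps})\}$ and $\{\psi_\alpha : H(K'_\alpha) \to H(K_{\alpha+\eps})\}$ commuting with the internal transition maps of $H(F), H(G)$ and satisfying the two standard triangle identities $\psi_{\alpha+\eps}\phi_\alpha = f^{\alpha,\alpha+2\eps}$ and $\phi_{\alpha+\eps}\psi_\alpha = g^{\alpha,\alpha+2\eps}$. In particular $H(F)$ and $H(G)$ are $\eps$-interleaved as persistence modules in the standard sense. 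Since the complexes $K,K'$ are finite, both modules are pointwise finite-dimensional and have finite barcode descriptions $\Dgm(F), \Dgm(G)$.

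Next, I would invoke the Algebraic Stability Theorem (the interleaving-to-matching direction of the Isometry Theorem): any two pointwise finite-dimensional $\eps$-interleaved persistence modules admit an $\eps$-matching between their barcodes, i.e., a partial bijection $\gamma$ pairing bars whose birth--death endpoints agree within $\ell^\infty$-distance $\eps$, while each unmatched bar has length at most $2\eps$ and can therefore be matched to the diagonal at cost at most $\eps$. This translates directly to $d_B(\Dgm(F), \Dgm(G)) \leq \eps$. Finally, taking the infimum over all $\eps > d_I(F,G)$ produces the desired inequality $d_B(\Dgm(F), \Dgm(G)) \leq d_I(F,G)$.

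The main obstacle is of course the Algebraic Stability Theorem itself, which is the heart of the classical result. In the present paper, it is imported as a black box from the persistence stability literature, so our ``proof'' is really a citation-based reduction. If one wished to give a self-contained argument, the cleanest modern route would be Bauer--Lesnick's induced matching construction: one builds canonical matchings from the image-module barcodes and shows that composing them with the interleaving maps gives the required $\eps$-matching. Either way, the stability of $d_B$ under $d_I$ follows, and this is the statement we need as a stepping-stone for the sublevel-set harmonic chain barcode stability developed later in \cref{sec:stability}.
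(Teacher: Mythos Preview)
Your proposal is correct and matches the paper's treatment: the paper does not prove this theorem but simply refers to \cite{ChazalCohenSteinerGlisse2009} for the proof, and you have correctly identified it as a citation-based reduction to the Algebraic Stability Theorem.
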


See~\cite{Cohen-SteinerEdelsbrunnerHarer2005, ChazalCohenSteinerGlisse2009} for more on the stability of ordinary persistence. We also notice that in the above theorem we can replace $d_{CI}$ in place of $d_I$, since the existence of an $\eps$-chain-interleaving implies the existence of an $\eps$-interleaving on the homology level. 

\subsection{Sub-level-Set Filtration}\label{sec:Sub-level-Set-Filtration}

Note that we assumed up until now a filtration to be simplex-wise. While this is not true for a sublevel set filtration, we shall construct simplex-wise filtrations from a general filtration. 

\begin{definition}\label{dfn:bd-span-bar2}
Let $\tilde{F}$ be an increasing filtration of complex $K$ which is not necessarily simplex-wise and let $z \in Z_p(K)$ be a $p$-cycle. 
Let $t_{b(z)}$ be the time when $z$ is born,  i.e., $z\in Z_p(K_{t_{b(z)}})\setminus Z_p(K_{t_{b(z)-1}})$, and $t_{d(z)}$ be the time when $z$ dies as a harmonic chain, i.e., $\delta(z)\neq 0 $ in $K_{t_{d(z)}}$, but $\delta(z)=0$ in $K_t$ for $t_{b(z)} \leq t < t_{d(z)}$. Define the \emph{span} of $z$ as $span(z)=[t_{b(z)}, t_{d(z)})$.  If $z$ is not harmonic at birth, its span is the empty set (notice that a cycle  continues to be non-harmonic once it becomes so).
    The \emph{bar} of $z$ is defined as $bar(z) := [t_{b(z)}, t_{d(z)-1}]$.
\end{definition}

Let $\tilde{F}$ be an increasing filtration of $K$ which is not necessarily simplex-wise, i.e., two consecutive complexes in $\tilde{F}$ can differ by more than one simplex. We fix an ordering, once and for all, for vertices of $K$, which also fixes an ordering for all $K$'s simplices (using, say, the lexicographic ordering of the ordered sequence of vertices in a simplex). Given $\delta>0$ small enough, 
we expand $\tilde{F}$ into a simplex-wise filtration $\tilde{F}(\delta)$
by  expanding each inclusion $K_{t_{i-1}}\incto K_{t_{i}}$
in $\tilde{F}$
starting with $i=1$.
To differentiate,
 denote each complex in $\tilde{F}(\delta)$
as $K'_t$ where $t\in\mathbb{R}$ is a timestamp.
In each iteration, we expand the following inclusion in $\tilde{F}$:
$$K_{t_{i-1}} \incto K_{t_i} 
$$ 
 into  several simplex-wise inclusions in $\tilde{F}(\delta)$: 
$$  K_{t_{i-1}}=K'_{s} \incto K'_{t_i-\delta} \incto K'_{t_i}\incto K'_{t_i +\delta} \incto \cdots \incto K'_{t_{i+(k-1)\delta}} =K_{t_i}.
$$ 
In the above, $K'_{t_i-\delta}$ is a ``dummy'' complex equal to $K'_{s}$
(so $K'_{s} \incto K'_{t_i-\delta}$ is an identity map)
and $k$ is the number of simplices added from $K_{t_{i-1}}$ to $K_{t_i}$ in $\tilde{F}$.
In short, adding the dummy complexes makes sure that the closed bars of the simplex-wise filtration 
``induce''
the closed-open time intervals in $\tilde{F}$ in which a cycle is harmonic; see \Cref{sec:stability} for details.
We also have that the $k$ number of simplices are added based on
their simplex ordering described above.

Let $span_{\delta}(z)$ denote the span of $z$ in $\tilde{F}(\delta)$, and analogously define $bar_{\delta}(z)$, $b_{\delta}(z)$, and $d_{\delta}(z)$. We observe that addition of dummy complexes implies that for a cycle $z$, bar of $z$ in $\tilde{F}(\delta)$ must approximate the span of $z$ in $\tilde{F}$; see Fig.~\ref{fig:spanbar} for an example. In more detail we have the following.

\begin{figure}
    \centering
    \includegraphics[width=0.7\linewidth]{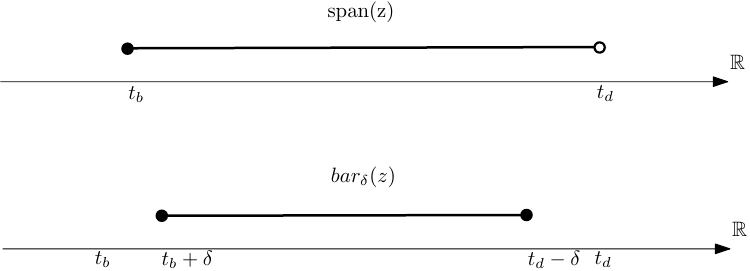}
    \caption{Turning closed-open spans into closed bars}
    \label{fig:spanbar}
\end{figure}

\begin{lemma}\label{lemma:barsnapping}
Let $m$ be the size of $K$
and let $z\in Z_*(K)$ be a cycle. For $\delta>0$ sufficiently small,
$t_{b(z)} \leq t_{b_\delta(z)} \leq t_{b(z)} +m\delta $, and $t_{d(z)} \leq t_{d_\delta(z)} \leq t_{d(z)} +m\delta$. Moreover, we have $ t_{d(z)} -\delta \leq t_{d_\delta(z)-1} \leq t_{d(z)}+(m-1)\delta$.

\end{lemma}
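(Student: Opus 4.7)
The plan is to track how the single birth event and single death event of $z$ in $\tilde{F}$ become embedded in the refined simplex-wise filtration $\tilde{F}(\delta)$, then read off the timestamps from the combinatorial positions. I would begin by fixing $\delta>0$ small enough that, for every $i$, the block of times $\{t_i-\delta,\,t_i,\,t_i+\delta,\ldots,\,t_i+(k_i-1)\delta\}$ used to expand the inclusion $K_{t_{i-1}}\incto K_{t_i}$ fits strictly between $t_{i-1}+(k_{i-1}-1)\delta$ and $t_{i+1}-\delta$. Here $k_i$ is the number of simplices added from $K_{t_{i-1}}$ to $K_{t_i}$ in $\tilde{F}$, so $k_i\leq m$. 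This choice of $\delta$ guarantees that $\tilde{F}(\delta)$ is a well-defined simplex-wise filtration whose timestamps respect the ordering inherited from $\tilde{F}$.

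For the birth inequality, by \Cref{dfn:bd-span-bar2} we have $t_{b(z)}=t_i$ for the least $i$ with $z\in Z_p(K_{t_i})$, and $z\notin Z_p(K_{t_{i-1}})$. The dummy complex $K'_{t_i-\delta}$ equals $K_{t_{i-1}}$, so $z$ is still not a cycle there. Among the next $k_i$ complexes $K'_{t_i},K'_{t_i+\delta},\ldots,K'_{t_i+(k_i-1)\delta}$, $z$ first becomes a cycle at some step, yielding
\[
t_{b(z)}=t_i\ \leq\ t_{b_\delta(z)}\ \leq\ t_i+(k_i-1)\delta\ \leq\ t_{b(z)}+m\delta.
\]

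For the death inequalities, I would argue symmetrically. Write $t_{d(z)}=t_j$; the first $(p+1)$-cofacet $\tau$ with $z(\partial\tau)\neq 0$ is inserted somewhere in the expansion of the gap $(t_{j-1},t_j)$. The dummy $K'_{t_j-\delta}=K_{t_{j-1}}$ still satisfies $\delta(z)=0$, so $\tau$ is inserted at one of $K'_{t_j},K'_{t_j+\delta},\ldots,K'_{t_j+(k_j-1)\delta}$, giving
\[
t_{d(z)}\ \leq\ t_{d_\delta(z)}\ \leq\ t_{d(z)}+(k_j-1)\delta\ \leq\ t_{d(z)}+m\delta.
\]
For the predecessor timestamp $t_{d_\delta(z)-1}$, I would split into two cases. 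If $\tau$ is the very first simplex of the gap, then $t_{d_\delta(z)}=t_{d(z)}$ and the preceding complex is the dummy $K'_{t_{d(z)}-\delta}$, so $t_{d_\delta(z)-1}=t_{d(z)}-\delta$. Otherwise $t_{d_\delta(z)-1}=t_{d_\delta(z)}-\delta\leq t_{d(z)}+(k_j-2)\delta\leq t_{d(z)}+(m-1)\delta$. Combining the two cases yields the stated range $t_{d(z)}-\delta\leq t_{d_\delta(z)-1}\leq t_{d(z)}+(m-1)\delta$.

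The proof is essentially bookkeeping rather than mathematical content; the only place one must be careful is in the boundary case where the killing simplex is the first one inserted in its gap, because that is exactly the situation in which the dummy complex plays its role and drags the predecessor timestamp to $t_{d(z)}-\delta$. I would make this role of the dummy explicit, as it is what forces the closed bar of $\tilde{F}(\delta)$ to land near (and to the left of) the closed-open span of $\tilde{F}$ rather than strictly within it.
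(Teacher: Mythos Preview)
Your proposal is correct and follows the same approach as the paper; you are simply far more explicit about the bookkeeping, whereas the paper's proof is only a two-sentence sketch that asserts the same bounds by pointing to the dummy complex and the at-most-$m$ insertions per gap. Your case split for $t_{d_\delta(z)-1}$ (first simplex of the gap versus a later one) is exactly what the paper gestures at when it says the smallest possible predecessor is the dummy index and the largest is the penultimate index among the inserted simplices.
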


\begin{proof}

Since $K_{t_i-\delta}= K_{t_{i-1}}$, we have $b_\delta (z) \geq b(z)$.  It is clear that $b_\delta(z) \leq b(z)+m\delta$. 
The inequalities for $d_\delta(z)$ can also be easily checked. 

$ t_{d(z) -1} $ is the largest (real valued) index smaller than $t_d(z)$. The smallest that this index can be is the index of a dummy complex added before $K_{t_{d(z)}}$, and the largest it can be is the one to last index among complexes added after $K_{d(z)}$. 
\end{proof}

Therefore, the barcode of $\tilde{F}(\delta)$ can induce
a barcode for $\tilde{F}$.  Indeed, for any $\delta$ small enough, there is an obvious mapping $\Gamma: \tilde{F}(\delta) \xrightarrow{} \tilde{F}$. 
\begin{definition}
Let $F$ be a simplex-wise filtration (as in \Cref{sec:persistence}) and
let $\obarc(F)$ be the \emph{closed-open barcode of $\har{(F)}$}, i.e., $[t_b,t_{d}) \in \obarc(F)$ iff $[t_b,t_{d-1}] \in \hbarc(F)$.  
 Moreover, let $\tilde{F}$ and $\tilde{F}(\delta)$ be as defined in this section.
We define the \emph{harmonic chain barcode of $\tilde{F}$} as the set $\obarc(\tilde{F}) := \Gamma(\obarc(\tilde{F}(\delta)))$ for $\delta$ small enough.
\end{definition}

Note that $\obarc(\tilde{F})$ is the limit of $\hbarc(\tilde{F}(\delta))$ as $\delta$ approaches 0. 

In the rest of this section, we prove stability with respect to simplex-wise filtrations and at the end we show the stability for sublevel set harmonic barcode.

\subsection{$\mathbb{U}$-indexed Persistence Modules}\label{subsec:extension}

Let $\Rspace^{op}$ denote the poset $\Rspace$ with opposite ordering. Let $\mathbb{U}$ be the subset of $\Rspace^{op}\times \Rspace$ consisting of pairs $(a,b), a\leq b$. Note that $(a,b) < (c,d)$ if and only if $c<a, b < d$. Therefore, $\mathbb{U}$ can be considered as consisting of real-valued intervals with the order being inclusions between the intervals. A \emph{$\mathbb{U}$-indexed persistence module} $M$ is a functor from the poset $\mathbb{U}$ to the category of vector spaces. We have for each $(a,b)$, a vector space $M_{(a,b)}$, and for each relation $(a,b) < (c,d)$, a linear map\footnote{For simplicity we omit the subscripts for maps as long as this does not cause confusions.} $f:M_{(a,b)} \xrightarrow{} M_{(c,d)}$.

We now define the \emph{lift} (or \emph{extension}) of the harmonic zigzag module $\har(\filt)$, denoted $E= E(\har(\filt))$. We have $$ E(a,b) = \{ z \in Z_*(K)\mid b(z) \leq b ,  d(z)  \geq a) \}.$$ That is, $E(a,b)$ contains the chains that are born at or before $b$ and die at or after $a$ (see \Cref{dfn:bd-span-bar2}). Observe that if $z$ and $z'$ are born at or before $t$, then $z+z'$ is born at or before $t$, and if $z$ and $z'$ die at or after $t$, then $z+z'$ dies at or after $t$. To see the last statement, let $d$ be the minimum  time that $z$ and $z'$ die, and assume $z_1+z_2$ dies at $t_i < d$. Then $z+z' \notin K_{t_{i+1}} \subset K_{t_i}$. Therefore, one of $z$ and $z'$ must also not be in the image, which is a contradiction. From these observations we deduce the following.

\begin{lemma}
    For any $(a,b) \in \mathbb{U}$, $E(a,b)$ is a vector space. 
\end{lemma}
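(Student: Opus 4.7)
The plan is to verify that $E(a,b)$ is a linear subspace of $Z_*(K)$ by checking the three subspace axioms: containment of the zero cycle, closure under scalar multiplication, and closure under addition. The two informal observations in the paragraph preceding the lemma already supply the essential content; the proof amounts to making them precise using the linearity of $\partial$ and $\delta$, together with a convention for the trivial cycle.

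First, I would dispose of the zero cycle and of scalar multiplication. Adopt the conventions $b(0) = -\infty$ and $d(0) = +\infty$, so that $0 \in E(a,b)$ for every $(a,b) \in \mathbb{U}$. If $z \in E(a,b)$ and $\alpha \in \Rspace$ is nonzero, then $\alpha z$ has the same support as $z$ and $\partial(\alpha z) = \alpha\,\partial z$, so $\alpha z$ first becomes a cycle at exactly the same filtration index as $z$; hence $b(\alpha z) = b(z) \leq b$. Likewise $\delta(\alpha z) = \alpha\,\delta z$ vanishes in $K_t$ precisely when $\delta z$ does, so $d(\alpha z) = d(z) \geq a$, and therefore $\alpha z \in E(a,b)$. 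The case $\alpha = 0$ is covered by the zero-cycle convention.

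The main step is closure under addition. Suppose $z, z' \in E(a,b)$. For the birth condition, $b(z), b(z') \leq b$ imply $z, z' \in Z_*(K_{t_b})$, and since cycles form a subgroup of $C_*(K_{t_b})$, we get $z+z' \in Z_*(K_{t_b})$, whence $b(z+z') \leq b$. For the death condition, set $d^* = \min(d(z), d(z'))$, which satisfies $d^* \geq a$ by assumption. For every filtration index $t$ with $t < t_{d^*}$, both $\delta(z)$ and $\delta(z')$ vanish as elements of $C^*(K_t)$; by linearity of $\delta$ we obtain $\delta(z+z') = \delta(z) + \delta(z') = 0$ in $C^*(K_t)$. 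Combined with the birth bound, this shows $z+z'$ remains harmonic throughout the range $[t_{b(z+z')}, t_{d^*})$, so $d(z+z') \geq d^* \geq a$ and $z+z' \in E(a,b)$.

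I do not anticipate any real obstacle: the proof is bookkeeping built on the linearity of $\partial$ and $\delta$. The only mild subtlety is phrasing the death-time argument in terms of the vanishing of $\delta$ rather than in terms of chain containment, which is what the informal sentence preceding the lemma gestures at but does not spell out; once one notes that $z+z'$ being harmonic in $K_t$ is equivalent to $\partial(z+z') = 0$ and $\delta(z+z') = 0$ in $K_t$, linearity does all the work.
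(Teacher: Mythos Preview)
Your proposal is correct and follows essentially the same approach as the paper: the paper's proof consists entirely of the two observations in the paragraph preceding the lemma (closure of birth under addition via $Z_*$ being a subgroup, closure of death under addition via a contradiction argument), and you have simply spelled these out more carefully using linearity of $\partial$ and $\delta$. If anything, your direct linearity argument for the death condition is cleaner than the paper's slightly garbled contradiction phrasing.
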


\begin{lemma}\label{l:directsum}
    The extension operator $E$ respects direct sums. 
\end{lemma}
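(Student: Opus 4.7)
The plan is to leverage the interval decomposition $\har(\filt) = \bigoplus_i I_i$ of the harmonic zigzag module, where each $I_i$ is supported on a bar $[b_i, d_i]$ and, by \Cref{prop:single-rep}, is represented by a single harmonic cycle $z_i \in Z_*(K)$ with $b(z_i) = b_i$ and $d(z_i)$ its death time. Each interval module $I_i$ has a natural extension $E(I_i)$ to a $\mathbb{U}$-indexed module, defined by $E(I_i)(a,b) = \Rspace \cdot z_i$ when $b(z_i) \leq b$ and $d(z_i) \geq a$, and $0$ otherwise, with structure maps being the identity on $z_i$ wherever both source and target are nonzero. The proof reduces to exhibiting a natural isomorphism $\Phi : \bigoplus_i E(I_i) \xrightarrow{\sim} E(\har(\filt))$ sending the generator of each $E(I_i)$ to the cycle $z_i$ inside $Z_*(K)$.

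For surjectivity of $\Phi_{(a,b)}$, I would take any $z \in E(a,b)$ and pick any $t \in span(z)$, which is nonempty since $z$ is harmonic on this range. The decomposition provides $\{z_j : t \in bar(z_j)\}$ as a basis of $\har(K_t)$, giving the expansion $z = \sum_j c_j z_j$. The crucial observation is that these coefficients $c_j$ are independent of the choice of $t$ within $span(z)$: as $t$ varies along the zigzag, the structure maps are inclusions sending each basis element $z_j$ to itself whenever present, so the same expansion propagates to neighboring times. Therefore, whenever $c_j \neq 0$, $z_j$ must be alive at every $t' \in span(z)$, giving $bar(z_j) \supseteq span(z)$, and in particular $b(z_j) \leq b(z) \leq b$ and $d(z_j) \geq d(z) \geq a$, so $z_j \in E(a,b)$ as required.

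For injectivity, I would show that the full set $\{z_i\}_i$ is linearly independent in $Z_*(K)$. Given a hypothetical relation $\sum_{i \in S} c_i z_i = 0$ with nonempty support $S$, pick $i_0 \in S$ with smallest birth time $t_0 = b(z_{i_0})$. Since the filtration is simplex-wise, the simplex $\sigma$ added going from $K_{t_0 - 1}$ to $K_{t_0}$ is positive; moreover, the newly born harmonic cycle $z_{i_0}$ must contain $\sigma$, as otherwise $z_{i_0}$ would already be a harmonic cycle in $K_{t_0 - 1}$, contradicting its birth at $t_0$. The identity $-z_{i_0} = \sum_{j \in S \setminus \{i_0\}} (c_j / c_{i_0}) z_j$ then expresses a chain in $K_{t_0}$ as a combination of chains each containing simplices outside $K_{t_0}$. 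Tracking the coefficients of these ``newly introduced'' simplices, each of which is uniquely attached to its generator under the simplex-wise assumption, and propagating inductively on $|S|$, I would derive all $c_i = 0$. This linear independence claim is the main obstacle of the proof, as representatives born at different times live in distinct harmonic chain groups and cannot be directly compared inside a single one; the argument must instead exploit the functoriality of the decomposition across the whole zigzag to propagate information between times.

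Finally, compatibility of $\Phi$ with the $\mathbb{U}$-indexed structure maps is immediate: for $(a,b) \leq (c,d)$ in $\mathbb{U}$, both $\bigoplus_i E(I_i)(a,b) \to \bigoplus_i E(I_i)(c,d)$ and $E(\har(\filt))(a,b) \to E(\har(\filt))(c,d)$ are inclusions within $Z_*(K)$ (since enlarging $[a,b]$ to $[c,d]$ only relaxes the birth/death conditions), and $\Phi$ respects these inclusions by construction.
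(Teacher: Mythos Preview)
The paper states this lemma without proof, so there is no argument to compare against. Your outline is sound in spirit, but the surjectivity step has a genuine gap: you assert that for every $z\in E(a,b)$ the set $span(z)$ is nonempty, and this is false. Take two interval representatives $z_1,z_2$ whose bars are disjoint but both meet $[a,b]$ in the sense of the $E$-condition (e.g.\ bars $[1,5]$ and $[8,10]$ with $(a,b)=(3,8)$); then $z_1,z_2\in E(a,b)$ and hence $z:=z_1+z_2\in E(a,b)$, yet $z$ is never harmonic because at its birth time $b(z)=8$ the cycle $z_1$ has already stopped being a cocycle. For such a $z$ there is no time $t$ at which you can write $z$ in the basis $\{z_j: t\in[b_j,d_j]\}$, so your expansion-and-propagation argument does not start. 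The statement is still true, but you need a different route to surjectivity---for instance a dimension count: identify $E(a,b)$ with $Z_*(K_b)\cap B_*(K_{a'})^{\perp}$ (for the appropriate index $a'$), so that $\dim E(a,b)=\dim Z_*(K_b)-\dim B_*(K_{a'})$, and then count bars with $b_i\le b$ and $d(z_i)\ge a$ by relating births to positive simplices and deaths to negative ones.

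For injectivity, your choice of the \emph{smallest} birth forces the inductive bookkeeping you flag as the main obstacle, and that sketch is not yet a proof (the ``outside'' simplices on the right-hand side can cancel). Choosing instead $i_0$ with the \emph{largest} birth among indices with $c_i\neq 0$ gives a one-line argument: the simplex $\sigma$ added at time $b_{i_0}$ appears in $z_{i_0}$ with nonzero coefficient (by your own observation that a newly born harmonic cycle must contain the new simplex), while every other $z_j$ with $c_j\neq 0$ lies in $C_*(K_{b_{i_0}-1})$ and hence has zero $\sigma$-coefficient; reading off the $\sigma$-coefficient of $\sum c_iz_i=0$ gives $c_{i_0}=0$, a contradiction.
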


If $(a,b) < (c,d) $, then $[a,b] \subset [c,d]$ and we have an inclusion $E(a,b) \subset E(c,d)$. These are maps of the extension $\mathbb{U}$-module. Note that \textit{each} cycle of $K$ is in some $E(a,b)$ 
where $a$ could be possibly equal to $b$.

\subsection{Blocks and Block-Decomposable Modules}

An \emph{interval} in a $\mathbb{U}$-Module $M$ is a subset ${\cal{J}} \subset \mathbb{U}$ such that 
\begin{enumerate}
    \item ${\cal{J}}$ is non-empty,
    \item if $p,q \in {\cal{J}}$ and $p \leq r \leq q$, then $r \in {\cal{J}},$ and,
    \item for any $p,q \in {\cal{J}}$, there is a sequence $p = r_0, r_1, \ldots, r_l = q$ of elements of ${\cal{J}}$ such that for all $0 \leq j \leq l-1$, $r_j$ and $r_{j+1}$ are comparable (connectivity). 
\end{enumerate}

A multiset of intervals in $\mathbb{U}$ is called a \emph{barcode}.

For ${\cal{J}}$ an interval in $\mathbb{U}$, the \emph{interval module $I^{{\cal{J}}}$} is a $\mathbb{U}$-indexed module such that $I^{{\cal{J}}}_p = \Rspace$ if $p \in {\cal{J}}$, and $I_a^{{\cal{J}}}=0$ otherwise. The maps satisfy $f:p\xrightarrow{} q = id_{\Rspace}$ if $p \leq q \in {\cal{J}}$ and $0$ otherwise.

The intervals for a $\mathbb{U}$-Module are indecomposable modules analogous to the interval modules in an $\Rspace$-indexed persistence module~\cite[Prop. 2.2]{botnan2018algebraic}. However, contrary to the 1D persistence, a general $\mathbb{U}$-indexed module cannot be  decomposed into interval modules. Note that the harmonic zigzag module is decompasable into interval modules. This implies that the module $E$ is decomposable into special forms of intervals called blocks.

For a closed interval $[a,b] \subset \Rspace^+$ the \emph{closed block} ${\mathcal{J}_{BL}} = [a,b]_{BL}$ is an interval in $\mathbb{U}$ defined by $$ [a,b]_{BL} = \{ (x,y) \in \mathbb{U}\mid  x \leq b,  y \geq a \}.$$ See~\cref{fig:block-extension}. 
A multiset of blocks in $\mathbb{U}$ is called a \emph{block barcode}.

For any block ${\mathcal{J}_{BL}}$, the interval module $I^{{\mathcal{J}_{BL}}}$ is called a \emph{block module}. 
A $\mathbb{U}$-indexed module is \emph{block decomposable} if it can be written as a direct sum of block modules. 

\begin{figure}
    \centering
    \includegraphics[width=0.5\linewidth]{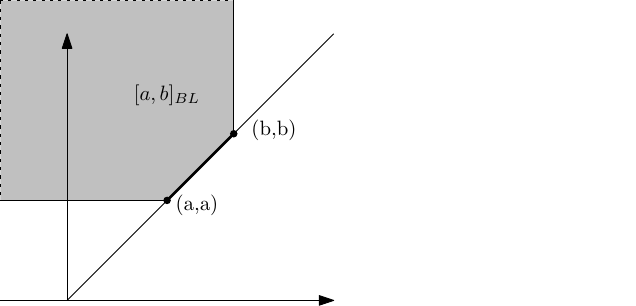}
    \caption{The extension of a closed interval is a closed block.}
    \label{fig:block-extension}
\end{figure}

It is not difficult to check that if $[a,b]$ is a closed interval for a zigzag module, then $E(a,b)$ is a closed block module. Since the direct sum is preserved by $E$, we have  the following.

\begin{lemma}
    The $\mathbb{U}$-indexed module $E(\har(\filt))$ is block decomposable. 
\end{lemma}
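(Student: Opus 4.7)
The plan is to combine the zigzag decomposition of $\har(\filt)$ with the two preparatory facts already stated: that $E$ respects direct sums (\Cref{l:directsum}) and that the extension of a closed interval module is a closed block module (as noted in the paragraph immediately preceding the lemma). The argument is essentially an assembly, once one ties the harmonic representatives to the concrete cycles whose supports realize the block structure.

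First I would invoke the standard structure theorem for zigzag modules \cite{CarlssonSilva2010} to write
\[ \har(\filt) \;=\; \bigoplus_{[b,d]\in \hbarc(\filt)} I[b,d], \]
where each $I[b,d]$ is a closed interval module. By \Cref{prop:single-rep}, each such summand is realized concretely by a single harmonic cycle $z_{b,d}\in Z_*(K)$ (the harmonic representative of \Cref{dfn:rep}), which belongs to $\har(K_i)$ precisely for $i\in[b,d]$; equivalently, $t_{b(z_{b,d})}=t_b$ and $t_{d(z_{b,d})}=t_{d+1}$ in the notation of \Cref{dfn:bd-span-bar2}.

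Next, applying the extension $E$ and using \Cref{l:directsum} gives
\[ E(\har(\filt)) \;=\; \bigoplus_{[b,d]\in \hbarc(\filt)} E\bigl(I[b,d]\bigr). \]
It then suffices to check that each summand $E(I[b,d])$ is a closed block module. Unpacking the definition of $E$, a cycle $z_{b,d}$ lies in $E(x,y)$ exactly when $b(z_{b,d})\le y$ and $d(z_{b,d})\ge x$, i.e.\ when $t_b\le y$ and $x\le t_{d+1}$. This cuts out precisely the support of a closed block (of the form $[t_b,t_{d+1}]_{BL}$, up to the closed/open endpoint convention), and the connecting maps of $E$ restricted to this submodule are inclusions which act as the identity on the one-dimensional space spanned by $z_{b,d}$ wherever that space is present, exactly matching the interval module $I^{\mathcal{J}_{BL}}$ for that block. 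Summing over all bars yields the desired block decomposition.

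The main obstacle is not algebraic but notational: carefully aligning the integer closed-closed bar $[b,d]\in\hbarc(\filt)$ with the continuous-time closed-open span $[t_b, t_{d+1})$ of \Cref{dfn:bd-span-bar2} and with the closed block in $\mathbb{U}$ (keeping in mind the footnote of \Cref{sec:stability} on closed-open vs.\ closed-closed conventions). In particular, one must verify that cycles which are never harmonic (whose span is empty) contribute only the zero module to $E$ and hence do not spoil the decomposition, and that the choice of harmonic representative does not affect the resulting block module up to isomorphism. Once this bookkeeping is straightened out, \Cref{l:directsum} finishes the proof immediately.
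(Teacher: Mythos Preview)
Your proposal is correct and follows exactly the route the paper takes: decompose $\har(\filt)$ into closed interval modules, apply \Cref{l:directsum} to push the direct sum through $E$, and then use the observation (stated just before the lemma) that the extension of a closed interval is a closed block. The paper's own argument is the one-sentence version of what you wrote; your additional unpacking via the harmonic representative $z_{b,d}$ and the bookkeeping about spans is a reasonable expansion but not a different idea.
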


\subsection{Interleaving of $\mathbb{U}$-Indexed Modules}

To define $\eps$-interleaving between $\mathbb{U}$-indexed modules, we need to first define a \emph{shift operator} $S^\eps$. For a $\mathbb{U}$-indexed module $M$, we have a shifted module $S^\eps(M)$ such that 
$S^\eps(M)_{(a,b)} =M_{(a-\eps, b+\eps)}$, and for $f:M_{(c,d)} \xrightarrow{} M_{(a,b)}$, $S^\eps(f) = f: M_{(c-\eps,d+\eps)}\xrightarrow{} M_{(a-\eps,b+\eps) )}$.


Given $\eps \geq 0$, two $\mathbb{U}$-indexed modules $M ,N$ are \emph{$\eps$-interleaved} if there exist a pair of morphisms $\Phi: M \xrightarrow{} S^\eps (N)$ and $\Psi: N \xrightarrow{} S^\eps (M)$ such that for each $(a,b) \in \mathbb{U}$

$$ \Psi_{(a-\eps, b+\eps)}  \circ \Phi_{(a,b)} = f:M_{(a,b)} \xrightarrow{} M_{(a-2\eps, b+2\eps)}  $$
and
$$ \Phi_{(a-\eps, b+\eps)}  \circ \Psi_{(a,b)} = g:N_{(a,b)} \xrightarrow{} N_{(a-2\eps, b+2\eps)},$$
where  $g$ in the above denotes  maps of $N$.

The \emph{interleaving-distance} between two $\mathbb{U}$-indexed modules $M$ and $N$ is defined as $$ d_I(M,N) = \inf \{\eps\mid \text{there is an $\eps$-interleaving between $M$ and $N$} \}.$$



\subsection{Stability of Harmonic Chain Barcode}\label{subsec:stability}

Let $F$ be a filtration. Define $\rho(F) = \max \{ t_{d(z)} - t_{d(z)-1} \mid z\in Z_*(K)\}$, that is, $\rho(F)$ the maximum distortion caused by turning spans into closed bars.

\begin{theorem}

Let $F$ and $G$ be simplex-wise increasing filtrations where the maps of $F$ and $G$ are inclusions. Assume that the interleaving distance at chain level between $F$ and $G$ is obtained via inclusions. 
Then $$ d_B( {\hbarc(F)}, {\hbarc(G)} ) \leq d_{I}(F,G) +\max \{\rho(F), \rho(G) \}.$$
\end{theorem}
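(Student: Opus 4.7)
The plan is to lift the harmonic zigzag modules $\har(F)$ and $\har(G)$ to $\mathbb{U}$-indexed persistence modules via the extension $E$ of \Cref{subsec:extension}, transport the hypothesized inclusion-based chain interleaving into an $\eps$-interleaving of those extensions (with $\eps := d_I(F,G)$), apply the Isometry Theorem for block-decomposable $\mathbb{U}$-indexed modules~\cite{bjerkevik2021stability,botnan2018algebraic} to control the block-barcode bottleneck, and then translate the resulting matching of blocks (which encode closed-open spans) back into a matching of the closed-closed harmonic chain bars. It is the final translation step that contributes the additive $\max\{\rho(F),\rho(G)\}$ term.

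For the interleaving construction, I set $\Phi_{(a,b)} : E(\har(F))(a,b) \to E(\har(G))(a-\eps,b+\eps)$ to be the restriction to cycles of the chain inclusion $\phi$, and define $\Psi$ symmetrically. The key well-definedness claim is: if $z \in \Zyc_*(K)$ satisfies $b_F(z)\leq b$ and $d_F(z)\geq a$, then $b_G(z)\leq b+\eps$ and $d_G(z)\geq a-\eps$. The birth bound is immediate from $z\in \Zyc_*(F_{b_F(z)})\subseteq \Zyc_*(G_{b_F(z)+\eps})$. For the death bound, any coface $c\in G_{d_G(z)}$ with $\delta(z)(c)\neq 0$ lies in $F_{d_G(z)+\eps}$ via the reverse inclusion, so $\delta(z)\neq 0$ already in $F_{d_G(z)+\eps}$, forcing $d_F(z)\leq d_G(z)+\eps$ and hence $d_G(z)\geq d_F(z)-\eps\geq a-\eps$. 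The interleaving equations $\Psi\circ\Phi = (\text{internal shift on }E(\har(F)))$ and $\Phi\circ\Psi = (\text{internal shift on }E(\har(G)))$ hold automatically because all four maps are inclusions in a common ambient cycle space, and a composition of inclusions is an inclusion.

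Given an $\eps$-interleaving of two block-decomposable $\mathbb{U}$-indexed modules, the Isometry Theorem~\cite{botnan2018algebraic,bjerkevik2021stability} produces a bottleneck matching between the block barcodes with cost at most $\eps$. Each closed block in the decomposition of $E(\har(F))$ corresponds to a closed-open span $[t_{b(z)},t_{d(z)})$, while the matching bar in $\hbarc(F)$ has right endpoint $t_{d(z)-1}$, which by definition of $\rho(F)$ lies within $\rho(F)$ of $t_{d(z)}$. Replacing span right endpoints by bar right endpoints shifts each relevant coordinate by at most $\rho(F)$ on the $F$ side and $\rho(G)$ on the $G$ side, so the composed matching between $\hbarc(F)$ and $\hbarc(G)$ has bottleneck cost at most $\eps+\max\{\rho(F),\rho(G)\}$.

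The main obstacle is the death-side inequality $d_G(z)\geq d_F(z)-\eps$ in step two: it uses in an essential way the hypothesis that the chain interleaving is realized by inclusions, since a general chain map need not pull back cofaces witnessing non-harmonicity in the target to cofaces witnessing non-harmonicity in the source. The remaining ingredients — block-decomposability of the extensions (already established in \Cref{subsec:extension}), the Isometry Theorem invoked as a black box, and the span-to-bar bookkeeping — are straightforward in comparison.
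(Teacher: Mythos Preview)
Your proposal follows the same route as the paper: lift $\har(F)$ and $\har(G)$ to block-decomposable $\mathbb{U}$-modules, build an inclusion-based interleaving from the hypothesized chain-level inclusions, and invoke the Isometry Theorem of~\cite{bjerkevik2021stability,botnan2018algebraic}. The only difference is bookkeeping---the paper folds the $\rho$ correction directly into the interleaving shift (defining $\Phi_{(a,b)}$ to land in $N_{(a-\eps-\rho,\,b+\eps+\rho)}$, so that the block barcodes already coincide with $\hbarc$), whereas you obtain a tight $\eps$-interleaving of span-based extensions and pay the $\max\{\rho(F),\rho(G)\}$ only at the end when converting span right-endpoints $t_{d(z)}$ to bar right-endpoints $t_{d(z)-1}$; both routes give the stated bound.
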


\begin{proof}
    
    Set $M=E(\har(F))$ and $N=E(\har(G))$.  Let $\eps = d_{I}(F,G)$. We define $\Phi_{(a,b)}: M_{(a,b)} \xrightarrow{} N_{(a-\eps-\rho(F), b+\eps+\rho(F))}$ to be the inclusion. Note that $\Phi$ is well-defined since all the cycles exist in the extension. 
    The morphism $\Psi = \{ \Psi_{(a,b)}\mid (a,b)\in \mathbb{U} \}$ is defined analogously. Since all the maps are inclusions, the morphisms $\Phi$ and $\Psi$ define an $\eps$-interleaving of $M$ and $N$. It follows that $d_I (M,N) \leq \eps+\max \{\rho(F), \rho(G) \}$.
    From the isometry theorem for block decomposable $\mathbb{U}$-indexed modules~\cite{bjerkevik2021stability,botnan2018algebraic}, we deduce that $d_B( {\mathcal{D}} (M), {\mathcal{D}} (N)) = d_I(M,N) \leq \eps +\max \{\rho(F), \rho(G) \}$, where ${\mathcal{D}}(M)$ and ${\mathcal{D}}(N)$ are the block barcodes of $M$ and $N$ respectively. Any matching of block barcodes induces a matching of the parts of the blocks on the diagonal which form the barcodes of the 1D filtrations $\har(F)$ and $\har(G)$ and vice versa (\cite[Proposition 4.2]{botnan2018algebraic}). It follows that $d_B (\hbarc(F), \hbarc(G)) = d_B( {\mathcal{D}} (M), {\mathcal{D}} (N))  \leq \eps + \max \{\rho(F), \rho(G)\}$. 
\end{proof}

\begin{corollary}
Let $\hat{f}, \hat{g} : |K| \xrightarrow{} \Rspace$ be simplex-wise linear functions, and let $\tilde{F}$ and $\tilde{G}$ be the sublevel set filtrations of $\hat{f}$ and $\hat{g}$ respectively. Then, $$d_{B}({\obarc(\tilde{F})}, {\obarc(\tilde{G}})) \leq || \hat{f}-\hat{g}||_{\infty}.$$
\end{corollary}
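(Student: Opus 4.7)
The plan is to reduce the corollary to the preceding stability theorem by discretizing the two sublevel set filtrations and then letting the discretization parameter tend to zero. Set $\eps = \|\hat{f}-\hat{g}\|_\infty$. For each sufficiently small $\delta>0$, form the simplex-wise refinements $\tilde{F}(\delta)$ and $\tilde{G}(\delta)$ as constructed in \Cref{sec:Sub-level-Set-Filtration}, noting that by definition $\obarc(\tilde{F}) = \Gamma(\obarc(\tilde{F}(\delta)))$ and analogously for $G$, provided $\delta$ is small enough.

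The first step is to exhibit an $\eps$-chain-interleaving between $\tilde{F}(\delta)$ and $\tilde{G}(\delta)$ realized by inclusions. For any simplex $\sigma \in K$, the condition $\hat{f}(\sigma)\leq t$ immediately implies $\hat{g}(\sigma)\leq t+\eps$, so the sublevel set of $\hat{f}$ at $t$ is contained in the sublevel set of $\hat{g}$ at $t+\eps$, and symmetrically. These inclusions commute with the filtration maps (themselves inclusions), yielding an $\eps$-chain-interleaving in the sense of \Cref{def:chain-interleaving}. A minor technical point is that the refinements insert dummy complexes at $\delta$-offsets, so the interleaving morphisms between $\tilde{F}(\delta)$ and $\tilde{G}(\delta)$ land a distance at most $m\delta$ off from what a naive $\eps$-shift would give; this extra slack is absorbed into the $\rho$-term or into the final limit.

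The second step is to invoke the main stability theorem to obtain
\[
d_B\bigl(\hbarc(\tilde{F}(\delta)),\,\hbarc(\tilde{G}(\delta))\bigr) \,\leq\, \eps + \max\{\rho(\tilde{F}(\delta)),\,\rho(\tilde{G}(\delta))\}.
\]
Since the only source of $\rho$-distortion is the spacing of the dummy complexes, both $\rho(\tilde{F}(\delta))$ and $\rho(\tilde{G}(\delta))$ are $O(\delta)$. The third step is to take $\delta \to 0$: by \Cref{lemma:barsnapping} each closed bar in $\hbarc(\tilde{F}(\delta))$ tracks its $\Gamma$-image (a closed-open interval in $\obarc(\tilde{F})$) up to endpoint error $O(m\delta)$, so $d_B(\hbarc(\tilde{F}(\delta)),\obarc(\tilde{F})) \to 0$ as $\delta \to 0$, and likewise for $G$. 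Combining these with the triangle inequality for the bottleneck distance yields $d_B(\obarc(\tilde{F}),\obarc(\tilde{G})) \leq \eps$.

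The main obstacle I anticipate is rigorously justifying the limit step: one must verify that $\Gamma$ is compatible with $d_B$ in the sense that matchings of the closed-bar barcodes transfer to matchings of the closed-open barcodes with only $O(\delta)$ loss, so that the bound survives passage to the limit. This should reduce to the pointwise endpoint control of \Cref{lemma:barsnapping} together with the fact that every bar of $\obarc(\tilde{F})$ arises as the $\Gamma$-image of a uniquely determined bar in $\hbarc(\tilde{F}(\delta))$ for all small enough $\delta$, and so a near-optimal bijection between $\hbarc(\tilde{F}(\delta))$ and $\hbarc(\tilde{G}(\delta))$ induces a bijection between $\obarc(\tilde{F})$ and $\obarc(\tilde{G})$ whose cost differs by at most $O(\delta)$.
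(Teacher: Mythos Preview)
Your proposal is correct and follows essentially the same route as the paper: discretize to the simplex-wise filtrations $\tilde{F}(\delta),\tilde{G}(\delta)$, observe that sublevel-set inclusions give an $(\eps+O(m\delta))$-chain-interleaving, apply the preceding theorem together with the bound $\rho=O(m\delta)$ from \Cref{lemma:barsnapping}, and let $\delta\to 0$. If anything, you are more explicit than the paper about the limit step, where the paper simply asserts that the inequality passes to the limit while you spell out the triangle-inequality argument via the $\Gamma$-image matching.
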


\begin{proof}

    
    Let $m$ be the number of simplices in $K$. Let $\eps = ||\hat{f}-\hat{g}||_\infty$. There is an $\eps$-interleaving between the filtration $\tilde{F}$ and $\tilde{G}$ defined by inclusion of sublevel sets~\cite{Cohen-SteinerEdelsbrunnerHarer2005,ChazalCohenSteinerGlisse2009}. We can choose $\delta$ such that $\eps \gg m\delta$.
    Then there is an $(\eps+m\delta)$-interleaving between $\tilde{F}(\delta)$ and $\tilde{G}(\delta)$, since $K_{t} \subset K'_{t+\eps+m\delta}$ and $K'_{t} \subset K_{t+\eps+m\delta}$, where $K_t$ and $K'_t$ are the complexes of $\tilde{F}(\delta)$ and $\tilde{G}(\delta)$ respectively. Moreover, by~\cref{lemma:barsnapping}, $\max\{\rho(F), \rho(G)\} \leq m\delta$. Therefore, for all $\delta$ such that $m\delta \ll \eps$, $d_B ( {\obarc(\tilde{F}(\delta))}, {\obarc(\tilde{G}(\delta))} )  \leq \eps+m\delta = ||f-g||_\infty+m\delta$. Since $\delta$ approaches 0 the corollary follows. 
\end{proof}

\section*{Acknowledgment}
T.\ Hou is supported by NSF fund CCF 2439255. B.\ Wang is supported by DOE DE-SC0021015, NSF IIS-2145499 and NSF DMS-2301361. 
\bibliography{refs-harmonic}
\appendix  
\section{Homology and Cohomology}
\label{sec:homology}

Let $K$ be a simplicial complex and we give the standard orientation to simplices in $K$. We write simplices as an ordered set of vertices, e.g.,~$\sigma=v_0v_1\cdots v_{p}$ for a $p$-dimensional simplex, for any integer $p \geq 0$.
Let $K_p$ denote the set of $p$-simplices of $K$ and $n_p=|K_p|$. 
The $p$-dimensional \emph{chain group} of $K$ with coefficients in $\Rspace$, denoted $C_p(K)$, is an $\Rspace$-vector space generated by $K_p$. 
By fixing an ordering of the set $K_p$, we can identify any $p$-chain $c\in C_p(K)$ with an ordered $n_p$-tuple with real entries (e.g.,~an element of $\Rspace^{n_p}$).
For each $p$, we fix an ordering for the $p$-simplices once and for all, and identify $C_p(K)$ and $\Rspace^{n_p}$. 
The standard basis of $\Rspace^{n_p}$ corresponds (under the identification) to the basis of $C_p(K)$ given by the simplices with standard orientation.

The $p$-dimensional boundary matrix $\partial_p: C_p(K) \rightarrow C_{p-1}(K)$ is defined on a simplex basis element by the formula
\[ \partial (v_0v_1 \cdots v_p)= \sum_{j=0}^{p} {(-1)^q (v_0\cdots v_{q-1}v_{q+1}\cdots v_p) }.\]
In the right hand side above, in the $q$-th term $v_q$ is dropped. The formula guarantees the crucial property of the boundary homomorphism: $\partial_{p} \partial_{p+1} = 0$. This simply means that the boundary of a simplex has no boundary. The sequence $C_p(K)$ together with the maps $\partial_p$ define the \textit{simplicial chain complex} of $K$ with real coefficients, denoted $C_{\bullet}(K)$. The group of $p$-dimensional \textit{cycles}, denoted $Z_p(K)$, is the kernel of $\partial_p$. The group of $p$-dimensional boundaries, denoted $B_p(K)$, is the image of $\partial_{p+1}$. The $p$-dimensional homology group of $K$, denoted $H_p(K)$, is the quotient group $Z_p(K)/B_p(K)$. 
As a set, this quotient is formally defined as $\{z+B_p(K) \mid  z\in Z_p(K)\}$. The operations are inherited from the chain group. In words, the homology group is obtained from $Z_p(K)$ by setting any two cycles which differ by a boundary to be equal. All these groups are $\Rspace$-vector spaces.

Consider the space $\Rspace^{n_p}$. The cycle group $Z_p(K) \subset C_p(K) = \Rspace^{n_p}$ is a subspace, that is, a hyperplane passing through the origin. Similarly, the boundary group $B_p(K)$ is a subspace, and is included in $Z_p(K)$. The homology group is the set of parallels of $B_p(K)$ inside $Z_p(K)$. Each such parallel hyperplane differs from $B_p(K)$ by a translation given by some cycle $z$. These parallel hyperplanes partition $Z_p(K)$. The homology group is then isomorphic to the subspace perpendicular to $B_p(K)$ inside $Z_p(K)$. 
The dimension of the $p$-dimensional homology group is called the $p$-th $\Rspace$-Betti number, denoted $\beta_p(K)$. 

Simplicial \textit{cohomology} with coefficients in $\Rspace$ is usually  defined by the process of dualizing. This means that we replace an $p$-chain by a linear functional $C_p(K) \rightarrow \Rspace$, called a $p$-dimensional \textit{cochain}. 
The set of all such linear functionals is a vector space isomorphic to $C_p(K)$, called the cochain group, denoted $C^p(K)$, which is a dual vector space of $C_p(K)$.  
For the purposes of defining harmonic chains, we must fix an isomorphism. 
We take the isomorphism that sends each standard basis element of $\Rspace^{n_p}$, corresponding to $\sigma$, to a functional that assigns 1 to $\sigma$ and 0 to other basis elements, denoted $\hat{\sigma} \in C^p(K)$. Any cochain $\gamma \in C^p(K)$ can be written as a linear combination of the $\hat{\sigma}$. Therefore, it is also a vector in $\Rspace^{n_p}$. The fixed isomorphism allows us to identify $C^p$ with $\Rspace^{n_p}$, and hence to $C_p$. Therefore, any vector in $\Rspace^{n_p}$ is at the same time a chain and a cochain.

The \textit{coboundary matrix} $\delta^p: C^p(K)\rightarrow C^{p+1}(K)$, in its matrix representation, is the transpose of $\partial_{p+1}$, $\delta^p = \partial_{p+1}^\top$. It follows that $\delta_{p+1}\delta_p=0$, and we can form a \textit{cochain complex} $C^{\bullet}(K)$. The group of \textit{$p$-cocycles}, denoted $Z^p$ is the kernel of $\delta^p$. The group of \textit{$p$-coboundary} is the image of $\delta_{p-1}$. The $p$-dimensional \textit{cohomology group}, denoted $H^p(K)$, is defined as $H^p(K) = Z^p(K)/B^p(K)$. All of these groups are again vector subspaces of $C^p(K)$ and thus of $\Rspace^{n_p}$.
It is a standard fact that homology and cohomology groups with real coefficients are isomorphic.
\newcommand{\img}{\mathrm{img}}

\section{Justification for the Definition of Harmonic Representatives}\label{sec:rep-just}

Consider a zigzag module $\Mcal:V_0\lrarrowsp{g_0}V_1\lrarrowsp{g_1}\cdots\lrarrowsp{g_{k-1}}V_k$
which is \emph{elementary} in the sense that each $g_i$
is either an isomorphism, an injection with corank $1$,
or a surjection with nullity $1$.
Representatives for intervals in $\barc(\Mcal)$
are defined as follows:

\begin{definition}[Zigzag representatives \cite{maria2015zigzag}]
\label{dfn:rep-cls-ori}
A \emph{representative} for $[b,d]\in\barc(\Mcal)$
is a sequence $\{v_i\in V_i\mid i\in[b,d]\}$
such that  for every $b\leq i<d$, 
either $g_i(v_i)=v_{i+1}$ or
$g_i(v_{i+1})=v_{i}$
based on the direction of $g_i$.
Furthermore, we have:
\begin{description}
    \item[Birth condition:] 
    If $g_{b-1}:V_{b-1}\to V_{b}$ is forward
    (thus being injective),
    $v_b\not\in\img(g_{b-1})$;
    if $g_{b-1}:V_{b-1}\leftarrow V_{b}$ is backward
    (thus being  surjective),
    then $v_b$ is the non-zero element in $\ker(g_{b-1})$.
    
    \item[Death condition:]
    If $d<k$ and
    $g_{d}:V_{d}\leftarrow V_{d+1}$ is backward
    (thus being injective),
    $v_d\not\in\img(g_{d})$;
    if $d<i$
    and $g_{d}:V_{d}\to V_{d+1}$ is forward
    (thus being surjective),
    then $v_d$ is the non-zero element in $\ker(g_{d})$.
\end{description}
\end{definition}

\begin{proof}[Proof of \Cref{prop:single-rep}]
Let $[b,d]$ be an interval in $\hbarc_*(\filt)=\barc(\har_*(\filt))$
and let $\{z_i\in\har_*(K_i)\mid i\in[b,d]\}$
be a sequence of representatives for $[b,d]$
as in \Cref{dfn:rep-cls-ori}.
For each $i<d$,
we have that $z_i$
maps to $z_{i+1}$ 
or $z_{i+1}$
maps to $z_{i}$ 
by the inclusion,
and hence $z_i=z_{i+1}$.
Therefore, the representative
$\{z_i\in\har_*(K_i)\mid i\in[b,d]\}$ indeed
consists of a single cycle.
\end{proof}

To see that \Cref{dfn:rep} is an adaption of \Cref{dfn:rep-cls-ori}
to the module $\har_p(\filt)$, we only need to check the birth
and death conditions.
Let $[b,d]$ be an interval in $\hbarc_p(\filt)$
with a representative $z$.
For the birth condition, since inclusion maps have trivial kernels,
we have that $z$ is not in the image of the inclusion $\har_p(K_{b-1})\incto \har_p(K_b)$
by \Cref{dfn:rep-cls-ori}.
This means that 
$z\in\har_p(K_{b})\setminus\har_p(K_{b-1})$ 
which is as stated in \Cref{dfn:rep}.
The death condition can also be similarly verified.

\section{Justification for $\Harmat^p$ Forming a Basis for $\har_p(K_i)$}\label{sec:harmat-basis}

We first have the following proposition which follows directly from definitions:
\begin{proposition}\label{prop:har-rep-basis}
For a $K_i$ in $\filt$,
let $\{[b_j,d_j]\mid j\in J\}$ 
be all the intervals in $\hbarc_p(\filt)$ containing $i$,
where each $[b_j,d_j]$ has a harmonic representative $z_j$.
Then $\{z_j\mid j\in J\}$ forms a basis for $\har_p(K_i)$.
\end{proposition}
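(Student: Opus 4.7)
The plan is to invoke the structure theorem for zigzag modules to obtain a canonical basis of $\har_p(K_i)$ in which the given representatives $z_j$ form a triangular, and hence invertible, matrix.

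First, fix a decomposition $\har_p(\filt)=\bigoplus_{k} M_k$, where each summand $M_k$ is an interval module supported on some $[b_k, d_k]\in\hbarc_p(\filt)$. Because every connecting map in $\har_p(\filt)$ is a set-theoretic inclusion inside $\Chn_p(K)$ (\Cref{thm:harmonic-inclusion}), the generator of each $M_k$ is realized by a single chain $\tilde{z}_k\in\Chn_p(K)$ that lies in $\har_p(K_t)$ for every $t\in[b_k, d_k]$. Evaluating the decomposition at time $i$ gives $\har_p(K_i)=\bigoplus_{k:\, i\in[b_k,d_k]}(M_k)_i$, so $\{\tilde{z}_k : k\in J\}$ is a basis of $\har_p(K_i)$ and $\dim\har_p(K_i)=|J|$. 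It therefore suffices to show that the $|J|$ given cycles $z_j$ are linearly independent.

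Expand each $z_j$ in the canonical basis as $z_j = \sum_{k} c_{jk}\, \tilde{z}_k$. Since $z_j$ must lie in $\har_p(K_t)$ for all $t\in[b_j, d_j]$, and $\bigcap_{t\in[b_j,d_j]} \har_p(K_t)$ equals the span of those $\tilde{z}_k$ with $[b_k, d_k]\supseteq[b_j, d_j]$, only such $k$ can give a nonzero $c_{jk}$. Order $J$ by increasing $b_j$. Because $\filt$ is simplex-wise, each step changes $\dim \har_p$ by exactly $\pm 1$, so the birth indices are pairwise distinct. Any $k$ appearing strictly after $j$ in this ordering then has $b_k > b_j$, which is incompatible with $[b_k,d_k]\supseteq[b_j,d_j]$, forcing $c_{jk}=0$. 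Hence $C=(c_{jk})$ is lower triangular.

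The main hurdle is to show $c_{jj}\neq 0$. Suppose for contradiction that $c_{jj}=0$; then every $k$ contributing to $z_j$ satisfies $[b_k, d_k] \supsetneq [b_j, d_j]$. Strict containment forces either $b_k < b_j$, or $b_k = b_j$ with $d_k > d_j$; but the latter would yield two intervals of $\hbarc_p(\filt)$ sharing the birth index $b_j$, which the uniqueness of birth indices forbids. Therefore every contributing $k$ has $b_k < b_j$, making every $\tilde{z}_k$ in the expansion of $z_j$ alive at time $b_j - 1$. But then $z_j \in \har_p(K_{b_j - 1})$, contradicting the birth condition of \Cref{dfn:rep}. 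Hence $c_{jj}\neq 0$ for every $j$, the triangular matrix $C$ is invertible, and $\{z_j\}_{j\in J}$ is a basis of $\har_p(K_i)$.
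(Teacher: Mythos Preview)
Your proof is correct. The paper itself offers no argument beyond asserting that the proposition ``follows directly from definitions,'' so your explicit triangular-matrix argument via a fixed interval decomposition supplies detail the paper leaves to the reader. Two small points worth tightening: first, the identity $\bigcap_{t\in[b_j,d_j]}\har_p(K_t)=\mathrm{span}\{\tilde z_k:[b_k,d_k]\supseteq[b_j,d_j]\}$ deserves a sentence of justification---it holds because the interval decomposition is compatible with the inclusion maps (equivalently, the intersection is the inverse limit of $\har_p(\filt)$ over $[b_j,d_j]$, and finite direct sums commute with limits), but it is not literally immediate from the decomposition at the single index $i$. Second, when $c_{jj}=0$ the containment $[b_k,d_k]\supseteq[b_j,d_j]$ for contributing $k$ is not strict a priori; however, equality together with $k\neq j$ already forces two intervals with the same birth index, so your conclusion $b_k<b_j$ follows regardless.
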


Define a \emph{prefix} $\filt_i$ of $\filt$ as the filtration
$\filt_i: 
K_0 \incto K_1 \incto
\cdots \incto K_i.$
It is then not hard to see the following fact concerning \Cref{alg:frame}:
\begin{proposition}\label{prop:iter-unpair-to-i}
Before each iteration $i$
processing $K_{i}\incto K_{i+1}$,
$\{[b,i]\mid b\in U^p\}$
is the set of all intervals containing $i$
in $\hbarc_p(\filt_i)$
and each partial representative for $b$
maintained by the algorithm
is indeed a harmonic representative for $[b,i]\in \hbarc_p(\filt_i)$.
\end{proposition}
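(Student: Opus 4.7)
The plan is to derive Proposition~\ref{prop:iter-unpair-to-i} as a direct consequence of the correctness of Algorithm~\ref{alg:frame} together with the inductive invariant already established in the proof of that theorem. The key observation is that Algorithm~\ref{alg:frame} processes insertions one at a time, so its state \emph{before} iteration $i$ on the full filtration $\filt$ is identical to the state it would be in just before executing the final ``for $b$ in $U^p$'' loop if it were instead run on the prefix filtration $\filt_i$. In particular, the sets $U^p$ and their partial representatives coincide between the two runs.

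First I would appeal to the correctness theorem applied to $\filt_i$: the algorithm, when executed on $\filt_i$, outputs exactly $\hbarc(\filt_i)$. The final loop of the algorithm turns every remaining $b \in U^p$ into an interval $[b,i] \in \hbarc_p(\filt_i)$ (here the filtration length $m$ equals $i$). Thus each $b \in U^p$ corresponds to an interval $[b,i]$ that contains $i$. Conversely, any interval $[b,d] \in \hbarc_p(\filt_i)$ containing $i$ must satisfy $d=i$, since $i$ is the last index of the prefix. Such an interval cannot have been closed during iterations $0,\ldots,i-1$ (otherwise $d<i$ by the pairing rule of the algorithm), hence it must come from an unpaired index, i.e., $b \in U^p$. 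This yields the claimed bijection between $U^p$ and the intervals of $\hbarc_p(\filt_i)$ containing $i$.

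For the second half of the proposition, the inductive invariant established inside the proof of the correctness theorem already asserts that before processing $\fsimp_i$, the cycle $z$ maintained for every $b \in U^p$ is a partial $p$-representative for $[b,i]$, meaning that $z \in \har_p(K_j)$ for all $j \in [b,i]$ and $z \in \har_p(K_b) \setminus \har_p(K_{b-1})$. Since $i$ is the final index of $\filt_i$, there is no death condition to verify, so this partial representative is automatically a full harmonic representative for the interval $[b,i] \in \hbarc_p(\filt_i)$ in the sense of Definition~\ref{dfn:rep}.

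The main work has already been carried out inside the proof of the correctness theorem, namely the verification that the update $z_j \mapsto z_j - (\aG_j/\aG_*) z_{j^*}$ performed when $\fsimp_i$ is negative preserves both harmonicity at the next step and the birth condition for $z_j$; this uses the crucial pairing rule $b_{j^*} < b_j$. Given that invariant, the present proposition requires no new computation and amounts to packaging the observations above with a single appeal to the correctness theorem applied to the prefix $\filt_i$.
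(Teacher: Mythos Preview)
Your proposal is correct and is essentially the argument the paper has in mind; the paper itself offers no proof beyond the remark ``It is then not hard to see,'' and your approach---applying the correctness theorem to the prefix $\filt_i$ and invoking the inductive invariant already proved there---is precisely the natural way to fill this in. There is no circularity, since the correctness theorem is established independently of Proposition~\ref{prop:iter-unpair-to-i}.
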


Combining \Cref{prop:har-rep-basis,prop:iter-unpair-to-i},
we have that columns in $\Harmat^p$ form a basis for $\har_p(K_i)$.
\end{document}